\documentclass[11pt]{article}
\usepackage{times}
\usepackage{paralist}
\usepackage{amsfonts}
\usepackage{amsmath}
\usepackage{amsthm}
\usepackage{latexsym}
\usepackage{url,hyperref}
\usepackage{picins,boxedminipage}

\setlength{\textwidth}{6.7in}
\setlength{\evensidemargin}{0.5in}
\setlength{\oddsidemargin}{0.1in}
\setlength{\topmargin}{0in}
\setlength{\textheight}{9.3in}
\advance \topmargin by -0.7in

\newtheorem{theorem}{Theorem}[section]
\newtheorem{lemma}[theorem]{Lemma}
\newtheorem{proposition}[theorem]{Proposition}
\newtheorem{corollary}[theorem]{Corollary}

\newcommand{\partdiff}[2]{\frac{\partial {#1}}{\partial {#2}}}

\newcommand{\mixdiff}[3]{\frac{\partial^2 {#1}}{{\partial {#2}}{\partial {#3}}}}

\newcommand{\eps}{\varepsilon}
\newcommand{\E}{\mbox{\bf E}}

\newcommand{\cI}{{\cal I}}
\newcommand{\cM}{{\cal M}}
\newcommand{\cF}{{\cal F}}

\newcommand{\cB}{{\cal B}}

\newcommand{\RR}{{\mathbb R}}
\newcommand{\ZZ}{{\mathbb Z}}

\def\b0{{\bf 0}}
\def\b1{{\bf 1}}
\def\be{{\bf e}}
\def\bX{{\bf X}}

\begin{document}


\title{Dependent Randomized Rounding for Matroid Polytopes \\
and Applications
}

\author{Chandra Chekuri\thanks{Dept. of Computer Science,
Univ.\ of Illinois, Urbana, IL 61801. Partially supported
by NSF grant CCF-0728782. E-mail: {\tt chekuri@cs.illinois.edu}}
\and
Jan Vondr\'ak\thanks{IBM Almaden Research Center, San Jose, CA 95120.
E-mail: {\tt jvondrak@us.ibm.com}}
\and
Rico Zenklusen\thanks{Institute for Operations Research, ETH Zurich.
E-mail: {\tt rico.zenklusen@ifor.math.ethz.ch}}}


\begin{titlepage}
\maketitle
\def\thepage {} 
\thispagestyle{empty}

\begin{abstract}
  Motivated by several applications, we consider the problem of
  randomly rounding a fractional solution in a matroid (base) polytope to
  an integral one. We consider the {\em pipage rounding} technique
  \cite{CCPV07,CCPV09,Vondrak09} and also present a new technique,
  {\em randomized swap rounding}.
  Our main technical results are concentration bounds
  for functions of random variables arising from these rounding techniques.
  We prove Chernoff-type concentration bounds for linear functions
  of random variables arising from both techniques, and also a lower-tail
  exponential bound for monotone submodular functions of variables arising
  from randomized swap rounding.

  The following are examples of our applications.
  \begin{itemize}
  \item We give a $(1-1/e-\eps)$-approximation algorithm for the
    problem of maximizing a monotone submodular function subject to 1
    matroid and $k$ linear constraints, for any constant $k \geq 1$
    and $\eps>0$.
    We also give the same
    result for a super-constant number $k$ of "loose" linear
    constraints, where the right-hand side dominates the matrix
    entries by an $\Omega(\eps^{-2} \log k)$ factor.
  \item We present a result on minimax packing problems that involve a
    matroid base constraint.  We give an $O(\log m / \log \log m)$-approximation
    for the general problem $\min \{ \lambda: \exists x \in \{0,1\}^N,x \in B(\cM),$
    $Ax \leq \lambda b \}$ where $m$ is the number of packing constraints.
    Examples include the
    low-congestion multi-path routing problem \cite{S01} and
    spanning-tree problems with capacity constraints on cuts
    \cite{BiloGRS04,HarPeled09}.
  \item We generalize the continuous greedy algorithm
    \cite{Vondrak08,CCPV09} to problems involving multiple submodular
    functions, and use it to find a $(1-1/e-\eps)$-approximate pareto
    set for the problem of maximizing a constant number of monotone
    submodular functions subject to a matroid constraint.  An example
    is the Submodular Welfare Problem where we are looking for an
    approximate pareto set with respect to individual players'
    utilities.
  \end{itemize}
\end{abstract}

\end{titlepage}
\pagenumbering{arabic}

\section{Introduction}
\label{sec:intro}
Randomized rounding is a fundamental technique introduced by Raghavan
and Thompson \cite{RaghavanT87} in order to round a fractional
solution of an LP into an integral solution. Numerous applications and
variants have since been explored and it is a standard technique in
the design of approximation algorithms and related areas.  The
original technique from \cite{RaghavanT87} (and several subsequent
papers) relies on independent rounding of the variables which allows
one to use Chernoff-Hoeffding concentration bounds for linear
functions of the variables; these bounds are critical for several
applications in packing and covering problems. However, there are many
situations in which independent rounding is not feasible due to the
presence of constraints that cannot be violated by the rounded
solution.
%
%
Various techniques are used to handle such scenarios. To name just a
few: alteration of solutions obtained by independent rounding, careful
derandomization or constructive methods when probability of a feasible
solution is non-zero but small (for example when using the Lov\'asz
Local Lemma), and various forms of correlated or dependent randomized
rounding schemes.  These methods are typically successful when one is
interested in preserving the expected value of the sum of several
random variables; the rounding schemes approximately preserve the
expected value of each random variable and then one relies on
linearity of expectation for the sum.  There are, however,
applications where one cannot use independent rounding and
nevertheless one needs concentration bounds and/or the ability to
handle non-linear objective functions such as convex or submodular
functions of the variables; the work of Srinivasan \cite{S01} and
others \cite{GKPS06,KMPS09} highlights some of these applications.
Our focus in this paper is on such schemes. In particular we consider
the problem of rounding a point in a {\em matroid polytope}
to a vertex.  We compare the existing approaches and
propose a new rounding scheme which is simple and has multiple
applications.

\medskip
\noindent
{\bf Background:}
Matroid polytopes, whose study was initiated by Edmonds in the 70's,
form one of the most important classes of polytopes associated with
combinatorial optimization problems.  (For a definition, see
Section~\ref{sec:prelims}.) Even though the full description of a
matroid polytope is exponentially large, matroid polytopes can be
optimized over, separated over, and they have strong integrality
properties such as {\em total dual integrality}. As a consequence, the
basic solution of a linear optimization problem over a matroid
polytope is always integral and no rounding is necessary.

More recently, various applications emerged where a matroid constraint
appears with additional constraints and/or the objective function is
non-linear.  In such cases, the issue of rounding a fractional solution
in the matroid polytope re-appears as a non-trivial question. One such
application is the submodular welfare problem \cite{FNW78,LLN06},
which can be formulated as a submodular maximization problem subject
to a partition matroid constraint. The rounding technique that turned
out to be useful in this context is {\em pipage rounding}
\cite{CCPV07}.

Pipage rounding was introduced by Ageev and Sviridenko \cite{AS04},
who used it for rounding fractional solutions in the bipartite
matching polytope. They used a linear program to obtain a fractional
solution to a certain problem, but the rounding procedure was based on
an auxiliary (non-linear) objective. The auxiliary objective $F(x)$
was defined in such a way that $F(x)$ would always increase or stay
constant throughout the rounding procedure. A comparison between
$F(x)$ and the original objective yields an approximation
guarantee. Calinescu et al.~\cite{CCPV07} adapted the pipage rounding
technique to problems involving a matroid constraint rather than
bipartite matchings.  Moreover, they showed that the necessary
convexity properties are satisfied whenever the auxiliary function
$F(x)$ is a {\em multilinear extension of a submodular set function} $f$.
This turned out to be crucial for further developments on submodular
maximization problems - in particular an optimal
$(1-1/e)$-approximation for maximizing a monotone submodular function
subject to a matroid constraint \cite{Vondrak08,CCPV09}, and a
$(1-1/e-\eps)$-approximation for maximizing a monotone submodular
function subject to a constant number of linear constraints
\cite{KST09}.  As one of our applications, we consider a common
generalization of these two problems.

Srinivasan \cite{S01}, and building on his work Gandhi et
al.~\cite{GKPS06}, considered dependent randomized rounding for
points in the bipartite matching polytope (and more
generally the assignment polytope); their technique can be viewed as a randomized
(and oblivious) version of pipage rounding.  The motivation for this
randomized scheme came from a different set of applications (see
\cite{S01}).  The results in \cite{S01,GKPS06} showed {\em negative
  correlation} properties for their rounding scheme which implied
concentration bounds (via \cite{PS97}) that were then useful in
dealing with additional constraints. We make some observations
regarding the results and applications in
\cite{AS04,S01,GKPS06}. Although the schemes round a point
in the assignment polytope, each constraint and objective function is
restricted to depend on a subset of the edges incident to some vertex
in the underlying bipartite graph. Further, several of the
applications in \cite{AS04,S01,GKPS06} can be naturally modeled via a
matroid constraint instead of using a bipartite graph with the above
mentioned restriction; in fact the simple partition matroid suffices.

The pipage rounding technique for matroids, as presented in
\cite{CCPV07}, is a deterministic procedure.
However, it can be randomized similarly to Srinivasan's work
\cite{S01}, and this is the variant presented in \cite{CCPV09}. This
variant starts with a fractional solution in the matroid base
polytope, $y \in B(\cM)$, and produces a random base $B \in \cM$ such
that $\E[f(B)] \geq F(y)$; here $F$ is the multilinear extension of
the submodular function $f$. A further rounding stage is needed in
case the starting point is inside the matroid polytope $P(\cM)$ rather
than the matroid base polytope $B(\cM)$; pipage rounding has been
extended to this case in \cite{Vondrak09}. In the analysis of
\cite{CCPV09,Vondrak09}, the approximation guarantees are only in
expectation. Stronger guarantees could be obtained and additional
applications would arise if we could prove {\em concentration bounds}
on the value of linear/submodular functions under such a rounding
procedure. This is the focus of this paper.

Very recently, another application has emerged where rounding in a
matroid polytope plays an essential role. Asadpour et
al.~\cite{AGMGS10} present a new approach to the Asymmetric Traveling
Salesman problem achieving an $O(\log n / \log \log n)$-approximation,
improving upon the long-standing $O(\log n)$-approximation. A crucial
step in the algorithm is a rounding procedure, which given a
fractional solution in the spanning tree polytope produces a spanning
tree satisfying certain additional constraints. The authors of
\cite{AGMGS10} use the technique of {\em maximum entropy sampling}
which gives negative correlation properties and Chernoff-type
concentration bounds for any linear function on the edges of the
graph.  Since spanning trees are bases in the {\em graphic matroid}
for any graph, this rounding procedure also falls in the framework of
randomized rounding in the matroid polytope.  However, it is not clear
whether the technique of \cite{AGMGS10} can be generalized to any
matroid or whether it could be used in applications with a submodular
objective function.

\subsection{Our work}
In this paper we study the problem of randomly rounding a
point in a matroid polytope to a vertex of the polytope.\footnote{Our results
extend easily to the case of rounding a point in the polytope of an integer
valued {\em polymatroid}. Additional applications may follow from this.}
We consider the technique of \emph{randomized pipage rounding} and also
introduce a new rounding procedure called \emph{randomized swap rounding}.
Given a starting point $x \in P(\cM)$, the procedure produces a random
independent set $S \in \cI$ such that $\Pr[i \in S] = x_i$ for each
element $i$. Our main technical results are {\em concentration bounds} for
linear {\em and} submodular functions $f(S)$ under this new rounding. We
demonstrate the usefulness of these concentration bounds via several
applications.

The randomized swap rounding procedure bears some similarity to pipage
rounding and can be used as a replacement for pipage rounding in
\cite{CCPV09,Vondrak09}.  It can be also used as a replacement for
maximum entropy sampling in \cite{AGMGS10}.  However, it has several
advantages over previous rounding procedures.  It is easy to describe
and implement, and it is very efficient.  Moreover, thanks to the
simplicity of randomized swap rounding, we are able to derive results
that are not known for previous techniques. One example is the tail
estimate for submodular functions,
Theorem~\ref{thm:swap-rounding-chernoff}. On the other hand, our
concentration bound for linear functions
(Corollary~\ref{cor:pipage-chernoff}) holds for a more general class
of rounding techniques including pipage rounding (see also
Lemma~\ref{lem:negCorr}).

Randomized swap rounding starts from an arbitrary representation of a
starting point $x \in P(\cM)$ as a convex combination of incidence
vectors of independent sets.  (This representation can be obtained by
standard techniques and in some applications it is explicitly
available.)
%
%
Once a convex representation of the starting point is obtained, the
running time of randomized swap rounding is bounded by $O(n d^2)$
calls to the membership oracle of the matroid, where $d$ is the rank
of the matroid and $n$ is the size of the ground set.
In comparison, pipage rounding performs $O(n^2)$
iterations each of which requires an expensive call to submodular
function minimization (see \cite{CCPV09}).  Maximum entropy sampling for
spanning trees in a graph $G=(V,E)$ is even more complicated; \cite{AGMGS10} does not
provide an explicit running time, but it states that the procedure
involves $O(|E|^2 |V| \log |V|)$ iterations, where in each iteration one needs to
compute a determinant (from Kirchhoff's matrix theorem) for each
edge. Also, maximum entropy sampling preserves the marginal
probabilities $\Pr[i \in S] = x_i$ only approximately, and the running
time depends on the desired accuracy.

First, we show that randomized swap rounding as well as pipage
rounding have the property that the indicator variables $X_i = [i \in
S]$ have expectations exactly $x_i$, and are {\em negatively
  correlated}.
\begin{theorem}
\label{thm:neg-correl}
Let $(x_1,\ldots,x_n) \in P(\cM)$ be a fractional solution in the
matroid polytope and $(X_1, \ldots, X_n) \in \{0,1\}^n$ an integral
solution obtained using either randomized swap rounding or randomized
pipage rounding.  Then $\E[X_i] = x_i$, and for any $T \subseteq [n]$,
\begin{inparaenum}[(i)]
\item $\E[\prod_{i \in T} X_i] \leq \prod_{i \in T} x_i$,	
\item $\E[\prod_{i \in T} (1-X_i)] \leq \prod_{i \in T} (1-x_i)$.
\end{inparaenum}
\end{theorem}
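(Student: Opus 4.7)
The plan is to view both rounding schemes as martingales whose updates consist of elementary two-coordinate swaps, then show that a single swap weakly decreases $\E[\prod_{k \in T} y_k]$ and $\E[\prod_{k \in T}(1-y_k)]$ for every $T$, and finally iterate along the sequence of swaps via the tower property. A single elementary swap replaces the current fractional point $y$ with a random $y'$ that differs in only two coordinates $i,j$: with probability $p$, $(y_i,y_j) \to (y_i+\delta_1,\,y_j-\delta_1)$, and with probability $1-p$, $(y_i,y_j) \to (y_i-\delta_2,\,y_j+\delta_2)$, where $p\,\delta_1 = (1-p)\,\delta_2$. This parameter choice forces $\E[y'_i\mid y]=y_i$ and $\E[y'_j\mid y]=y_j$, which already yields $\E[X_i]=x_i$. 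For randomized swap rounding, $i,j$ arise from an exchange pair $e \in B_1\setminus B_2$, $f \in B_2\setminus B_1$ when merging two independent sets of weights $\beta_1,\beta_2$; for randomized pipage rounding, $i,j$ are the two coordinates moved along an extreme ray inside a tight face of $P(\cM)$. In both cases the step sizes and probabilities are chosen precisely so that the martingale property holds coordinate-wise.

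Given this reduction, fix $T \subseteq [n]$ and examine one swap. If $|T \cap \{i,j\}| \le 1$, every coordinate in $T$ is either untouched or has unchanged conditional expectation, so $\E[\prod_{k \in T} y'_k \mid y] = \prod_{k \in T} y_k$. In the remaining case $\{i,j\} \subseteq T$, a direct expansion gives
\begin{equation*}
\E[y'_i\, y'_j \mid y] \;=\; p(y_i+\delta_1)(y_j-\delta_1) + (1-p)(y_i-\delta_2)(y_j+\delta_2) \;=\; y_i y_j \;-\; p\,\delta_1^2 \;-\; (1-p)\,\delta_2^2,
\end{equation*}
where the linear terms in $\delta_1,\delta_2$ cancel thanks to $p\,\delta_1=(1-p)\,\delta_2$. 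Multiplying by the constant factor $\prod_{k \in T\setminus\{i,j\}} y_k$, we conclude $\E[\prod_{k \in T} y'_k \mid y] \le \prod_{k \in T} y_k$ in every case. Iterating this one-step inequality over all swaps of the procedure and applying the tower property yields (i), namely $\E[\prod_{i \in T} X_i] \le \prod_{i \in T} x_i$.

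The bound (ii) follows from the same calculation applied to the complements. Since $(1-y'_i)(1-y'_j) = 1 - y'_i - y'_j + y'_i y'_j$, taking conditional expectations gives
\begin{equation*}
\E[(1-y'_i)(1-y'_j)\mid y] \;=\; 1 - y_i - y_j + \E[y'_i y'_j \mid y] \;\le\; 1 - y_i - y_j + y_i y_j \;=\; (1-y_i)(1-y_j),
\end{equation*}
and the case $|T \cap \{i,j\}| \le 1$ is trivial as before. The same inductive argument over the swap sequence then delivers the lower-tail bound.

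The main obstacle is not the computation above, which is elementary, but identifying both rounding schemes with the common template of an elementary swap satisfying $p\delta_1=(1-p)\delta_2$. For randomized swap rounding this is built into the merge subroutine, since the two bases in a merge always admit a valid exchange pair by the strong base-exchange property of matroids. For randomized pipage rounding one must check that, within any tight face of the matroid polytope, the two-coordinate update preserves feasibility and admits a unique choice of probabilities making both affected coordinates martingales; this is the content of the original pipage construction and can be cited directly from \cite{CCPV09,Vondrak09}.
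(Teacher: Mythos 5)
Your proof is correct and follows essentially the same strategy as the paper: condition on the history, show that one elementary two-coordinate swap weakly decreases $\E[\prod_{k\in T} y_k]$, apply the same bound to the complementary variables for (ii), and iterate via the tower property. The only real difference is in the one-step inequality: the paper abstracts to any sum-preserving martingale step and uses the polarization identity $XY = \tfrac14\big((X+Y)^2 - (X-Y)^2\big)$ together with Jensen on the difference, whereas you expand directly for the explicit two-branch update and observe that the linear terms cancel under $p\delta_1=(1-p)\delta_2$, leaving the conditional covariance $-p\delta_1^2-(1-p)\delta_2^2 \le 0$. These are the same negative-covariance computation in two guises; your more concrete version is valid here because both swap rounding and pipage rounding are in fact two-branch at every elementary step, though the paper's formulation (Lemma~\ref{lem:negCorr}) is stated so as to apply to an arbitrary conditional distribution satisfying the three structural conditions.
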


This yields Chernoff-type concentration bounds for any linear function
of $X_1, \ldots, X_n$, as proved by Panconesi and Srinivasan
\cite{PS97} (see also Theorem 3.1 in \cite{GKPS06}). Together
with Theorem~\ref{thm:neg-correl} we obtain:

\begin{corollary}
\label{cor:pipage-chernoff}
Let $a_i \in [0,1]$ and $X = \sum a_i X_i$, where $(X_1,\ldots,X_n)$
are obtained by either randomized swap rounding or randomized pipage
rounding from a starting point $(x_1,\ldots,x_n) \in P(\cM)$.
\begin{itemize}
\item If $\delta \geq 0$ and $\mu \geq \E[X] = \sum a_i x_i$, then
$ \Pr[X \geq (1+\delta) \mu] \leq
\left( \frac{e^\delta}{(1+\delta)^{1+\delta}} \right)^{\mu};$ \\
for $\delta \in [0,1]$, the bound can be simplified to
$ \Pr[X \geq (1+\delta) \mu] \leq e^{-\mu \delta^2 / 3}$.
\item If $\delta \in [0,1]$, and $\mu \leq \E[X] = \sum a_i x_i$, then
$ \Pr[X \leq (1-\delta) \mu] \leq e^{-\mu \delta^2 / 2}.$
\end{itemize}
\end{corollary}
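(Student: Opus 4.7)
The plan is to combine Theorem~\ref{thm:neg-correl} with the standard Chernoff-Hoeffding moment-generating-function argument. This is precisely the setting of Panconesi and Srinivasan \cite{PS97}: a $\{0,1\}$-valued random vector whose monomial expectations are dominated by those of independent Bernoullis, both in $X_i$ and in $1-X_i$, satisfies the same MGF inequalities as the independent case, and hence the same Chernoff tail bounds for any non-negatively weighted sum with $a_i \in [0,1]$. I would therefore just verify the MGF step explicitly and then invoke the textbook optimization of the MGF parameter to read off both stated forms.

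For the upper tail I pick $t > 0$ and use the identity $e^{t a_i X_i} = 1 + (e^{t a_i} - 1) X_i$, valid since $X_i \in \{0,1\}$. Expanding the product $\prod_i e^{t a_i X_i}$ gives a non-negative linear combination of monomials $\prod_{i \in T} X_i$, and applying property (i) of Theorem~\ref{thm:neg-correl} term by term yields
\[ \E\bigl[e^{tX}\bigr] \;\leq\; \prod_i \bigl(1 + (e^{t a_i} - 1) x_i\bigr), \]
which is exactly the MGF produced by independent Bernoullis with means $x_i$. Markov's inequality followed by the standard choice $t = \ln(1+\delta)$, together with the convexity bound $e^{t a_i} \leq 1 + a_i(e^t - 1)$ used to absorb the weights $a_i \in [0,1]$ into the mean $\mu = \sum a_i x_i$, then delivers the claimed bound. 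The simplification for $\delta \in [0,1]$ is the elementary inequality $(1+\delta)\ln(1+\delta) - \delta \geq \delta^2/3$.

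For the lower tail the same idea works with $t < 0$, except that I rewrite $e^{t a_i X_i} = e^{t a_i} + (1 - e^{t a_i})(1 - X_i)$, whose coefficient $1 - e^{t a_i}$ is non-negative because $t < 0$. Expansion now produces a non-negative combination of monomials in $1 - X_i$, and property (ii) of Theorem~\ref{thm:neg-correl} applied term by term again reduces $\E[e^{tX}]$ to the MGF of the corresponding independent Bernoullis. The textbook lower-tail Chernoff computation, together with the inequality $(1-\delta)\ln(1-\delta) + \delta \geq \delta^2/2$, then gives $e^{-\mu \delta^2 / 2}$. No genuinely new obstacle arises relative to the independent case: negative correlation enters only as the one-sided MGF inequality, and the substantive work for this corollary has already been done in Theorem~\ref{thm:neg-correl}.
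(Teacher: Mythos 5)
Your proposal is correct and follows exactly the route the paper takes: the paper derives the corollary from Theorem~\ref{thm:neg-correl} by invoking Panconesi--Srinivasan \cite{PS97} (and \cite{GKPS06}, Theorem 3.1) as a black box, and what you have done is to unpack that citation into an explicit moment-generating-function argument. Both the upper-tail decomposition $e^{t a_i X_i} = 1 + (e^{t a_i}-1)X_i$ using property (i), and the lower-tail decomposition $e^{t a_i X_i} = e^{t a_i} + (1-e^{t a_i})(1-X_i)$ using property (ii), are precisely the mechanism by which negative correlation replaces independence in the Chernoff derivation, and the remainder (the convexity bound $e^{t a_i} \le 1 + a_i(e^t-1)$ to handle the weights, the choice $e^t = 1\pm\delta$, and the elementary inequalities on $(1\pm\delta)\ln(1\pm\delta)$) is the textbook optimization. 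No gap.
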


In particular, these bounds hold for $X = \sum_{i \in S} X_i$ where
$S$ is an arbitrary subset of the variables.  We remark that in
contrast, when randomized pipage rounding is performed on bipartite
graphs, negative correlation holds only for subsets of edges incident
to a fixed vertex \cite{GKPS06}.

More generally, we consider concentration properties for a monotone
submodular function $f(R)$, where $R$ is the outcome of
randomized rounding.  Equivalently, we can also write $f(R) =
f(X_1,X_2,\ldots,X_n)$ where $X_i \in \{0,1\}$ is a random variable
indicating whether $i \in S$.  First, we consider a scenario where
$X_1,\ldots,X_n$ are {\em independent} random variables.  We prove that in
this case, Chernoff-type bounds hold for $f(X_1,X_2,\ldots,X_n)$ just
like they would for a linear function.

\begin{theorem}
\label{thm:submod-chernoff}
Let $f:\{0,1\}^n \rightarrow \RR_+$ be a monotone submodular function
with marginal values in $[0,1]$.
Let $X_1,\ldots,X_n$ be independent random variables in $\{0,1\}$.
Let $\mu = \E[f(X_1,X_2,\ldots,X_n)]$.
Then for any $\delta > 0$,
\begin{itemize}
\item
$\Pr[f(X_1,\ldots,X_n) \geq (1+\delta) \mu]
 \leq \left( \frac{e^\delta}{(1+\delta)^{1+\delta}} \right)^\mu.$
\item
$ \Pr[f(X_1,\ldots,X_n) \leq (1-\delta) \mu]
 \leq e^{-\mu \delta^2 / 2}.$
\end{itemize}
\end{theorem}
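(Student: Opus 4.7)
The plan is to establish the moment-generating-function bound
\[
\E[e^{\lambda f(X_1,\ldots,X_n)}] \;\leq\; \exp\!\big(\mu(e^\lambda - 1)\big)
\]
for every $\lambda \in \RR$; both tail bounds then follow from Markov's inequality and the usual optimization over $\lambda$, exactly as in the classical Chernoff proof for sums of independent Bernoullis (which is why the resulting bounds have an identical form). Subtracting the constant $f(\emptyset)$ preserves monotonicity, submodularity, non-negativity, and the marginal-value bounds; since $\lambda \leq e^\lambda - 1$ and $f(\emptyset) \geq 0$, the MGF bound for $f - f(\emptyset)$ implies the desired bound for $f$, so we may assume $f(\emptyset) = 0$.

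The MGF bound is then proved by induction on $n$, with the trivial base case $n=0$. For the inductive step I would condition on $X_n$ by setting $f_0(x) := f(x,0)$ and $\Delta(x) := f(x,1) - f(x,0)$, so that $f(X_1,\ldots,X_n) = f_0(X_1,\ldots,X_{n-1}) + X_n \cdot \Delta(X_1,\ldots,X_{n-1})$. Three properties drive the argument: (i) $f_0$ is monotone submodular on $\{0,1\}^{n-1}$ with marginals in $[0,1]$ and $f_0(\emptyset) = 0$, so the inductive hypothesis gives $\E[e^{\lambda f_0}] \leq \exp(\mu_0(e^\lambda - 1))$ where $\mu_0 := \E[f_0(X_1,\ldots,X_{n-1})]$; (ii) $\Delta$ takes values in $[0,1]$, so convexity of $t \mapsto e^{\lambda t}$ gives the pointwise bound $e^{\lambda \Delta} \leq 1 + \Delta (e^\lambda - 1)$; (iii) $\Delta(x)$ is a monotone decreasing function of $x$ by submodularity, while $f_0(x)$ is monotone increasing by monotonicity of $f$.

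Integrating first over $X_n$ and applying (ii) yields
\[
\E[e^{\lambda f(X)}] \;\leq\; \E[e^{\lambda f_0}] \;+\; p_n(e^\lambda - 1)\cdot \E[e^{\lambda f_0} \cdot \Delta].
\]
The crux is the cross term $\E[e^{\lambda f_0}\, \Delta]$, which I would control by the FKG (Harris) inequality on the product measure of the independent Bernoullis $X_1,\ldots,X_{n-1}$. For $\lambda \geq 0$, $e^{\lambda f_0}$ is monotone increasing and $\Delta$ is monotone decreasing, so FKG gives $\E[e^{\lambda f_0}\,\Delta] \leq \E[e^{\lambda f_0}]\,\E[\Delta]$. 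For $\lambda < 0$, both $e^{\lambda f_0}$ and $\Delta$ are monotone decreasing, so FKG yields $\E[e^{\lambda f_0}\,\Delta] \geq \E[e^{\lambda f_0}]\,\E[\Delta]$; but the factor $(e^\lambda - 1) < 0$ now flips the sign, and in both cases we obtain
\[
\E[e^{\lambda f(X)}] \;\leq\; \E[e^{\lambda f_0}]\cdot \bigl(1 + p_n(\mu_1 - \mu_0)(e^\lambda - 1)\bigr),
\]
where $\mu_1 - \mu_0 = \E[\Delta]$. Applying the inductive bound on $\E[e^{\lambda f_0}]$ together with $1 + y \leq e^y$ collapses this to $\exp(\mu(e^\lambda - 1))$, using $\mu = \mu_0 + p_n(\mu_1 - \mu_0)$, which closes the induction.

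The main obstacle is handling the cross term: a naive split $\E[e^{\lambda f}] = (1-p_n)\E[e^{\lambda f_0}] + p_n\E[e^{\lambda f_1}]$ followed by applying the inductive hypothesis to $f_0$ and $f_1$ separately fails, because Jensen's inequality then points in the wrong direction. The negative correlation between $f_0$ (monotone) and $\Delta$ (anti-monotone by submodularity), delivered by FKG on the product measure, is precisely the ingredient that decouples the factors and lets the Chernoff-style product form re-emerge.
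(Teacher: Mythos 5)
Your proof is correct, and it is essentially the same argument as the paper's (Appendix~\ref{sec:submod-chernoff}): the paper also writes $f$ as a telescoping sum of marginal increments $Y_i$, uses the FKG inequality on the product measure to establish negative correlation between the monotone-increasing cumulative term and the monotone-decreasing marginal term (submodularity), and then applies the $e^{\lambda t}\leq 1+t(e^\lambda-1)$ convexity bound to get the Chernoff-form MGF. Your packaging differs only cosmetically (a single backward induction peeling off $X_n$, with convexity applied before FKG, versus the paper's forward telescoping and a separate negative-correlation lemma before the convexity step), and your explicit reduction to $f(\emptyset)=0$ makes visible a normalization the paper uses implicitly.
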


We remark that Theorem~\ref{thm:submod-chernoff} can be used to simplify
previous results for submodular maximization under linear constraints,
where variables are rounded independently \cite{KST09}.  Furthermore,
we prove a lower-tail bound in the dependent rounding case, where
$X_1,\ldots,X_n$ are produced by randomized swap rounding.

\begin{theorem}
\label{thm:swap-rounding-chernoff}
Let $f(S)$ be a monotone submodular function with marginal values in $[0,1]$,
and $F(x) = \E[f(\hat{x})]$ its multilinear extension.
Let $(x_1,\ldots,x_n) \in P(\cM)$ be a point in a matroid polytope
and $R$ a random independent set obtained from it by randomized swap rounding.
Let $\mu_0 = F(x_1,\ldots,x_n)$ and $\delta > 0$. Then $\E[f(R)] \geq \mu_0$ and
$$ \Pr[f(R) \leq (1-\delta) \mu_0] \leq e^{-\mu_0 \delta^2 / 8}.$$
\end{theorem}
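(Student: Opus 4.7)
The plan is to view swap rounding as a random walk on fractional states and to establish an exponential supermartingale that yields the claimed Chernoff-type lower tail. Let $z^{(0)} = x, z^{(1)}, \ldots, z^{(T)} = \b1_R$ denote the sequence of fractional states visited during the rounding, where $z^{(t)} \in [0,1]^n$ is the current convex combination of the bases that have not yet been merged at step $t$. A single swap step picks two bases $C_1, C_2$ with weights $\lambda_1, \lambda_2$, selects an exchange pair $(i,j)$ with $i \in C_1 \setminus C_2$ and $j \in C_2 \setminus C_1$ such that both $C_1 - i + j$ and $C_2 - j + i$ are bases (using the strong base exchange property of matroids), and updates $z^{(t+1)} = z^{(t)} + T(\be_j - \be_i)$, where $T = +\lambda_1$ with probability $\lambda_2/(\lambda_1+\lambda_2)$ and $T = -\lambda_2$ with probability $\lambda_1/(\lambda_1+\lambda_2)$, so that $\E[T]=0$.

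First, I would establish that $\Phi_t := F(z^{(t)})$ is a submartingale with $\Phi_0 = \mu_0$ and $\Phi_T = f(R)$. Along the swap line $\phi(s) := F(z^{(t)} + s(\be_j - \be_i))$ one computes $\phi''(s) = \partial^2 F/\partial x_i^2 - 2\,\partial^2 F/(\partial x_i \partial x_j) + \partial^2 F/\partial x_j^2 = -2\,\partial^2 F/(\partial x_i \partial x_j) \ge 0$, using multilinearity of $F$ (which kills the squared partials) and submodularity of $f$ (which makes the cross partial nonpositive). Hence $\phi$ is convex on $[-\lambda_2, \lambda_1]$, and Jensen's inequality combined with $\E[T]=0$ gives $\E[\Phi_{t+1} \mid \mathcal{F}_t] \ge \Phi_t$. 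Moreover $\phi'(s) = \partial_j F - \partial_i F \in [-1,1]$, because each marginal $\partial_i F$ is a convex combination of marginal values of $f$ and therefore lies in $[0,1]$, so $\phi$ is $1$-Lipschitz and every swap increment satisfies $|\Phi_{t+1} - \Phi_t| \le \lambda_1 + \lambda_2 \le 1$.

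The crux is to upgrade the submartingale property to an exponential bound. I would aim to show that, for every $\beta \in (0,1)$, the process $M_t := (1-\beta)^{\Phi_t}$ is a supermartingale, i.e., per swap step
$$\frac{\lambda_1}{\lambda_1+\lambda_2}(1-\beta)^{\phi(-\lambda_2)} + \frac{\lambda_2}{\lambda_1+\lambda_2}(1-\beta)^{\phi(\lambda_1)} \le (1-\beta)^{\phi(0)}.$$
Markov's inequality applied to $M_T$ with $\beta$ optimized then yields the $e^{-\mu_0 \delta^2/8}$ bound in the standard Chernoff style. To prove the per-step inequality I would combine the first-order convexity bound $\frac{\lambda_1 \phi(-\lambda_2) + \lambda_2 \phi(\lambda_1)}{\lambda_1+\lambda_2} \ge \phi(0)$, which provides a margin in favor of the inequality, with the $1$-Lipschitz bound $\max\{|\phi(\lambda_1) - \phi(0)|, |\phi(-\lambda_2) - \phi(0)|\} \le \max(\lambda_1, \lambda_2) \le 1$, which caps the second-order excess coming from the convexity of $u \mapsto (1-\beta)^u$. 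Reducing to a two-variable elementary inequality and optimizing constants should give the exponent $1/8$.

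The main obstacle is precisely this final per-step inequality. In the independent-rounding setting of Theorem~\ref{thm:submod-chernoff} one exploits the concavity of $F$ along positive directions to close the MGF recursion cleanly; here, by contrast, the swap direction $\be_j - \be_i$ mixes signs and $F$ is convex (not concave) along it, so Jensen on the MGF acts with the wrong sign. The argument must absorb the resulting convex excess using the slack provided by both the submartingale increment and the $1$-Lipschitz bound, and I expect this tradeoff is exactly what produces the worse constant $1/8$ in the exponent relative to $1/2$ in Theorem~\ref{thm:submod-chernoff}.
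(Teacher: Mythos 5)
The proposal goes wrong at its central step: the claim that $M_t := (1-\beta)^{F(z^{(t)})}$ is a supermartingale is false. Writing $(1-\beta)^u = e^{-\lambda u}$ with $\lambda = -\log(1-\beta) > 0$, the per-step inequality you want is $\E[e^{-\lambda \phi(T)}] \le e^{-\lambda \phi(0)}$ with $\E[T]=0$. Take the modular function $f(S) = |S \cap \{j\}|$, so $F(x)=x_j$ and $\phi(s) = \phi(0) + s$. Then $\E[e^{-\lambda\phi(T)}] = e^{-\lambda\phi(0)}\,\E[e^{-\lambda T}] > e^{-\lambda\phi(0)}$ strictly, since $\E[e^{-\lambda T}] \ge e^{-\lambda\E[T]} = 1$ with equality only when $T$ is degenerate. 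Here $\phi$ is linear, so there is no convexity slack to absorb the Jensen excess; the submartingale is an exact martingale and $e^{-\lambda\Phi_t}$ is a strict submartingale, not a supermartingale. Your intuition that the first-order (submartingale) margin plus the $1$-Lipschitz cap can always cancel the second-order excess is precisely what fails: the margin can be zero while the excess is positive.

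The paper's proof does not attempt a supermartingale. Its per-step bound (Lemma~\ref{lemma:one-step}) is $\E[e^{\lambda(F(\bx)-F(\bx'))}\mid\mathcal{H}] \le e^{\lambda^2 \beta\gamma(F_j(\bx)-F_i(\bx))^2}$, where the right side is $\ge 1$ — a genuine multiplicative increase per step. The whole difficulty, and the real content of the argument, is in Lemma~\ref{lemma:stage-bound} and the telescoping induction (\ref{eq:stage-induct}): the per-step excesses $\lambda^2\beta\gamma(F_j-F_i)^2$ are charged, using submodularity-based monotonicity comparisons $F_{c_{i+1}}(\bx^i) \le F_{c_{i+1}}(\bc^{i+1})$ and $F_{b_{i+1}}(\bx^i) \le F_{b_{i+1}}(\by^{i+1})$ against auxiliary vectors tied to the stage structure (the still-untouched parts of $C_k$ and $\sum_{\ell>k}\bb_\ell$), and these charges telescope within and across stages to a global total of at most $2\lambda^2\mu_0$. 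The $\delta^2/8$ exponent comes from optimizing $\lambda = \delta/4$ in Markov applied to $\E[e^{\lambda(\mu_0-f(R))}] \le e^{2\lambda^2\mu_0}$, not from a per-step Lipschitz/convexity tradeoff. This stage-by-stage amortization against $\mu_0$ is the idea missing from the proposal, and without it the argument does not close.
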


We do not know how to derive this result using only the property of negative
correlations; in particular, we do not have a proof for pipage
rounding, although we suspect that a similar tail estimate
holds. (Weaker tail estimates involving a dependence on $n$ follow
directly from martingale concentration bounds; the main difficulty
here is to obtain a bound which does not depend on $n$.)  We remark
that the tail estimate is with respect to the value of the starting
point, $\mu_0 = F(x_1,\ldots,x_n)$, rather than the actual expectation
of $f(R)$, which could be larger (it would be equal for a linear
function $f$, or under independent rounding). For this reason, we do
not have an upper tail bound.  However, $\mu_0$ is the value that we
want to achieve in applications and hence this is the bound that we
need.

\medskip
\noindent
{\bf Applications:}
We next discuss several applications of our rounding scheme. While
some of the applications are concrete, others are couched in a general
framework; specific instantiations lead to various applications new
and old, and we defer some of these to a later version of the paper.
Our rounding procedure can be used to improve the running time of
some previous applications of pipage rounding \cite{CCPV09,Vondrak09}
and maximum entropy sampling \cite{AGMGS10}. In particular,
our technique significantly simplifies the algorithm and analysis
in the recent $O(\log n / \log \log n)$-approximation for the Asymmetric
Traveling Salesman problem \cite{AGMGS10}. In other applications,
we obtain approximations with high probability instead of in
expectation \cite{CCPV09,Vondrak09}. Details of these improvements
are deferred. Our new applications are as follows.

\medskip
\noindent
\emph{Submodular maximization subject to $1$ matroid and $k$ linear
  constraints.}  Given a monotone submodular function $f: 2^N
\rightarrow \RR_+$, a matroid $\cM$ on the same ground set $N$, and a
system of $k$ linear packing constraints $Ax \leq b$, we consider the
following problem: $\max \{ f(x): x \in P(\cM), Ax \le b, x \in
\{0,1\}^n \}$.  This problem is a common generalization of two
previously studied problems, monotone submodular maximization subject
to a matroid constraint \cite{CCPV09} and subject to a constant number
of linear constraints \cite{KST09}.  For any fixed $\eps > 0$ and $k
\geq 0$, we obtain a $(1-1/e-\eps)$-approximation for this problem,
which is optimal up to the arbitrarily small $\eps$ (even for 1
matroid or 1 linear constraint \cite{NW78,Feige98}), and generalizes
the previously known results in the two special cases.  We also obtain
a $(1-1/e-\eps)$-approximation when the constraints are sufficiently
"loose"; that is $b_i \geq \Omega(\eps^{-2} \log k) \cdot A_{ij}$ for
all $i,j$.

\medskip
\noindent
\emph{Minimax Integer Programs subject to a matroid constraint.}  Let
$\cM$ be a matroid on a ground set $N$ (let $n = |N|$). Let $B(\cM)$
be the base polytope of $\cM$. We consider the problem $\min \{\lambda:
Ax \le \lambda b, x \in B(\cM), x \in \{0,1\}^n\}$ where $A\in\RR^{m\times n}_+$
and $b \in \RR^n_+$.  We give an $O(\log m / \log \log m)$-approximation
for this problem, and a similar result for the min-cost version
(with given packing constraints and element costs).
This generalizes earlier results on minimax integer programs
which were considered in the context of routing and partitioning
problems \cite{RaghavanT87,LeightonLRS01,S06,S01,GKPS06}; the
underlying matroid in these settings is the partition matroid.
Another application fitting in this framework is the \emph{minimum crossing
spanning tree problem} and its geometric variant, the \emph{minimum stabbing
spanning tree problem}.
We elaborate on these in Section~\ref{sec:minimax}. 

\medskip
\noindent
\emph{Multiobjective optimization with submodular functions.}  Suppose
we are given a matroid $\cM = (N,\cI)$ and a constant number of
monotone submodular functions $f_1,\ldots,f_k:2^N \rightarrow \RR_+$.
Given a set of "target values" $V_1,\ldots,V_k$, we either find a
certificate that there is no solution $S \in \cI$ such that $f_i(S)
\geq V_i$ for all $i$, or we find a solution $S$ such that $f_i(S)
\geq (1-1/e-\eps) V_i$ for all $i$. Using the framework of
multiobjective optimization \cite{PY00}, this implies that we can find
efficiently a $(1-1/e-\eps)$-approximate pareto curve for the problem
of maximizing $k$ monotone submodular functions subject to a matroid
constraint.  A natural special case of this is the Submodular Welfare
problem, where each objective function $f_i(S)$ represents the utility
of player $i$. I.e., we can find a $(1-1/e-\eps)$-approximate pareto
curve with respect to the utilities of the $k$ players (for $k$
constant). This result involves a new variant of the
\emph{continuous greedy algorithm} from \cite{Vondrak08}, which
in some sense optimizes multiple submodular functions at the same time.
With linear objective functions $f_i$, we obtain the same
guarantees with $1-\eps$ instead of $1-1/e-\eps$.
We give more details in Section~\ref{sec:multiobj}.

\medskip \noindent

{\bf Organization:}
In Section~\ref{sec:prelims}, we present the necessary definitions.
In Section~\ref{sec:swap-rounding} the randomized swap rounding
procedure is introduced.  In Section~\ref{sec:negative-correl}, we
prove a negative correlation property for a class of rounding procedures
including randomized swap rounding and pipage rounding.  In
Section~\ref{sec:matroid+knapsacks}, we present our algorithm for
maximizing a monotone submodular function subject to $1$ matroid and
$k$ linear constraints.  In Section~\ref{sec:minimax}, we present our
results on minimax integer programs.  In Section~\ref{sec:multiobj},
we present our results on multiobjective optimization.  In
Appendix~\ref{sec:pipage-rounding}, we give a complete description of
randomized pipage rounding. In Appendix~\ref{sec:AppSwapRound}, we
present a generalization of swap rounding for rounding points
in the matroid polytope rather than the base polytope.
In Appendix~\ref{sec:submod-chernoff}, we present our
concentration bounds for submodular functions under independent
rounding, and in Appendix~\ref{sec:submod-lower-tail} our
lower-tail bound under randomized swap rounding.

\section{Preliminaries}
\label{sec:prelims}

\paragraph{Matroid polytopes.}
Given a matroid $\cM = (N, \cI)$ with rank function $r: 2^N
\rightarrow \ZZ_+$, two polytopes associated with $\cM$ are the
matroid polytope $P(\cM)$ and the matroid base polytope $B(\cM)$
\cite{Edmonds70} (see also \cite{Schrijver}). $P(\cM)$ is the convex hull
of characteristic vectors of the independent sets of $\cM$.
$$ P(\cM) = \mbox{conv} \{\b1_I: I \in \cI \} = \{ x \geq 0: \forall S; \sum_{i \in S} x_i \leq r(S) \} $$
$B(\cM)$ is the convex hull of the characteristic vectors of the
{\em bases} $\cB$ of $\cM$ , i.e. independent sets of maximum cardinality.
$$ B(\cM) = \mbox{conv} \{ \b1_B: B \in \cB \} = P(\cM) \cap \{x: \sum_{i \in N} x_i = r(N) \}.$$

\paragraph{Matroid exchange properties.}
To simplify notation, we use $+$ and $-$ for the addition
and deletion of single elements from a set, for example
$S-i+j$ denotes the set $(S\setminus \{i\})\cup \{j\}$.
The following base exchange property
of matroids is crucial in the design of our rounding algorithm.
\begin{theorem}\label{thm:strongExchange}
Let $\cM=(N,\cI)$ be a matroid and let $B_1,B_2\in \cB$. For any
$i\in B_1\setminus B_2$ there exists $j \in B_2 \setminus B_1$
such that $B_1-i+j\in \cB$ and $B_2-j+i\in \cB$.
\end{theorem}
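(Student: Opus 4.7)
The plan is to prove this via the standard circuit/cocircuit argument (Brualdi's symmetric exchange theorem). First I would fix $i \in B_1 \setminus B_2$ and introduce two distinguished objects associated with $i$: the fundamental circuit $C = C(i, B_2)$, namely the unique circuit contained in $B_2 + i$ (which exists and is unique because $B_2$ is a base, so $B_2 + i$ is dependent with a single circuit), and the fundamental cocircuit $D = D(i, B_1)$, the unique cocircuit contained in $(N \setminus B_1) + i$ (the dual counterpart, obtained from the fundamental circuit of $i$ with respect to the co-base $N \setminus B_1$ in $\cM^*$). Both $C$ and $D$ contain the element $i$.

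Next I would record the two exchange facts that these objects encode. From the circuit side, for every $j \in C \setminus \{i\}$ we have $j \in B_2$ and $B_2 - j + i \in \cB$, because removing any element of a circuit in $B_2 + i$ gives an independent set of size $|B_2|$. From the cocircuit side (equivalently, apply the circuit statement in the dual matroid $\cM^*$ whose bases are the complements of bases of $\cM$), for every $j \in D \setminus \{i\}$ we have $j \notin B_1$ and $B_1 - i + j \in \cB$.

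The heart of the proof is the classical orthogonality property that a circuit and a cocircuit of a matroid cannot meet in exactly one element, i.e.\ $|C \cap D| \neq 1$. I would either cite this or prove it briefly by contradiction: if $C \cap D = \{i\}$, then deleting $i$ from the fundamental circuit $C$ in $B_2 + i$ yields a base $B_2 + i - i = B_2$, but $B_2$ must meet the cocircuit $D$ (every base intersects every cocircuit) in $D \setminus \{i\}$, contradicting $C \cap D = \{i\}$ together with the structure of fundamental objects. Since $i \in C \cap D$, we conclude $|C \cap D| \geq 2$, so there exists some $j \in (C \cap D) \setminus \{i\}$.

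Finally I would assemble: this $j$ lies in $C \setminus \{i\} \subseteq B_2$ and in $D \setminus \{i\} \subseteq N \setminus B_1$, hence $j \in B_2 \setminus B_1$; and simultaneously $B_2 - j + i \in \cB$ by the circuit property and $B_1 - i + j \in \cB$ by the cocircuit property. The main obstacle, and the one place requiring real matroid theory, is the circuit/cocircuit orthogonality statement $|C \cap D| \neq 1$; once that is in hand the proof is just a bookkeeping combination of the two fundamental-exchange facts.
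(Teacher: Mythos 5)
The paper states this base-exchange property as a classical fact in its preliminaries and does not supply a proof (it is Brualdi's symmetric exchange theorem), so there is no in-paper argument to compare against. Your outline is the standard proof of that theorem and its overall structure is correct: the fundamental circuit $C$ of $i$ in $B_2 + i$ certifies that $B_2 - j + i \in \cB$ for every $j \in C - i$, the fundamental cocircuit $D$ of $i$ with respect to $B_1$ certifies (by duality) that $B_1 - i + j \in \cB$ for every $j \in D - i$, and circuit--cocircuit orthogonality forces $C$ and $D$, which both contain $i$, to share a second element, which is then the desired $j$. The one weak spot is your inline sketch of why $|C \cap D| \neq 1$: the assertion that ``deleting $i$ from the fundamental circuit $C$ in $B_2 + i$ yields a base $B_2$'' is not what you want ($C - i$ is generally a proper independent subset of $B_2$, not all of $B_2$), and the ensuing contradiction does not follow as written. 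The clean version of that step: if $C \cap D = \{i\}$, then $C - i \subseteq N \setminus D$, which is a hyperplane (hence a flat) because $D$ is a cocircuit; since $C$ is a circuit, $i$ lies in the closure of $C - i$ and therefore in $N \setminus D$, contradicting $i \in D$. With that replacement (or simply citing the orthogonality lemma, as you also offer), your proof is complete and is the expected argument for this statement.
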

To find an element $j$ that corresponds to a given element $i$ as
described in the above theorem, one can simply check all elements in
$B_2\setminus B_1$. Thus a corresponding element $j$ can be found by
$O(d)$ calls to an independence oracle, where $d$ is the rank of the matroid.
For many matroids, a corresponding element $j$ can be found faster. In particular, for the
graphic matroid, $j$ can be chosen to be any element $\neq i$ that
lies simultaneously in the cut defined by the connected components
of $B_1-i$ and in the unique cycle in $B_2+i$.

\paragraph{Submodular functions.}
A function $f:2^N \rightarrow \RR$ is submodular if for any $A, B
\subseteq N$, $f(A) + f(B) \ge f(A \cup B) + f(A \cap B)$.  In
addition, $f$ is monotone if $f(S) \leq f(T)$ whenever $S \subseteq
T$.  We denote by $f_A(i) = f(A+i) - f(A)$ the {\em marginal value} of
$i$ with respect to $A$. An important concept in recent work on
submodular functions \cite{CCPV07,Vondrak08,CCPV09,
  KST09,LMNS09,Vondrak09} is the {\em multilinear extension} of a
submodular function:
$$ F(x) = \E[f(x)] = \sum_{S \subseteq N} f(S) \prod_{i \in S} x_i \prod_{i \in N \setminus S}(1-x_i).$$

\paragraph{Rounding in the matroid polytope.}
A rounding procedure takes a point in the matroid polytope $x \in
P(\cM)$ and rounds it to an independent set $R \in \cI$.  In its
randomized version, it is oblivious to any objective function
and produces a random independent set, with a distribution depending
only on the starting point $x \in P(\cM)$. If the starting point is in
the matroid base polytope $B(\cM)$, the rounded solution is a (random)
base of $\cM$.

One candidate for such a rounding procedure is {\em pipage rounding}
\cite{CCPV09,Vondrak09}.  We give a complete description of the pipage
rounding technique in the appendix.  In particular, this rounding
satisfies that $\Pr[i \in R] = x_i$ for each element $i$, and
$\E[f(R)] \geq F(x)$ for any submodular function $f$ and its
multilinear extension $F$.  Our new rounding, which is described in
Section~\ref{sec:swap-rounding}, satisfies the same properties and has
additional advantages.

\section{Randomized swap rounding}\label{sec:swap-rounding}

Let $\cM=(N,\cI)$ be a matroid of rank $d = r(N)$ and let $n=|N|$.
Randomized swap rounding is a randomized procedure
that rounds a point $x\in P(\cM)$ to an independent set.
We present the procedure for points in the base polytope.  It can
easily be generalized to round any point in the matroid polytope (see
Appendix~\ref{sec:swapMP}).

Assume that $x\in B(\cM)$ is the point we want to round.  The
procedure needs a representation of $x$ as a convex combination of
bases, i.e., $x=\sum_{\ell=1}^m \beta_\ell \b1_{B_\ell}$ with
$\sum_{\ell=1}^m \beta_\ell=1, \beta_\ell \geq 0$.  Notice that by
Carath\'eodory's theorem there exists such a convex representation
using at most $n$ bases.  In some applications, the vector $x$ comes
along with a convex representation.  Otherwise, it is well-known that
one can find such a convex representation in polynomial time using the
fact that one can separate (or equivalently optimize) over the
polytope in polynomial time (see for example \cite{Schrijver98}). For
matroid polytopes, Cunningham~\cite{Cunningham84} proposed a
combinatorial algorithm that allows to find a convex representation of
$x \in B(\cM)$ using at most $n$ bases and whose runtime is bounded by
$O(n^6)$ calls to an independence oracle. In special cases, faster
algorithms are known; for example any point in the spanning tree
polytope of a graph $G=(V,E)$ can be decomposed into a convex
combination of spanning trees in $\tilde{O}(|V|^3 |E|)$ time
\cite{Gabow98}.  In general this would be the dominating
term in the running time of randomized swap rounding.

Given a convex combination of bases $x=\sum_{\ell=1}^n \beta_\ell
\b1_{B_\ell}$, the procedure takes $O(nd^2)$ calls to a matroid
independence oracle.
The rounding proceeds in $n-1$ stages, where in the first stage we
merge the bases $B_1, B_2$ (randomly) into a new base $C_2$, and
replace $\beta_1 \b1_{B_1} + \beta_2 \b1_{B_2}$ in the linear
combination by $(\beta_1+\beta_2) \b1_{C_2}$.  In the $k$-th stage,
$C_k$ and $B_{k+1}$ are merged into a new base $C_{k+1}$, and
$(\sum_{\ell=1}^k\beta_\ell) \b1_{C_k} + \beta_{k+1} \b1_{B_{k+1}}$ is
replaced in the linear combination by $(\sum_{\ell=1}^{k+1}\beta_\ell)
\b1_{C_{k+1}}$.  After $n-1$ stages, we obtain a linear combination
$(\sum_{\ell=1}^n \beta_\ell) \b1_{C_n}=\b1_{C_n}$, and the base $C_n$
is returned.

\parpic[r]{
\begin{boxedminipage}[t]{0.5\linewidth}
\vspace{-10pt}
\begin{tabbing}
\ \ \ \= \ \ \= \ \ \= \ \ \= \ \ \= \ \ \  \\
{\em Algorithm} {\bf MergeBases}$(\beta_1, B_1, \beta_2, B_2)$: \\
\> While ($B_1\neq B_2$) do\\
\> \> Pick $i\in B_1\setminus B_2$ and find $j\in B_2\setminus B_1$ such that \\
\> \> \> ~~~~~~~~~~~ $B_1-i+j\in \cI$ and $B_2-j+i\in \cI$;\\
\> \> With probability $\beta_1/(\beta_1+\beta_2)$, ~$\{B_2 \leftarrow B_2-j+i\}$;\\
\> \> ~~~~~~~~~~~~~~~~~~~~~~~~~~~~~~~~~~~Else \hspace{0.08em}~$\{ B_1 \leftarrow B_1-i+j\}$;\\
\> EndWhile \\
\> Output $B_1$.
\end{tabbing}
\end{boxedminipage}
}
The procedure we use to merge two bases, called {\bf MergeBases},
takes as input two bases $B_1$ and $B_2$ and two positive scalars
$\beta_1$ and $\beta_2$. It is described in the adjacent figure.
Notice that the procedure relies heavily on the basis exchange
property given by Theorem~\ref{thm:strongExchange} to guarantee the
existence of the elements $j$ in the while loop. As discussed in
Section~\ref{sec:prelims}, $j$ can be found by checking all elements
in $B_2\setminus B_1$.  Furthermore, since the cardinality of
$B_1\setminus B_2$ decreases at each iteration by one, the total
number of iterations is bounded by $|B_1|=d$.

\parpic[r]{
\begin{boxedminipage}[t]{0.5\linewidth}
\vspace{-10pt}
\begin{tabbing}
\ \ \ \= \ \ \= \ \ \= \ \ \= \ \ \= \ \ \  \\
{\em Algorithm} {\bf SwapRound}$(x=\sum_{\ell=1}^n \beta_\ell \b1_{B_\ell})$: \\
\> $C_1=B_1$;\\
\> For ($k=1$ to $n-1$) do\\
\> \> $C_{k+1} = ${\bf MergeBases}$(\sum_{\ell=1}^k \beta_\ell, C_k, \beta_{k+1}, B_{k+1})$;\\
\> EndFor \\
\> Output $C_{n}$.
\end{tabbing}
\end{boxedminipage}
}
The main algorithm {\bf SwapRound} is described in the figure.
It uses {\bf MergeBases} to repeatedly merge bases in the
convex decomposition of $x$.
For further analysis we present a different viewpoint on the algorithm,
namely as a random process in the matroid base polytope.
This also allows us to present the algorithm in a common framework with pipage
rounding and to draw parallels between the approaches more easily.

We denote by an \emph{elementary operation} of the swap rounding algorithm
one iteration of the while loop in the {\bf MergeBases} procedure, which is
repeatedly called in {\bf SwapRound}. Hence, an elementary operation changes
two components in one of the bases used in the convex representation of the
current point. For example, if the first elementary operation transforms
the base $B_1$ into $B_1'$, then this can be interpreted on the matroid base polytope
as transforming the point $x=\sum_{\ell=1}^n \beta_\ell \b1_{B_\ell}$ into
$\beta_1 \b1_{B_1'} + \sum_{\ell=2}^n \beta_\ell \b1_{B_\ell}$.
Hence, the {\bf SwapRound} algorithm can be seen as a sequence of $dn$
elementary operations leading to a random sequence $\bX_0, \dots, \bX_\tau$
where $\bX_t$ denotes the convex combination after $t$ elementary operations.


\section{Negative correlation for dependent rounding procedures}
\label{sec:negative-correl}

In this section, we prove a result which shows that the statement of
Theorem~\ref{thm:neg-correl} is true for a large class of random
vector-valued processes that only change at most two components at a
time. Theorem~\ref{thm:neg-correl} then easily follows by observing
that randomized swap rounding as well as pipage rounding fall in this
class of random processes.  The proof follows the same lines as
\cite{GKPS06} in the case of bipartite graphs.  The intuitive reason
for negative correlation is that whenever a pair of variables is being
modified, their sum remains constant. Hence, knowing that one variable
is high can only make the expectation of another variable lower.

\begin{lemma}\label{lem:negCorr}
  Let $\tau\in \mathbb{N}$ and let $\bX_t=(X_{1,t},\dots,X_{n,t})$ for
  $t\in \{0,\dots,\tau\}$ be a non-negative vector-valued random process
  with initial distribution given by $X_{i,0}=x_i$ with probability $1$
  $\forall i\in[n]$, and satisfying the following properties:
\begin{enumerate}
\item \label{item:mart} $\E[\bX_{t+1} \mid \bX_t]=\bX_t$ for $t\in \{0,\dots,\tau\}$ and
$i\in [n]$.
\item \label{item:2comp}$\bX_t$ and $\bX_{t+1}$ differ in at most two components for $t\in \{0,\dots, \tau-1\}$.
\item \label{item:constSum}For $t\in \{0,\dots,\tau\}$, if two components $i,j\in [n]$ change between $\bX_t$ and $\bX_{t+1}$,
then their sum is preserved: $X_{i,t+1}+X_{j,t+1}=X_{i,t}+X_{j,t}$.
\end{enumerate}
Then for any $t\in \{0,\dots,\tau\}$, the components of $\bX_t$ satisfy
$\E[\prod_{i\in S} X_{i,t}]\leq \prod_{i\in S} x_i$ $\forall S\subseteq [n]$.
\end{lemma}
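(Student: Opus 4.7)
The plan is to prove the lemma by induction on $t$. The base case $t=0$ is immediate because the starting vector is deterministic: $X_{i,0}=x_i$ for every $i$. For the inductive step I aim for the stronger pointwise statement
\[
\E\Bigl[\prod_{i\in S} X_{i,t+1}\;\Big|\;\bX_t\Bigr]\;\le\;\prod_{i\in S} X_{i,t},
\]
after which taking an outer expectation and invoking the induction hypothesis for $t$ gives the lemma for $t+1$.

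To establish the pointwise bound, condition on $\bX_t$ and let $i^*,j^*$ be the (at most) two indices that change between $\bX_t$ and $\bX_{t+1}$; by Property~\ref{item:2comp} every other coordinate is frozen and can be pulled outside the conditional expectation. The argument then splits on $|S\cap\{i^*,j^*\}|$. If $S\cap\{i^*,j^*\}=\emptyset$, the product over $S$ is literally unchanged and equality holds. If exactly one of $i^*,j^*$ lies in $S$, say $i^*$, then after pulling out the frozen factors I am left with $\E[X_{i^*,t+1}\mid\bX_t]=X_{i^*,t}$ by the martingale Property~\ref{item:mart}, giving equality again.

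The only substantive case is $\{i^*,j^*\}\subseteq S$. Here Property~\ref{item:constSum} gives $X_{j^*,t+1}=(X_{i^*,t}+X_{j^*,t})-X_{i^*,t+1}$, whence
\[
X_{i^*,t+1}X_{j^*,t+1}\;=\;(X_{i^*,t}+X_{j^*,t})\,X_{i^*,t+1}\;-\;X_{i^*,t+1}^2.
\]
Taking a conditional expectation, Property~\ref{item:mart} handles the linear term and Jensen's inequality $\E[Y^2\mid\bX_t]\ge(\E[Y\mid\bX_t])^2$ handles the square, yielding
\[
\E\bigl[X_{i^*,t+1}X_{j^*,t+1}\,\big|\,\bX_t\bigr]\;\le\;(X_{i^*,t}+X_{j^*,t})\,X_{i^*,t}\;-\;X_{i^*,t}^2\;=\;X_{i^*,t}X_{j^*,t}.
\]
Multiplying by the $\bX_t$-measurable factor $\prod_{k\in S\setminus\{i^*,j^*\}}X_{k,t}$ closes the induction.

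The main (and essentially only) obstacle is the two-variable variance estimate above, and it is precisely where all three hypotheses are used at once: the two-component Property~\ref{item:2comp} lets me freeze the complement of $\{i^*,j^*\}$, the martingale Property~\ref{item:mart} pins down the means, and sum-preservation Property~\ref{item:constSum} collapses the product $X_{i^*,t+1}X_{j^*,t+1}$ into a concave quadratic in a single random variable so that Jensen's inequality bites in the correct direction. Intuitively, the update moves probability mass off the diagonal of $(X_{i^*,t+1},X_{j^*,t+1})$ while keeping the marginal means fixed, which can only decrease the expectation of the product.
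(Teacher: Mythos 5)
Your proof is correct and follows essentially the same inductive strategy as the paper: condition on $\bX_t$, split on how many of the (at most two) updated coordinates lie in $S$, and in the two-coordinate case use sum-preservation together with $\E[Z^2]\ge(\E[Z])^2$ to bound $\E[X_{i^*,t+1}X_{j^*,t+1}\mid\bX_t]$ by $X_{i^*,t}X_{j^*,t}$. The only cosmetic difference is that the paper reaches the same conclusion via the identity $XY=\tfrac14\bigl((X+Y)^2-(X-Y)^2\bigr)$, whereas you eliminate $X_{j^*,t+1}$ directly and apply Jensen to $X_{i^*,t+1}^2$; these are interchangeable.
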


\begin{proof}
We are interested in the quantity $Y_t = \prod_{i \in S} X_{i,t}$.
At the beginning of the process, we have
$\E[Y_0] = \prod_{i \in S} x_i$.
The main claim is that for each $t$, we have $\E[Y_{t+1} | \bX_t] \leq Y_t$.

Let us condition on a particular configuration of variables at time $t$,
$\bX_t=(X_{1,t}, \ldots, X_{n,t})$. We consider three cases:
\begin{itemize}
\item If no variable $X_i$, $i \in S$, is modified in step $t$,
we have $Y_{t+1} = \prod_{i \in S} X_{i,t+1} = \prod_{i \in S} X_{i,t}=Y_t$.
\item If exactly one variable $X_i$, $i \in S$, is modified in step $t$,
then by property~\ref{item:mart} of the lemma:
$$ \E[Y_{t+1} \mid \bX_t] =
 \E[X_{i,t+1} \mid \bX_t] \cdot \prod_{j \in S \setminus \{i\}} X_{j,t}
 = \prod_{j \in S} X_{j,t}=Y_t.$$
\item If two variables $X_i,X_j$, $i,j \in S$, are modified in step $t$,
we use the property that their sum is preserved:
$X_{i,t+1} + X_{j,t+1} = X_{i,t} + X_{j,t}$.
This also implies that
\begin{equation}
\label{eq:sum}
\E[(X_{i,t+1} + X_{j,t+1})^2 \mid \bX_t]
 = (X_{i,t} + X_{j,t})^2.
\end{equation}
On the other hand, the value of each variable is preserved in expectation.
Applying this to their difference, we get
$ \E[X_{i,t+1} - X_{j,t+1} \mid \bX_t]
 = X_{i,t} - X_{j,t}$. Since $\E[Z^2] \geq (\E[Z])^2$ holds for any
random variable, we get
\begin{equation}
\label{eq:diff}
\E[(X_{i,t+1} - X_{j,t+1})^2 \mid \bX_t]
 \geq (X_{i,t} - X_{j,t})^2.
\end{equation}
Combining (\ref{eq:sum}) and (\ref{eq:diff}), and using the formula
$X Y = \frac14 ((X+Y)^2 - (X-Y)^2)$, we get
$$ \E[X_{i,t+1} X_{j,t+1} \mid \bX_t]
 \leq X_{i,t} X_{j,t}. $$
Therefore,
$$ \E[Y_{t+1} \mid \bX_t] =
 \E[X_{i,t+1} X_{j,t+1} \mid \bX_t]
 \cdot \prod_{k \in S \setminus \{i,j\}} X_{k,t}
 \leq \prod_{k \in S} X_{k,t}=Y_t,$$
\end{itemize}
as claimed.
By taking expectation over all configurations $\bX_t$ we obtain $\E[Y_{t+1}] \leq \E[Y_t]$. Consequently,
$\E[\prod_{i\in S} X_{i,t}]=\E[Y_t] \leq \E[Y_{t-1}] \leq \ldots \leq \E[Y_0]=\prod_{i\in S} x_i$,
as claimed by the lemma.
\end{proof}

Any process that satisfies the conditions of Lemma~\ref{lem:negCorr} thus also satisfies
the first statement of Theorem~\ref{thm:neg-correl}. Furthermore, the second statement of
Theorem~\ref{thm:neg-correl} also follows by observing that
for any process $(X_{1,t},\dots, X_{n,t})$ that satisfies the conditions of
Lemma~\ref{lem:negCorr}, also the process $(1-X_{1,t},\dots, 1-X_{n,t})$
satisfies the conditions.
As we mentioned in Section~\ref{sec:intro}, these results imply strong
concentration bounds for linear functions of the variables
$X_1,\ldots,X_n$ (Corollary~\ref{cor:pipage-chernoff}).

Both randomized swap rounding and pipage rounding satisfy the conditions of
Lemma~\ref{lem:negCorr} (proofs can be found in the Appendix).
This implies Theorem~\ref{thm:neg-correl}.
Note that the sequences $\bX_t$ created by randomized swap rounding or pipage rounding
-- besides satisfying the conditions of Lemma~\ref{lem:negCorr} --
are Markovian, and hence they are vector-valued martingales.

\section{Submodular maximization subject to 1 matroid and k linear constraints}
\label{sec:matroid+knapsacks}

In this section, we present an algorithm for the problem of maximizing
a monotone submodular function subject to 1 matroid and $k$ linear
("knapsack") constraints.

\paragraph{Problem definition.}  {\em Given a monotone submodular
  function $f:2^N \rightarrow \RR_+$ (by a value oracle), and a
  matroid $\cM = (N, \cI)$ (by an independence oracle).  For each $i
  \in N$, we have $k$ parameters $c_{ij}$, $1 \leq j \leq k$.  A set
  $S \subseteq N$ is feasible if $S \in \cI$ and $\sum_{i \in S}
  c_{ij} \leq 1$ for each $1 \leq j \leq k$.
  The goal is to maximize $f$ over all feasible sets.
}

\medskip Kulik et al.~gave a $(1-1/e-\eps)$-approximation for the same
problem with a constant number of linear constraints, but {\em
  without} the matroid constraint \cite{KST09}.  Gupta, Nagarajan and
Ravi \cite{GuptaNR09} show that a knapsack constraint can in a
technical sense be simulated in a black-box fashion by a collection of
partition matroid constraints. Using their reduction and known results
on submodular set function maximization subject to matroid constraints
\cite{FNW78,LSV09}, they obtain a $1/(p+q+1)$-approximation with $p$
knapsacks and $q$ matroids for any $q \geq 1$ and fixed $p \geq 1$ (or
$1/(p+q+\eps)$ for any fixed $p \geq 1, q \geq 2$ and $\eps > 0$).

\subsection{Constant number of knapsack constraints}

We consider first $1$ matroid and a constant number $k$ of linear
constraints, in which case each linear constraint is thought of as a
"knapsack" constraint.  We show a $(1-1/e-\eps)$-approximation in this
case, building upon the algorithm of Kulik, Shachnai and Tamir
\cite{KST09}, which works for $k$ knapsack constraints (without a
matroid constraint).  The basic idea is that we can add the knapsack
constraints to the multilinear optimization problem
$$ \max \{ F(x): x \in P(\cM) \} $$
which is used to achieve a $(1-1/e)$-approximation for 1 matroid
constraint \cite{CCPV09}.  Using standard techniques (partial
enumeration), we get rid of all items of large value or size, and then
scale down the constraints a little bit, so that we have some room for
overflow in the rounding stage.  We can still solve the multilinear
optimization problem within a factor of $1-1/e$ and then round the
fractional solution using
randomized swap rounding (or pipage rounding). Using the fact that
randomized swap rounding makes the size in each knapsack strongly
concentrated, we conclude that our solution is feasible with constant
probability.

\paragraph{Algorithm.}
\begin{itemize}
\item Assume $0 < \eps < 1/(4k^2)$.  Enumerate all
  sets $A$ of at most $1/\eps^4$ items which form a feasible
  solution.  (We are trying to guess the most valuable items in the
  optimal solution under a greedy ordering.)  For each candidate set
  $A$, repeat the following.
\item Let $\cM' = \cM / A$ be the matroid where $A$ has been
  contracted.  For each $1 \leq j \leq k$, let $C_j = 1 - \sum_{i \in
    A} c_{ij}$ be the remaining capacity in knapsack $j$.
  Let $B$ be the set of items $i \notin A$ such that
  either $f_A(i) > \eps^4 f(A)$ or
  $c_{ij} > k \eps^3 C_j$ for some $j$
  (the item is relatively big compared to the size of some knapsack).
  Throw away all the items in $B$.
\item We consider a reduced problem on the item set $N \setminus (A
  \cup B)$, with the matroid constraint $\cM'$, knapsack capacities
  $C_j$, and objective function $g(S) = f_A(S)$.  Define a polytope
\begin{equation}
  \label{eq:P'}
  P' = \left\{ x \in P(\cM'): \forall j; \sum c_{ij} x_i \leq C_j \right\}
\end{equation}
where $P(\cM')$ is the matroid polytope of $\cM'$.
We solve (approximately) the following optimization problem:
\begin{equation}
\label{eq:multi-LP}
\max \left\{G(x): x \in (1-\eps) P' \right\}
\end{equation}
where $G(x) = \E[g(\hat{x})]$ is the multilinear extension of $g(S)$.
Since linear functions can be optimized over $P'$ in polynomial time,
we can use the continuous greedy algorithm \cite{Vondrak08} to find a
fractional solution $x^*$ within a factor of $1 - 1/e$ of optimal.
\item Given a fractional solution $x^*$, we apply randomized pipage
  rounding to $x^*$ with respect to the matroid polytope
  $P(\cM')$. Call the resulting set $R_A$.  Among all candidate sets
  $A$ such that $A \cup R_A$ is feasible, return the one maximizing
  $f(A \cup R_A)$.
\end{itemize}

We remark that the value of this algorithm (unlike the
$(1-1/e)$-approximation for 1 matroid constraint) is purely
theoretical, as it relies on enumeration of a huge (constant) number
of elements.

\begin{theorem}
\label{thm:matroid+knapsacks}
  With constant positive probability, the algorithm above returns a solution
  of value at least $(1-1/e-3\eps) OPT$.
\end{theorem}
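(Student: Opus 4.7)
I would fix an optimal solution $OPT^*$ with value $OPT = f(OPT^*)$, identify one specific enumeration guess $A = A^* \subseteq OPT^*$ for which the algorithm's branch succeeds with constant probability, and rely on the fact that the algorithm returns the best feasible candidate. The guess $A^*$ is the standard one obtained by partial enumeration: include in $A^*$ the items of $OPT^*$ that are ``heavy'' in some knapsack (at most $1/\eps^3$ such items by a pigeonhole across the $k$ knapsacks, since each knapsack has capacity $1$) together with the top items of $OPT^*$ in the greedy marginal-value order, keeping $|A^*|\le 1/\eps^4$. Submodularity gives $f_{A^*}(i)\le \eps^4 f(A^*)$ for every $i\in OPT^*\setminus A^*$, and the heavy-item choice ensures $c_{ij}\le k\eps^3 C_j$ for every such $i$ and every $j$. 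Hence $OPT^*\setminus A^*\subseteq N\setminus(A^*\cup B)$ when the algorithm tries $A=A^*$, so its characteristic vector is feasible for $P'$.

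Writing $g=f_{A^*}$ for the contracted function, monotonicity and concavity of $G$ along rays from the origin give $G((1-\eps)\b1_{OPT^*\setminus A^*})\ge(1-\eps)\,g(OPT^*\setminus A^*)\ge (1-\eps)(OPT-f(A^*))$, and continuous greedy therefore produces $x^*$ with $G(x^*)\ge (1-1/e-\eps)(1-\eps)(OPT-f(A^*))$. The matroid constraint on $R_A=\{i:X_i=1\}$ is enforced by the rounding itself. For each knapsack $j$ we have $\E[\sum_i c_{ij}X_i]=\sum_i c_{ij}x_i^*\le(1-\eps)C_j$ and $c_{ij}/(k\eps^3 C_j)\in[0,1]$ for every surviving item, so rescaling and applying the upper tail of Corollary~\ref{cor:pipage-chernoff} with $\delta=\Theta(\eps)$ yields $\Pr[\sum_i c_{ij}X_i>C_j]\le e^{-\Omega(1/(k\eps))}$. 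A union bound over the $k$ knapsacks keeps the total infeasibility probability at $o(1)$.

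For the objective, Theorem~\ref{thm:swap-rounding-chernoff} applied to $g$ (after normalizing so marginals lie in $[0,1]$) with $\delta=\eps$ gives $\Pr[g(R_A)\le(1-\eps)G(x^*)]\le e^{-\Omega(\eps^2 G(x^*))}$, which is $o(1)$ whenever $OPT-f(A^*)$ is a non-negligible fraction of $OPT$; in the opposite regime $f(A^*)\ge(1-3\eps)OPT$, monotonicity alone gives $f(A^*\cup R_A)\ge f(A^*)\ge(1-1/e-3\eps)OPT$ irrespective of the rounding. Conditioning on the joint success event (constant probability),
\[
f(A^*\cup R_A)\;=\;f(A^*)+g(R_A)\;\ge\;f(A^*)+(1-\eps)^2(1-1/e-\eps)\bigl(OPT-f(A^*)\bigr)\;\ge\;(1-1/e-3\eps)\,OPT,
\]
after expanding the product of $(1-\eps)$-type factors and discarding $O(\eps^2)$ terms. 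Since the algorithm returns the best feasible candidate over all enumerated guesses, this proves the theorem.

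The most delicate step is the construction of $A^*$ in the first paragraph. The two filter thresholds (marginal value $\eps^4 f(A)$ and size $k\eps^3 C_j$) must be calibrated jointly with the enumeration bound $|A|\le 1/\eps^4$ so that every item of $OPT^*\setminus A^*$ survives both filters. The size filter is the trickier one because $C_j$ depends on $A^*$ itself and can be small when $A^*$ nearly fills some knapsack; showing that the heavy items to be absorbed into $A^*$ are few enough to fit inside the $1/\eps^4$ budget requires the pigeonhole count across the $k$ knapsacks noted above. The other non-routine piece is bridging the tail bounds for linear and submodular functions simultaneously, which is why separating the ``$G(x^*)$ small'' regime (handled by $f(A^*)$ alone) from the ``$G(x^*)$ large'' regime (handled by Theorem~\ref{thm:swap-rounding-chernoff}) is convenient.
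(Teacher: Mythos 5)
Your construction of the guess set $A^*$ contains a genuine gap that you flag but do not resolve. You want $A^*$ to absorb all items of $OPT^*$ that are ``heavy'' (i.e.\ violate $c_{ij}\le k\eps^3 C_j$), so that $OPT^*\setminus A^*$ survives the algorithm's size filter and $\b1_{OPT^*\setminus A^*}\in P'$. But the threshold $k\eps^3 C_j$ uses the residual capacity $C_j = 1-\sum_{i\in A^*}c_{ij}$, which depends on $A^*$: your pigeonhole (at most $1/(k\eps^3)$ heavy items per knapsack) holds only for a \emph{fixed} $C_j$, and each time you move heavy items into $A^*$ the residual shrinks, the threshold drops, and new items may become heavy. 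This cascade can continue far beyond $O(1/\eps^3)$ items: with a single knapsack and item sizes of $OPT^*$ decaying geometrically at rate $1-\Theta(\eps^3)$ (so they pack to total size $\le 1$), every item is heavy relative to the capacity remaining after its predecessors, and you would need to absorb all of them, blowing the $1/\eps^4$ enumeration budget. No fixed point with $|A^*|\le 1/\eps^4$ is guaranteed to exist, and the theorem cannot be proven this way.

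The paper avoids this entirely by \emph{discarding} the heavy items rather than absorbing them into $A$. It takes $A$ to be only the greedy prefix of $O$ with marginal value at least $\eps^4\cdot OPT$, so $|A|\le 1/\eps^4$. The heavy items of $O\setminus A$ form $O\cap B$; by the same one-shot pigeonhole there are at most $1/\eps^3$ of them, and since each lies in $O\setminus A$ its marginal value is at most $\eps^4\cdot OPT$, so $g(O\cap B)\le \eps\cdot OPT$. The set $O'=O\setminus(A\cup B)$ then survives both filters (no circularity, since $A$ is fixed before $B$ is defined), and the $\eps\cdot OPT$ loss is tolerable: $g(O')\ge OPT - f(A) - \eps\cdot OPT$. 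Your second step --- splitting on whether $f(A^*)$ is already large versus applying Theorem~\ref{thm:swap-rounding-chernoff} --- is a workable alternative to the paper's choice of an adaptive $\delta = \eps\cdot OPT/G(x^*)$, which gives the additive tail bound $g(R)\ge G(x^*)-\eps\cdot OPT$ directly; but the $A^*$ construction is the load-bearing issue and it needs to be replaced by the discard-and-charge argument.
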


\begin{proof}
  Consider an optimum solution $O$, i.e. $OPT = f(O)$.  Order the
  elements of $O$ greedily by decreasing marginal values, and let $A
  \subseteq O$ be the elements whose marginal value is at least
  $\eps^4 OPT$. There can be at most $1/\eps^4$ such elements,
  and so the algorithm will consider them as one of the candidate
  sets. We assume in the following that this is the set $A$ chosen by
  the algorithm.

  We consider the reduced instance, where $\cM' = \cM / A$ and the
  knapsack capacities are $C_j = 1 - \sum_{i \in A} c_{ij}$. $O
  \setminus A$ is a feasible solution for this instance and we have
  $g(O \setminus A) = f_A(O \setminus A) = OPT - f(A)$.  We know that
  in $O \setminus A$, there are no items of marginal value more than
  the last item in $A$.  In particular, $f_A(i) \leq \eps^4 f(A) \leq
  \eps^4 OPT$ for all $i \in O \setminus A$. We throw away all items
  where $f_A(i) > \eps^4 f(A)$ but this does not affect any item in $O
  \setminus A$.  We also throw away the set $B \subseteq N \setminus
  A$ of items whose size in some knapsack is more then $k \eps^3 C_j$.
  In $O \setminus A$, there can be at most $1/(k\eps^3)$ such items
  for each knapsack, i.e. $1 / \eps^3$ items in total. Since their
  marginal values with respect to $A$ are bounded by $\eps^4 OPT$,
  these items together have value $g(O \cap B) = f_A(O \cap B) \leq
  \eps OPT$.  $O' = O \setminus (A \cup B)$ is still a feasible set
  for the reduced problem, and using submodularity, its value is
$$ g(O') = g((O \setminus A) \setminus (O \cap B)) \geq g(O \setminus A) - g(O \cap B)
 \geq OPT - f(A) - \eps OPT.$$

 Now consider the multilinear problem (\ref{eq:multi-LP}). Note that
 the indicator vector $\b1_{O'}$ is feasible in $P'$, and hence
 $(1-\eps) \b1_{O'}$ is feasible in $(1-\eps) P'$.  Using the
 concavity of $G(x)$ along the line from the origin to $\b1_{O'}$, we
 have $G((1-\eps) \b1_{O'}) \geq (1-\eps) g(O') \geq
 (1-2\eps) OPT - f(A).$ Using the continuous greedy algorithm
 \cite{Vondrak08}, we find a fractional solution $x^*$ of value
$$ G(x^*) \geq (1-1/e) G((1-\eps) \b1_{O'}) \geq (1-1/e-2\eps) OPT - f(A).$$

Finally, we apply randomized swap rounding (or pipage rounding) to $x^*$ and call the
resulting set $R$.  By the construction of randomized swap rounding, $R$ is
independent in $\cM'$ with probability $1$.
However, $R$ might violate some of the knapsack constraints.

Consider a fixed knapsack constraint, $\sum_{i \in S} c_{ij} \leq
C_j$.  Our fractional solution $x^*$ satisfies $\sum c_{ij} x^*_i \leq
(1-\eps) C_j$.  Also, we know that all sizes in the reduced instance
are bounded by $c_{ij} \leq k \eps^3 C_j$.  By scaling, $c'_{ij} =
c_{ij} / (k \eps^3 C_j)$, we can apply
Corollary~\ref{cor:pipage-chernoff} with $\mu = (1-\eps) / (k
\eps^3)$:
$$ \Pr[ \sum_{i \in R} c_{ij} > C_j] \leq \Pr[ \sum_{i \in R} c'_{ij} > (1+\eps) \mu]
 \leq e^{-\mu \eps^2 / 3} < e^{-1/4k\eps}. $$
On the other hand, consider the objective function $g(R)$.
In the reduced instance, all items have value $g(i) \leq \eps^4 OPT$.
Let $\mu = G(x^*) / (\eps^4 OPT)$. Then, Theorem~\ref{thm:swap-rounding-chernoff} implies
$$ \Pr[g(R) \leq (1-\delta) G(x^*)] = \Pr[f(R) / (\eps^4 OPT) \leq (1-\delta) \mu]
 \leq e^{-\delta^2 \mu / 8} = e^{-\delta^2 G(x^*) / 8 \eps^4 OPT}.$$
We set $\delta = \frac{OPT}{G(x^*)} \eps$ and obtain
$$ \Pr[g(R) \leq G(x^*) - \eps OPT] \leq e^{-OPT / 8\eps^2 G(x^*)} \leq e^{-1/8\eps^2}.$$
By the union bound,
$$ \Pr[g(R) \leq G(x^*) - \eps OPT \mbox{ or } \exists j; \sum_{i \in R} c_{ij} > C_j]
 \leq e^{-1/8\eps^2} + k e^{-1/4k\eps}.$$
For $\eps < 1/(4k^2)$, this probability is at most $e^{-2k^4} + k e^{-k} < 1$.
If this event does not occur, we have a feasible solution of value
$f(R) = f(A) + g(R) \geq f(A) + G(x^*) - \eps OPT \geq (1-1/e-3\eps) OPT$.

\end{proof}

\subsection{Loose packing constraints}

In this section we consider the case when the number of linear
packing constraints is not a fixed constant. The notation we use in this case
is that of a packing integer program:
$$\max \{ f(x): x \in P(\cM), Ax \le b, x \in \{0,1\}^n \}.$$
Here $f: 2^N \rightarrow \RR$ is a monotone submodular function with
$n = |N|$, $\cM = (N,\cI)$ is a matroid,
$A \in \RR_+^{k \times n}$ is a non-negative matrix and $b \in \RR_+^k$
is a non-negative vector. This problem has been studied
extensively when $f(x)$ is a linear function, in other words $f(x) = w^T x$
for some non-negative weight vector $w \in \RR^n$. Even this case
with $A,b$ having only $0,1$ entries captures the maximum independent
set problem in graphs and hence is NP-hard to approximate to within an
$n^{1-\eps}$-factor for any fixed $\eps>0$.
For this reason a variety of restrictions on $A,b$ have been studied.

We consider the case when the constraints are sufficiently loose,
i.e. the right-hand side $b$ is significantly larger than entries in
$A$: in particular, we assume $b_i \ge c \log k \cdot \max_{j}
A_{ij}$ for $1 \le i \le k$.  In this case, we propose a
straightforward algorithm which works as follows.

\paragraph{Algorithm.}
\begin{itemize}
\item Let $\eps = \sqrt{6/c}$.
Solve (approximately) the following optimization problem:
$$ \max \{ F(x) : x \in (1-\eps) P \} $$
where $F(x) = \E[f(\hat{x})]$ is the multilinear extension of $f(S)$, and
\begin{equation*}
P = \{x \in P(\cM) \mid \forall i; \sum_{j\in N} A_{ij} x_j \leq b_i \}.
\end{equation*}
Since linear functions can be optimized over $P$ in polynomial time,
we can use the continuous greedy algorithm \cite{Vondrak08} to find a
fractional solution $x^*$ within a factor of $1 - 1/e$ of optimal.
\item Apply randomized pipage rounding to $x^*$ with respect to the matroid polytope $P(\cM)$.
  If the resulting solution $R$ satisfies the packing constraints, return $R$;
  otherwise, fail.
\end{itemize}

\begin{theorem}
Assume that $A \in \RR^{k \times n}$ and $b \in \RR^k$ such that
$b_i \geq A_{ij} c \log k$ for all $i,j$ and some constant $c = 6/\eps^2$.
Then the algorithm above gives a $(1-1/e-O(\eps))$-approximation with constant probability.
\end{theorem}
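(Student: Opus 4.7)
The plan has three ingredients: a concavity bound on the continuous-greedy value, a Chernoff upper-tail bound on each of the $k$ packing constraints via Corollary~\ref{cor:pipage-chernoff}, and the submodular lower-tail bound of Theorem~\ref{thm:swap-rounding-chernoff} on the objective. Matroid feasibility of the rounded set $R$ is automatic since the rounding always outputs an independent set, so only the packing constraints and the objective value need to be controlled.

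First I would show $F(x^*) \geq (1-1/e-\eps)\,OPT$. If $O$ is an integral optimal feasible solution then $\b1_O \in P$, so $(1-\eps)\b1_O \in (1-\eps)P$. By concavity of $F$ along the ray from $\b0$ to $\b1_O$,
\[
F((1-\eps)\b1_O) \;\geq\; (1-\eps)\,OPT,
\]
and the continuous greedy algorithm applied to $(1-\eps)P$ returns $x^*$ with $F(x^*) \geq (1-1/e)(1-\eps)\,OPT \geq (1-1/e-\eps)\,OPT$.

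Second, I would control each packing constraint using Corollary~\ref{cor:pipage-chernoff}. Fix $i\in[k]$ and set $a_{ij} := A_{ij}\,c\log k / b_i$, which lies in $[0,1]$ by the looseness hypothesis $b_i \geq c\log k \cdot A_{ij}$. The variable $Y_i = \sum_j a_{ij} \hat{X}_j$, where $\hat{X}$ is the integral output, has $\E[Y_i] = c\log k \sum_j (A_{ij}/b_i)\,x^*_j \leq (1-\eps)\,c\log k$ since $x^* \in (1-\eps)P$. The $i$-th constraint is violated iff $Y_i > c\log k = (1+\delta)\mu$ with $\mu = (1-\eps)c\log k$ and $\delta = \eps/(1-\eps)$, so for $\eps\leq 1/2$ Corollary~\ref{cor:pipage-chernoff} gives
\[
\Pr[\text{constraint $i$ violated}] \;\leq\; e^{-\mu\delta^2/3} \;\leq\; e^{-2\log k/(1-\eps)} \;\leq\; k^{-2},
\]
using $c = 6/\eps^2$. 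A union bound over the $k$ constraints gives joint feasibility with probability at least $1-1/k$.

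Third, for the objective value, I would apply Theorem~\ref{thm:swap-rounding-chernoff} (reading the rounding step as randomized swap rounding, consistent with the fact that only swap rounding is known to admit a submodular lower tail). Rescaling $f$ by $M = \max_j f(\{j\})$ so that its marginals lie in $[0,1]$ gives $\mu_0 = F(x^*)/M \geq 1-1/e-\eps$ (using $OPT \geq M$), and choosing $\delta = \Theta(\eps)$ in the theorem yields $\Pr[f(R) \leq (1-\eps) F(x^*)] \leq e^{-\Omega(\eps^2)}$, a constant strictly less than $1$ for fixed $\eps$. A union bound with the feasibility estimate shows that with constant probability $R$ is simultaneously feasible and has $f(R) \geq (1-\eps)(1-1/e-\eps)\,OPT \geq (1-1/e-O(\eps))\,OPT$. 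The main obstacle is precisely this last lower-tail step: Theorem~\ref{thm:swap-rounding-chernoff} is proved only for swap rounding, so one must either read ``randomized pipage rounding'' in the algorithm loosely or simply substitute swap rounding, which shares the negative-correlation property of Lemma~\ref{lem:negCorr} needed for the packing bounds and the expectation guarantee $\E[f(R)] \geq F(x^*)$.
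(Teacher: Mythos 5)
Your proposal is correct and follows the same approach as the paper's (quite terse) proof: a concavity argument for the fractional value $F(x^*)\geq (1-1/e)(1-\eps)\,OPT$, Corollary~\ref{cor:pipage-chernoff} with scaling $a_{ij}=A_{ij}c\log k/b_i$ for the $k$ packing constraints, and Theorem~\ref{thm:swap-rounding-chernoff} for the objective, finished by a union bound. Your observation that the algorithm description says ``randomized pipage rounding'' while the lower-tail Theorem~\ref{thm:swap-rounding-chernoff} is proved only for randomized swap rounding is a genuine inconsistency in the paper: its own proof switches to ``randomized swap rounding'' mid-sentence, so reading the algorithm as using swap rounding is the correct reconciliation.

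One point worth flagging about your last step: with the normalization $M=\max_j f(\{j\})$ you get $\mu_0 = F(x^*)/M = \Theta(1)$, so Theorem~\ref{thm:swap-rounding-chernoff} with $\delta=\Theta(\eps)$ yields a success probability of only about $1 - e^{-\Theta(\eps^2)} = \Theta(\eps^2)$, which for small $\eps$ is a very small constant (e.g.\ roughly $10^{-3}$ at $\eps=0.1$). That suffices for the theorem as stated, and matches the algorithm as described (which has no preprocessing step), so your argument is valid; but note that the companion proof of Theorem~\ref{thm:matroid+knapsacks} deliberately throws away all items of large marginal value precisely to push $\mu_0$ up to $\Theta(1/\eps^4)$ and obtain a success probability close to $1$, and the same preprocessing would tighten the constant here as well. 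You also correctly restrict to $\eps\leq 1/2$ so that $\delta=\eps/(1-\eps)\in[0,1]$ as required by the simplified upper-tail form of Corollary~\ref{cor:pipage-chernoff}, and your union bound implicitly uses the paper's assumption $k=\omega(1)$ so that the $1/k$ feasibility failure is dominated by the $\Omega(\eps^2)$ success margin.
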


We remark that it is NP-hard to achieve a better than $(1-1/e)$-approximation even when
$k=1$ and the constraint is very loose ($A_{ij} = 1$ and $b_i \rightarrow \infty$) \cite{Feige98}.

\begin{proof}
The proof is similar to that of Theorem~\ref{thm:matroid+knapsacks}, but simpler.
We only highlight the main differences.

In the first stage we obtain a fractional solution such that $F(x^*) \geq (1-\eps)(1-1/e) OPT$.
Randomized swap rounding yields a random solution $R$ which satisfies the matroid constraint.
It remains to check the packing constraints. For each $i$, we have
$$ \E[ \sum_{j \in R} A_{ij} ] = \sum_{j \in N} A_{ij} x^*_j \leq (1-\eps) b_i.$$
The variables $X_j$ are negatively correlated and by Corollary~\ref{cor:pipage-chernoff}
with $\delta = \eps = \sqrt{6/c}$ and $\mu = c \log k$,
$$ \Pr[ \sum_{j \in R} A_{ij} > b_i] < e^{-\delta^2 \mu / 3} = \frac{1}{k^2}.$$
By the union bound, all packing constraints are satisfied with probability at least $1 - 1/k$.
We assume here that $k = \omega(1)$.
By using Theorem~\ref{thm:swap-rounding-chernoff}, we can also conclude that
the value of the solution is at least $(1 - 1/e - O(\eps)) OPT$ with constant probability.
\end{proof}

\section{Minimax integer programs with a matroid constraint}
\label{sec:minimax}

Minimax integer programs are motivated by applications to routing
and partitioning. The setup is as follows; we follow \cite{S06}.
We have boolean variables $x_{i,j}$ for $i \in [p]$ and $j \in [\ell_i]$
for integers $\ell_1, \ldots,\ell_p$. Let $n = \sum_{i \in [p]} \ell_i$.
The goal is to minimize $\lambda$ subject to:
\begin{itemize}
\item equality constraints: $\forall i \in [p], \sum_{j \in [\ell_i]} x_{i,j} = 1$
\item a system of linear inequalities $A x \le \lambda \b1$ where $A \in [0,1]^{m \times n}$
\item integrality constraints: $x_{i,j} \in \{0,1\}$ for all $i,j$.
\end{itemize}

The variables $x_{i,j}$, $j \in [\ell_i]$ for each $i \in [p]$ capture
the fact that exactly one option amongst the $\ell_i$ options in group
$i$ should be chosen. A canonical example is the congestion
minimization problem for integral routings in graphs where for each
$i$, the $x_{i,j}$ variables represent the different paths for routing
the flow of a pair $(s_i,t_i)$ and the matrix $A$ encodes the capacity
constraints of the edges. A natural approach is to solve the natural
LP relaxation for the above problem and then apply randomized rounding
by choosing independently for each $i$ exactly one $j \in [\ell_i]$
where the probability of choosing $j \in [\ell_i]$ is exactly equal to
$x_{i,j}$. This follows the randomized rounding method of Raghavan
and Thompson for congestion minimization \cite{RaghavanT87} and one
obtains an $O(\log m/\log \log m)$-approximation with respect to the
fractional solution. Using Lov\'asz Local Lemma (and complicated
derandomization) it is possible to obtain an improved bound of $O(\log
q/\log \log q)$ \cite{LeightonLRS01,S06} where $q$ is the maximum
number of non-zero entries in any column of $A$. This refined bound
has various applications.

Interestingly, the above problem becomes non-trivial if we make a
slight change to the equality constraints.
Suppose for each $i \in [p]$ we now have an equality constraint of the form
$\sum_{j \in [\ell_i]} x_{i,j} = k_i$ where $k_i$ is an integer. For routing,
this corresponds to a requirement of $k_i$ paths for pair $(s_i,t_i)$.
Now the standard randomized rounding doesn't quite work for this
{\em low congestion multi-path routing problem}. Srinivasan
\cite{S01}, motivated by this generalized routing problem, developed
dependent randomized rounding and used the negative correlation
properties of this rounding to obtain an $O(\log m/\log \log m)$-approximation.
This was further generalized in \cite{GKPS06} as
randomized versions of pipage rounding in the
context of other applications.

\subsection{Congestion minimization under a matroid base constraint}
\label{sec:minimax-unweighted}

Here we show that our dependent rounding in matroids allows a
clean generalization of the type of constraints considered in several
applications in \cite{S01,GKPS06}. Let $\cM$ be a matroid on a ground
set $N$. Let $B(\cM)$ be the base polytope of $\cM$. We
consider the problem
$$ \min \left\{\lambda: \exists x \in \{0,1\}^N, x \in B(\cM), Ax \le \lambda \b1 \right\} $$
where $A \in [0,1]^{m \times N}$.
We observe that the previous problem with the variables partitioned
into groups and equality constraints can be cast naturally
as a special case of this matroid constraint problem; the equality
constraints simply correspond to a partition matroid on the
ground set of all variables $x_{i,j}$.

However, our framework is much more flexible. For example, consider the spanning tree
problem with packing constraints: each edge has a weight $w_e$ and we want to
minimize the maximum load on any vertex, $\max_{v \in V} \sum_{e \in \delta(v)} w_e$.
This problem also falls within our framework.

\begin{theorem}\label{thm:minCongestion}
There is an $O(\log m / \log \log m)$-approximation for the problem
$$ \min \left\{\lambda: \exists x \in \{0,1\}^N, x \in B(\cM), Ax \le \lambda \b1 \right\}, $$
where $m$ is the number of packing constraints, i.e. $A \in [0,1]^{m \times N}$.
\end{theorem}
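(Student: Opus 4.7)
The plan is the classical Raghavan--Thompson LP-rounding template, with \textbf{SwapRound} taking the place of independent rounding so that Chernoff-type tail bounds still apply, courtesy of the negative correlation property of Theorem~\ref{thm:neg-correl}.

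I would first solve the LP relaxation $\min\{\lambda : x \in B(\cM),\ Ax \le \lambda \b1\}$. Although $B(\cM)$ has exponentially many facets, Edmonds' polynomial-time separation oracle combined with the ellipsoid method yields an optimal fractional solution $(x^*, \lambda^*)$ with $\lambda^* \le OPT$. Next, using Cunningham's combinatorial algorithm, I would write $x^* = \sum_{\ell=1}^{n} \beta_\ell \b1_{B_\ell}$ as a convex combination of at most $n$ bases, and feed this decomposition into \textbf{SwapRound} to obtain a random base $X \in \{0,1\}^N$ satisfying $\E[X_j] = x^*_j$ for every $j$.

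For each constraint $i \in [m]$, set $\mu_i = \sum_j A_{ij} x^*_j \le \lambda^*$. Since $A_{ij} \in [0,1]$ and the $X_j$ are negatively correlated, Corollary~\ref{cor:pipage-chernoff} gives
$$\Pr\!\bigl[(AX)_i \ge (1+\delta)\mu_i\bigr] \;\le\; \Bigl(\tfrac{e^\delta}{(1+\delta)^{1+\delta}}\Bigr)^{\mu_i} \;\le\; (e \mu_i/\tau)^\tau, \qquad \tau := (1+\delta)\mu_i.$$
Choosing $\tau = c\,\log m / \log\log m \cdot \max(\mu_i, 1)$ for a sufficiently large absolute constant $c$, the standard calculation $\tau \log(\tau/(e\mu_i)) \ge 2\log m$ shows this probability is at most $1/m^2$. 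A union bound over the $m$ packing constraints then gives, with probability at least $1 - 1/m$,
$$\lambda = \max_i (AX)_i \;\le\; O\!\bigl(\tfrac{\log m}{\log\log m}\bigr) \cdot \max(\lambda^*, 1).$$
To turn this into an approximation with respect to $OPT$ itself, I would handle the regime $OPT < 1$ by binary-searching $OPT$ among the $O(n)$ distinct entries of $A$ and deleting in advance every element $j$ for which some $A_{ij} > OPT$; in each remaining instance $\max(\lambda^*, 1) = O(OPT)$, so the bound above becomes the desired $O(\log m/\log\log m)$-approximation.

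The main obstacle is the Chernoff calculation delivering the sharper $\log m/\log\log m$ factor (rather than plain $\log m$): it crucially exploits the super-geometric decay of $(e\mu/\tau)^\tau$ in the rare-event regime $\tau \gg \mu$. Everything else is essentially mechanical once Theorem~\ref{thm:neg-correl} and Corollary~\ref{cor:pipage-chernoff} are in hand, and the overall running time is dominated by the LP solve, Cunningham's decomposition, and the $O(nd^2)$ membership-oracle calls of \textbf{SwapRound}.
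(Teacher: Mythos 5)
Your overall template matches the paper's: solve the LP relaxation over $B(\cM)$ with the packing constraints, decompose into a convex combination of bases, run \textbf{SwapRound}, and then invoke Corollary~\ref{cor:pipage-chernoff} with a union bound. The paper likewise binary-searches on $\lambda$ and, for each candidate $\lambda$, throws out every element $j$ with some $A_{ij} > \lambda$ before checking feasibility of the resulting LP --- this is exactly the preprocessing you describe.

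However, the step you use to handle the regime $OPT < 1$ is wrong as stated: after deleting all elements $j$ with $A_{ij} > OPT$, it is \emph{not} true that $\max(\lambda^*, 1) = O(OPT)$. The LP optimum $\lambda^*$ for the pruned instance is still at most $OPT$, so if $OPT = 0.01$ then $\max(\lambda^*,1) = 1$, which is not $O(OPT)$. Deletion alone cannot repair this because it never changes the scale. The missing ingredient --- and what the paper actually does --- is to \emph{rescale} the surviving entries by $\lambda$: set $\tilde A_{ij} = A_{ij}/\lambda$. Once elements with $A_{ij} > \lambda$ are removed, $\tilde A_{ij} \in [0,1]$ while the fractional load satisfies $\sum_j \tilde A_{ij} x^*_j \le 1$, so Corollary~\ref{cor:pipage-chernoff} applies with $\mu = 1$ and $1+\delta = \Theta(\log m/\log\log m)$, yielding $\Pr[\sum_j A_{ij}X_j > (1+\delta)\lambda] \le 1/m^2$. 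This gives the multiplicative $O(\log m/\log\log m)$ guarantee relative to $\lambda$ (and hence to $\lambda^* \le OPT$) even when $\lambda^* \ll 1$, which your unrescaled bound $(e\mu_i/\tau)^\tau$ with $\tau = c\,(\log m/\log\log m)\max(\mu_i,1)$ does not. With that rescaling inserted, the rest of your argument goes through and coincides with the paper's proof.
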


\begin{proof}
  Fix a value of $\lambda$. Let $Z(\lambda) = \{ j \mid \exists i;
  A_{ij} > \lambda \}$.  We can force $x_j = 0$ for all $j \in
  Z(\lambda)$, because no element $j \in Z(\lambda)$ can be in a
  feasible solution for $\lambda$.  In polynomial time, we can check
  the feasibility of the following LP:
$$ P_\lambda =  \left\{ x \in B(\cM): Ax \le \lambda \b1, x|_{Z(\lambda)} = 0 \right\} $$
(because we can separate over $B(\cM)$ and the additional packing
constraints efficiently).  By binary search, we can find (within
$1+\eps$) the minimum value of $\lambda$ such that $P_\lambda \neq
\emptyset$.  This is a lower bound on the actual optimum
$\lambda_{OPT}$.  We also obtain the corresponding fractional solution
$x^*$.

We apply randomized swap rounding (or randomized pipage rounding) to
$x^*$, obtaining a random set $R$.  $R$ satisfies the matroid base
constraint by definition.  Consider a fixed packing constraint (the
$i$-th row of $A$). We have
$$ \sum_{j\in N} A_{ij} x^*_j \leq \lambda $$
and all entries $A_{ij}$ such that $x^*_j > 0$ are bounded by $\lambda$.
We set $\tilde{A}_{ij} = A_{ij} / \lambda$, so that we can use Corollary~\ref{cor:pipage-chernoff}.
We get
$$ \Pr[ \sum_{j \in R} A_{ij} > (1+\delta) \lambda ] = \Pr[ \sum_{j \in R} \tilde{A}_{ij} > 1+\delta ]
 < \left( \frac{e^{\delta}}{(1+\delta)^{1+\delta}} \right)^{\mu}.$$
For $\mu = 1$ and $1+\delta = \frac{4 \log m}{\log \log m}$, this probability is bounded by
$$ \Pr[ \sum_{j \in R} A_{ij} > (1+\delta) \lambda ] \leq
\left( \frac{e \log \log m}{4 \log m} \right)^{\frac{4 \log m}{\log
    \log m}} < \left( \frac{1}{\sqrt{\log m}} \right)^{\frac{4 \log
    m}{\log \log m}} = \frac{1}{m^2} $$ for sufficiently large
$m$. Therefore, all $m$ constraints are satisfied within a factor of
$1+\delta = \frac{4 \log m}{\log \log m}$ with high probability.
\end{proof}

We remark that the approximation guarantee can be made an "almost
additive" $O(\log m)$, in the following sense: Assuming that the
optimum value is $\lambda^*$, for any fixed $\eps>0$ we can find a
solution of value $\lambda \leq (1+\eps) \lambda^* + O(\frac{1}{\eps}
\log m)$.  Scaling is important here: recall that we assumed $A \in
[0,1]^{N \times m}$.  We omit the proof, which follows by a similar
application of the Chernoff bound as above, with $\mu = \lambda^*$ and
$\delta = \eps + O(\frac{1}{\eps \lambda^*} \log m)$.

\paragraph{Minimum Stabbing and Crossing Tree Problems:}
Another interesting application of Theorem~\ref{thm:minCongestion}, is
to the minimum stabbing and crossing tree problems.  Bilo et al.~\cite{BiloGRS04},
motivated by several applications, considered the
crossing spanning tree problem. The input is a graph $G=(V,E)$ and an
explit set ${\cal C}$ of $m$ cuts in $G$. The goal is to find a
spanning tree that minimizes the number of edges crossing any cut in
${\cal C}$. The algorithm in \cite{BiloGRS04} returns a tree that
crosses any cut in ${\cal C}$ at most $O((\log m + \log n)(\gamma^*
+\log n))$ times where $\gamma^*$ is the optimal solution value; the
authors claim an improved bound of $O(\gamma^* \log n + \log m)$ in a
subsequent version of the paper.

The minimum stabbing tree problem arises in computational geometry:
the input is a set $V=\{v_1,\dots,v_n\}$ of points in $\mathbb{R}^d$;
it is assumed that $d$ is a constant and the case of $2$-dimensions is
of particular interest.  The task is to construct a spanning tree on
$V$ by connecting vertices with straight lines such that the crossing
number, which is the maximum number of edges that are intersected by
any hyperplane, is minimized. This problem was shown to be NP-hard by
Fekete et al.~\cite{Fekete04}.  It is relatively easy to see that the
stabbing tree problem is a special case of the crossing spanning
tree problem; the number of combinatorially distinct cuts induced by
the hyperplanes is $O(n^d)$, one for each set of $d$ points that
define a hyperplane through them.  Thus, the result in
\cite{BiloGRS04} implies that there is an algorithm for the stabbing
tree problem that returns a tree with crossing number $O(\lambda^*
\log n)$ where $\lambda^*$ is the tree with the smallest crossing
number (note that this is via the improved bound claimed by the
authors of \cite{BiloGRS04} in a longer version). Unaware of the work
in \cite{BiloGRS04}, HarPeled very recently \cite{HarPeled09} gave a
polynomial time algorithm for the stabbing tree problem that outputs a
tree with crossing number $O(\lambda^* \log n + \log^2 n/\log \log
n)$.

Both of the above problems can be cast as special cases of the
minimization problem presented in Theorem~\ref{thm:minCongestion},
where $\cM$ is the graphic matroid and each row of $A$ corresponds to
the incidence vector of a cut.  Theorem~\ref{thm:minCongestion}
implies that using dependent randomized rounding, an $O(\log n
/\log\log n)$-approximation can be obtained for the stabbing tree
problem and an $O(\log m/\log \log m)$-approximation for the crossing
spanning tree problem. The approximation guarantee can be transformed
into an almost additive one as well, leading to a solution of value
$\lambda \leq (1+\eps) \lambda^*+O(\frac{1}{\eps}\log n)$ for the
stabbing tree problem and a solution of value $\gamma \leq (1+\eps)
\gamma^*+O(\frac{1}{\eps}\log m)$ for the crossing spanning tree
problem. Note that these additive results imply a constant factor
approximation if the optimal value is $\Omega(\log n)$ and
$\Omega(\log m)$ respectively.

We remark that the results we obtain for the above problems can also
be obtained by the maximum entropy sampling approach for
spanning trees from \cite{AGMGS10}; our algorithms have the advantage
of being simpler and more efficient.

\subsection{Min-cost matroid bases with packing constraints}
\label{sec:minimax-weighted}

We can similarly handle the case where in addition we want to minimize
a linear objective function.  An example of such a problem would be a
multi-path routing problem minimizing the total cost in addition to
congestion. Another example is the minimum-cost spanning tree with
packing constraints for the edges incident with each vertex. We remark
that in case the packing constraints are simply degree bounds, strong
results are known - namely, there is an algorithm that finds a
spanning tree of optimal cost and violating the degree bounds by at
most one \cite{SL07}. In the general case of finding a matroid base
satisfying certain "degree constraints", there is an algorithm
\cite{KLS08} that finds a base of optimal cost and violating the
degree constraints by an additive error of at most $\Delta - 1$, where
each element participates in at most $\Delta$ constraints
(e.g. $\Delta=2$ for degree-bounded spanning trees).  The algorithm of
\cite{KLS08} also works for upper and lower bounds, violating each
constraint by at most $2 \Delta - 1$. See \cite{KLS08} for more
details.

We consider a variant of this problem where the packing constraints
can involve arbitrary weights and capacities.  We show that we can
find a matroid base of near-optimal cost which violates the packing
constraints by a multiplicative factor of $O(\log m / \log \log m)$,
where $m$ is the total number of packing constraints.

\begin{theorem}
  There is a $(1+\eps,O(\log m / \log \log m))$-bicriteria
  approximation for the problem
$$ \min \left\{ c^T x : x \in \{0,1\}^N, x \in B(\cM), Ax \le b \right\}, $$
where $A \in [0,1]^{m \times N}$ and $b \in \RR^N$; the first
guarantee is w.r.t.~the cost of the solution and the second guarantee
w.r.t.~the overflow on the packing constraints.
\end{theorem}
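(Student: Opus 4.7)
The plan is to follow the same template as the proof of Theorem~\ref{thm:minCongestion}, solving an LP relaxation that now carries the linear cost as its objective, and then rounding by randomized swap rounding.

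First I would preprocess to remove elements that cannot appear in any integral feasible solution. Let $Z = \{j \in N : \exists i,\ A_{ij} > b_i\}$. For every integral $x$ with $Ax \le b$, we must have $x_j = 0$ for each $j \in Z$, since otherwise constraint $i$ alone is violated by element $j$. Hence restricting to $N \setminus Z$ does not change the integer optimum. Next, solve in polynomial time the LP
$$ \min \{ c^T x : x \in B(\cM),\ Ax \le b,\ x|_Z = 0 \}, $$
which is tractable because we can separate over $B(\cM)$ and over the packing constraints. Let $x^*$ be an optimum; since any integer feasible solution is LP-feasible, $c^T x^* \le \mathit{OPT}$. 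Apply randomized swap rounding to $x^*$ to obtain a random base $R \in B(\cM)$, so the matroid base constraint is satisfied with probability one.

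For the packing side, rescale $\tilde{A}_{ij} = A_{ij} / b_i$; on the support of $x^*$ we have $\tilde{A}_{ij} \in [0,1]$, and $\sum_j \tilde{A}_{ij} x^*_j \le 1$. Applying Corollary~\ref{cor:pipage-chernoff} with $\mu = 1$ and $1+\delta = 4 \log m / \log \log m$, exactly as in the proof of Theorem~\ref{thm:minCongestion}, the probability that the $i$-th constraint is violated by more than an $O(\log m / \log \log m)$ factor is at most $1/m^2$, and a union bound gives that all $m$ constraints are within this factor with high probability. For the cost, $\E[c^T \b1_R] = c^T x^* \le \mathit{OPT}$, so Markov's inequality yields $\Pr[c^T \b1_R > (1+\eps)\mathit{OPT}] \le 1/(1+\eps)$. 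Thus both guarantees hold simultaneously with probability $\Omega(\eps)$, and we boost this to any desired constant by $O(1/\eps)$ independent repetitions, returning the best solution that meets both criteria.

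The main obstacle is the asymmetry between the two objectives: Chernoff-type concentration from Theorem~\ref{thm:neg-correl} gives a very sharp high-probability bound on the packing violation, but we cannot expect the same strength for the cost, because in general a single element may have $c_j$ comparable to $\mathit{OPT}$, so the Chernoff parameter $\mu = c^T x^* / c_{\max}$ need not be large. This is why I fall back on Markov's inequality and the amplification trick for the cost side, exactly matching the weaker $(1+\eps)$ multiplicative guarantee claimed in the statement. The same scheme extends to two-sided packing/covering constraints by adding the corresponding inequalities to the LP; only the upper-tail Chernoff bound needs to be complemented with the lower-tail bound from Corollary~\ref{cor:pipage-chernoff}, which follows from the negative correlation of Theorem~\ref{thm:neg-correl} in the same way.
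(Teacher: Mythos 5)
Your proof is correct and follows essentially the same approach as the paper: the same preprocessing to discard elements $j$ with $A_{ij} > b_i$, solving the LP relaxation, randomized swap rounding, the rescaled Chernoff bound from Corollary~\ref{cor:pipage-chernoff} for the packing constraints (invoking the analysis of Theorem~\ref{thm:minCongestion}), and Markov's inequality for the cost. The only additions you make are cosmetic: the explicit $O(1/\eps)$-round amplification (the paper simply states the $\Omega(\eps)$ success probability), and the correct observation that a Chernoff-style bound is unavailable on the cost side because $c_{\max}$ may be comparable to $\mathit{OPT}$.
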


\begin{proof}
  We give a sketch of the proof. First, we throw away all elements
  that on their own violate some packing constraint. Then, we solve
  the following LP:
  $$ \min \left\{ c^T x : x \in B(\cM), Ax \leq b \right\}.$$
  Let the optimum solution be $x^*$. We apply randomized swap rounding
  (or randomized pipage rounding) to $x^*$, yielding a random solution
  $R$.  Since each of the $m$ constraints is satisfied in expectation,
  and each element alone satisfies each packing constraint, we get by
  the same analysis as above that with high probability, $R$ violates
  every constraint by a factor of $O(\log m / \log \log m)$.

  Finally, the expected cost of our solution is $c^T x^* \leq OPT$.
  By Markov's inequality, the probability that $c(R) > (1+\eps) OPT$
  is at most $1 / (1+\eps) \leq 1 - \eps/2$. With probability at least
  $\eps/2 - o(1)$, $c(R) \leq (1+\eps) OPT$ and all packing
  constraints are satisfied within $O(\log m / \log \log m)$.
\end{proof}

Let us rephrase this result in the more familiar setting of spanning
trees.  Given packing constraints on the edges incident with each
vertex, using arbitrary weights and capacities, we can find a spanning
tree of near-optimal cost, violating each packing constraint by a
multiplicative factor of $O(\log m / \log \log m)$.  As in the
previous section, if we assume that the weights are in $[0,1]$, this
can be replaced by an additive factor of $O(\frac{1}{\eps} \log m)$
while making the multiplicative factor $1+\eps$ (see the end of
Section~\ref{sec:minimax-unweighted}).

In the general case of matroid bases, our result is incomparable to
that of \cite{KLS08}, which provides an additive guarantee of
$\Delta-1$. (The assumption here is that each element participates in
at most $\Delta$ degree constraints; in our framework, this
corresponds to $A \in \{0,1\}^{m \times N}$ with $\Delta$-sparse
columns.)  When elements participate in many degree constraints
($\Delta \gg \log m$) and the degree bounds are $b_i = O(\log m)$, our
result is actually stronger in terms of the packing constraint
guarantee.

\paragraph{Asymmetric Traveling Salesman and Maximum Entropy Sampling:}
In a recent breakthrough, \cite{AGMGS10} obtained an $O(\log n/\log
\log n)$-approximation for the ATSP problem. A crucial ingredient in
the approach is to round a point $x$ in the spanning tree
polytope to a tree $T$ such that no cut of $G$ contains too many edges of $T$,
and the cost of the tree is within a constant factor of the cost of $x$.
For this purpose, \cite{AGMGS10} uses the maximum entropy sampling approach which
also enjoys negative correlation properties and hence one can get
Chernoff-type bounds for linear sums of the variables; moreover $T$ contains
each edge $e$ with probability $x_e$.  We note
that the number of cuts is exponential in $n$. To address this issue,
\cite{AGMGS10} uses Karger's result on the number of cuts in a graph
within a certain weight range: assuming that the minimum cut is at least $1$,
there are only $O(n^{2\alpha})$ cuts of weight in $(\alpha/2, \alpha]$
for any $\alpha \geq 1$. Maximum entropy sampling is technically quite involved
and also computationally expensive.
Our rounding procedures can be used
in place of maximum entropy sampling to simplify the algorithm and
the analysis in \cite{AGMGS10}.

\section{Multiobjective optimization with submodular functions}
\label{sec:multiobj}

In this section, we consider the following problem: Given a matroid
$\cM = (N,\cI)$ and $k$ monotone submodular functions
$f_1,\ldots,f_k:2^N \rightarrow \RR_+$, in what sense can we maximize
$f_1(S),\ldots,f_k(S)$ simultaneously over $S \in \cI$?  This question
has been studied in the framework of {\em multiobjective
  optimization}, popularized in the CS community by the work of
Papadimitriou and Yannakakis \cite{PY00}.  The set of all solutions
which are optimal with respect to $f_1(S),\ldots,f_k(S)$ is captured
by the notion of a {\em pareto set}: the set of all solutions $S$ such
that for any other feasible solution $S'$, there exists $i$ for which
$f_i(S') < f_i(S)$.  Since the pareto set in general can be
exponentially large, we settle for the notion of a $\eps$-approximate
pareto set, where the condition is replaced by $f_i(S') < (1+\eps)
f_i(S)$. Papadimitriou and Yannakakis show the following equivalence
\cite[Theorem 2]{PY00}:

\begin{proposition}\label{prop:PY}
An $\eps$-approximate pareto set can be found in polynomial time, if
and only if the following problem can be solved: Given
$(V_1,\ldots,V_k)$, either return a solution with $f_i(S) \geq V_i$
for all $i$, or answer that there is no solution such that $f_i(S)
\geq (1+\eps) V_i$ for all $i$.
\end{proposition}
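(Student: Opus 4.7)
The plan is to prove both directions of the equivalence, the backward one being the substantive part. For the $(\Rightarrow)$ direction, suppose we already have an $\eps$-approximate pareto set $\mathcal{P}$ of polynomial size. To answer a gap query with targets $(V_1,\ldots,V_k)$, I would simply scan $\mathcal{P}$: if some $S\in\mathcal{P}$ satisfies $f_i(S)\geq V_i$ for all $i$, return it. Otherwise I claim there is no feasible $S$ with $f_i(S)\geq(1+\eps)V_i$ for every $i$, since by the definition of an approximate pareto set such an $S$ would be dominated by some $S'\in\mathcal{P}$ with $f_i(S')\geq f_i(S)/(1+\eps)\geq V_i$, contradicting the outcome of the scan.

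For the $(\Leftarrow)$ direction, the idea is to query the gap oracle on a polynomial-size geometric grid of targets. Let $F$ be an a priori upper bound on $\max_{i,S} f_i(S)$; I would assume, as is standard in the Papadimitriou--Yannakakis framework, that $\log F$ is polynomially bounded in the input size (which holds for the value-oracle submodular functions considered in this paper). Pick $\eps'$ with $(1+\eps')^2\leq 1+\eps$, e.g.\ $\eps'=\eps/3$, and consider all grid points $(V_1,\ldots,V_k)$ with each $V_i\in\{0\}\cup\{(1+\eps')^{j_i}:0\leq j_i\leq\log_{1+\eps'}F\}$. Since $k$ is a fixed constant, there are only polynomially many such points. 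For each, invoke the oracle and collect any solution it returns into $\mathcal{P}$.

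To verify that $\mathcal{P}$ is an $\eps$-approximate pareto set, take an arbitrary feasible $S$ and let $V_i$ be the largest grid value with $(1+\eps')V_i\leq f_i(S)$ (or $0$ if $f_i(S)$ is below the lowest grid value, which we handle by setting a tiny threshold and absorbing the loss into $\eps$). Then $V_i\geq f_i(S)/(1+\eps')^2$. On the query $(V_1,\ldots,V_k)$ the oracle cannot refuse, because $S$ itself witnesses the existence of a solution with $f_i\geq(1+\eps')V_i$; so it returns some $S'\in\mathcal{P}$ with $f_i(S')\geq V_i\geq f_i(S)/(1+\eps')^2\geq f_i(S)/(1+\eps)$, as required.

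The main technical obstacle is bookkeeping the chain of approximation losses: one factor of $1+\eps'$ comes from rounding $f_i(S)$ down to the nearest grid point, a second factor of $1+\eps'$ comes from the slack built into the gap oracle, and these must compose to at most $1+\eps$. A small auxiliary nuisance is the zero-coordinate boundary of the grid, which I would handle by introducing a uniform lower cutoff at something like $\eps F/\mathrm{poly}(n)$ and noting that solutions whose $f_i$-value falls below this cutoff on some coordinate contribute negligibly compared to the grid-based solutions already in $\mathcal{P}$. Once these accounting issues are settled, the construction runs in polynomial time whenever the gap oracle does.
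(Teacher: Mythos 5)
The paper does not prove this proposition; it is quoted as Theorem~2 of Papadimitriou and Yannakakis \cite{PY00}, so there is no in-paper proof to compare against. Your argument follows the standard PY construction and is the right route: the forward direction is a direct scan of the given pareto set, and the backward direction builds a pareto set by querying the gap oracle on a geometric grid, with two factors of $1+\eps'$ composed so that $(1+\eps')^2\leq 1+\eps$.

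One issue worth making explicit. In the backward direction you need the gap oracle to be invoked with slack $1+\eps'$, not the $1+\eps$ of the proposition statement. You choose $V_i$ to be the largest grid value with $(1+\eps')V_i\leq f_i(S)$ and argue the oracle cannot refuse because $S$ witnesses $f_i(S)\geq(1+\eps')V_i$. But the oracle as literally stated is only forced to answer when there is a witness with $f_i(S)\geq(1+\eps)V_i$, a strictly stronger condition since $\eps'<\eps$; a witness at level $(1+\eps')V_i$ does not compel it. If one reads the proposition with a single fixed $\eps$, the geometric-grid argument yields only a $\big((1+\eps)^2-1\big)$-approximate pareto set. Your proof, like PY's original one, implicitly uses a gap oracle available for every slack parameter $\eps'>0$; this is the intended reading of \cite[Theorem 2]{PY00}, and the paper's restatement inherits the same imprecision. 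Making this explicit --- call the gap oracle with parameter $\eps'$ rather than $\eps$ --- closes the argument. Everything else in your proof is sound; the zero-boundary cutoff you hand-wave is indeed routine, and for monotone submodular $f_i$ with $f_i(\emptyset)=0$ one can anchor the grid at $\min_j f_i(\{j\})$ since any nonempty $S$ has $f_i(S)\geq \max_{j\in S} f_i(\{j\})$.
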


The latter problem is exactly what we address in this section. We show
the following result.

\begin{theorem}
\label{thm:submod-pareto}
  For any fixed $\eps>0$ and $k \geq 2$, given a matroid $\cM =
  (N,\cI)$, monotone submodular functions $f_1,\ldots,f_k: 2^N
  \rightarrow \RR_+$, and values $V_1,\ldots,V_k \in \RR_+$, in
  polynomial time we can either
\begin{itemize}
\item find a solution $S \in \cI$ such that $f_i(S) \geq (1-1/e-\eps)
  V_i$ for all $i$, or
\item return a certificate that there is no solution with $f_i(S) \geq
  V_i$ for all $i$.
\end{itemize}
If $f_i(S)$ are linear functions, the guarantee in the first case
becomes $f_i(S) \geq (1-\eps) V_i$.
\end{theorem}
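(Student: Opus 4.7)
My plan is to invoke Proposition~\ref{prop:PY} to reduce the problem to the approximate feasibility question, then solve that question via a multi-objective extension of the continuous greedy algorithm, and finish by rounding with \textbf{SwapRound} and invoking the concentration bounds of Theorem~\ref{thm:swap-rounding-chernoff}. As in Section~\ref{sec:matroid+knapsacks}, I first enumerate all candidate sets $A$ of size $O(k/\eps^c)$ for a suitable constant $c$: any element $e$ whose marginal $f_j(A+e)-f_j(A)$ exceeds $\eps^c V_j$ for some $j$ can appear only $O(\eps^{-c})$ times in any feasible solution, so guessing the heavy part of the optimum per objective costs only constantly many configurations. After fixing $A$, contract it in $\cM$, work in the contracted matroid $\cM' = \cM/A$ with residual targets $W_j = V_j - f_j(A)$, and replace each $f_j$ by its contraction $g_j(S) = f_j(A\cup S) - f_j(A)$, whose marginals are now bounded by $\eps^c W_j$.

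\textbf{Multi-objective continuous greedy.} On the reduced instance I run the ODE $y'(t) = v(t)$ for $t\in[0,1]$, starting at $y(0) = 0$, where at each time $v(t) \in P(\cM')$ is chosen to solve the linear feasibility problem
$$v\cdot \nabla G_j(y(t)) \;\geq\; W_j - G_j(y(t)) \qquad \text{for all } j=1,\dots,k.$$
Since $P(\cM')$ admits a polynomial-time separation oracle and there are only $k$ extra linear constraints, this LP can be solved in polynomial time, with gradients estimated by sampling. If the LP is ever infeasible, I claim no solution achieves $f_j\geq V_j$ simultaneously: for any such $O$, setting $O' = O\setminus A$ gives $g_j(O') \geq W_j$ by submodularity and monotonicity, and concavity of $G_j$ along the segment from $y(t)$ to $\b1_{O'}$ (a standard property of multilinear extensions of monotone submodular functions) yields $\b1_{O'} \cdot \nabla G_j(y(t)) \geq G_j(\b1_{O'}) - G_j(y(t)) \geq W_j - G_j(y(t))$ for every $j$, contradicting infeasibility. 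So LP infeasibility is a valid certificate. Otherwise $\frac{d}{dt} G_j(y(t)) \geq v(t) \cdot \nabla G_j(y(t)) \geq W_j - G_j(y(t))$ integrates to $G_j(y(1)) \geq (1-1/e)W_j$ for all $j$, up to the usual $(1-\eps)$-factor loss from time discretization and sampling.

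\textbf{Rounding and concentration.} I then apply \textbf{SwapRound} to $y(1)$, producing a random $R \in \cI(\cM')$. For each $j$, normalize $\tilde g_j = g_j/(\eps^c W_j)$ so marginals lie in $[0,1]$; the starting value satisfies $\tilde\mu_0 = G_j(y(1))/(\eps^c W_j) \geq (1-1/e)/\eps^c$. Theorem~\ref{thm:swap-rounding-chernoff} with $\delta = \eps$ gives
$$\Pr[g_j(R) \leq (1-\eps)(1-1/e) W_j] \;\leq\; \exp\!\left(-(1-1/e)\,\eps^{2-c}/8\right),$$
which, taking $c$ a sufficiently large constant, is $o(1/k)$ for any fixed $k$. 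A union bound over $j$ yields $g_j(R) \geq (1-1/e-O(\eps)) W_j$ for every $j$ with constant probability, and hence $f_j(A\cup R) = f_j(A) + g_j(R) \geq (1-1/e-O(\eps))V_j$ simultaneously. Repeating a constant number of times drives the success probability arbitrarily close to $1$. For linear $f_j$ the continuous greedy is unnecessary: solve the LP $\{x\in P(\cM') : w_j\cdot x \geq W_j \ \forall j\}$ directly (its infeasibility is a certificate), apply \textbf{SwapRound}, and invoke Corollary~\ref{cor:pipage-chernoff} in place of Theorem~\ref{thm:swap-rounding-chernoff} to obtain $(1-\eps)V_j$ instead of $(1-1/e-\eps)V_j$.

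\textbf{Main obstacle.} The principal technical point is proving that the multi-objective continuous greedy legitimately certifies infeasibility even after time discretization and sampling: one must ensure that an approximately infeasible direction-finding LP at some discrete step still precludes the existence of an integral solution achieving all $V_j$. The argument sketched above works cleanly in the continuous, exact-gradient setting via concavity of $G_j$ along nonnegative directions; carrying it across the discretization requires choosing the step size and sample sizes so that the slack introduced into the LP is absorbed by an $\eps$-fraction of $W_j$, which is exactly what the pre-processing step (enumerating heavy elements so that marginals are $\eps^c W_j$) is designed to support.
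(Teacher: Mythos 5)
Your overall route matches the paper's: partial enumeration to control marginal values, the multi-objective continuous greedy (Lemma~\ref{lemma:multi-cont-greedy}) to obtain a fractional $y^*$ with $G_j(y^*)\geq(1-1/e)$ times the target for every $j$ or a certificate of infeasibility, then \textbf{SwapRound} with the lower-tail estimate of Theorem~\ref{thm:swap-rounding-chernoff} and a union bound over $j$, with Corollary~\ref{cor:pipage-chernoff} substituted in the linear case. However, there is a real gap in the rounding step. You replace $f_j$ by the contraction $g_j$ and the targets $V_j$ by the residuals $W_j = V_j - f_j(A)$, and then normalize $\tilde g_j = g_j/(\eps^c W_j)$, claiming the marginals of $\tilde g_j$ lie in $[0,1]$. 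But the partial enumeration only guarantees marginals of the remaining elements are at most $\eps^c V_j$, \emph{not} $\eps^c W_j$. When $f_j(A)$ is close to $V_j$ (so $W_j\ll V_j$), the normalized marginals can be much larger than $1$, and Theorem~\ref{thm:swap-rounding-chernoff} does not apply. The paper avoids this by keeping $V_j$ in the normalization throughout: it bounds $\Pr[f_j(R)<(1-\delta)F_j(y^*)]\leq e^{-\delta^2 F_j(y^*)/(8\eps^3 V_j)}\leq e^{-\delta^2/(16\eps^3)}$, using only $F_j(y^*)\geq(1-1/e)V_j$, and sets $\delta=\eps$. If you insist on contracting $A$, you should still normalize by $\eps^c V_j$, take $\delta=\eps V_j/G_j(y^*)$, and dispose of the degenerate case $G_j(y^*)<\eps V_j$ separately: there $(1-1/e)W_j\leq G_j(y^*)<\eps V_j$ forces $W_j = O(\eps)V_j$, so $f_j(A)$ already nearly meets the $j$-th target and no concentration is needed.

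A second, smaller imprecision: the concavity you invoke to obtain $\b1_{O'}\cdot\nabla G_j(y(t))\geq W_j - G_j(y(t))$ is concavity of the multilinear extension along \emph{nonnegative} directions; the segment from $y(t)$ to $\b1_{O'}$ can have negative components, and $G_j$ is in general convex along such directions. The correct derivation goes via $y(t)\vee\b1_{O'}$: monotonicity gives $G_j(y(t)\vee\b1_{O'})\geq G_j(\b1_{O'})\geq W_j$, the increment $y(t)\vee\b1_{O'}-y(t)$ is nonnegative and coordinatewise bounded by $\b1_{O'}$, and concavity along that nonnegative direction yields the desired inequality. Your conclusion is correct; the stated justification is not.
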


This together with Proposition~\ref{prop:PY} implies that for any constant
number of linear objective functions subject to a matroid constraint, an
$\eps$-approximate pareto set can be found in polynomial time.  (This
was known in the case of multiobjective spanning trees \cite{PY00}.)
Furthermore, a straightforward modification of Prop.~\ref{prop:PY}
(see \cite{PY00}, Theorem 2) implies that for monotone submodular
functions $f_i(S)$, we can find a $(1-1/e-\eps)$-approximate pareto
set.

Our algorithm requires a modification of the continuous greedy algorithm
from \cite{Vondrak08,CCPV09}. We show the following, which might be
useful in other applications as well. In the following lemma,
we do not require $k$ to be constant.

\begin{lemma}
\label{lemma:multi-cont-greedy}
Consider monotone submodular functions $f_1,\ldots,f_k:2^N \rightarrow \RR_+$,
their multilinear extensions $F_i(x) = \E[f_i(\hat{x})]$
and a down-monotone polytope $P \subset \RR_+^N$ such that we can optimize
linear functions over $P$ in polynomial time. Then given $V_1,\ldots,V_k \in \RR_+$
we can either
\begin{itemize}
\item find a point $x \in P$ such that $F_i(x) \geq (1-1/e) V_i$ for all $i$, or
\item return a certificate that there is no point $x \in P$ such that $F_i(x) \geq V_i$ for all $i$.
\end{itemize}
\end{lemma}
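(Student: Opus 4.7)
The plan is to run a variant of the continuous greedy process of \cite{Vondrak08,CCPV09} in which, at each infinitesimal step, instead of maximizing a single linear functional over $P$, one solves a feasibility problem simultaneously accommodating all $k$ objectives. Initialize $x(0) = \b0$. At time $t$, given $x(t) \in P$, attempt to find $v \in P$ satisfying, for every $i \in [k]$,
$$ v \cdot \nabla F_i(x(t)) \;\geq\; V_i - F_i(x(t)). $$
If such a $v$ exists, set $\dot x(t) = v$; otherwise halt and output a Farkas certificate, i.e.\ nonnegative multipliers $\lambda_1, \dots, \lambda_k$ with $\max_{v \in P} \sum_i \lambda_i\, v\cdot \nabla F_i(x(t)) < \sum_i \lambda_i (V_i - F_i(x(t)))$. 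Since we can optimize linear functions over $P$, this auxiliary LP (and its Farkas dual) can be solved by the ellipsoid method.

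Suppose the process reaches $t = 1$. Then $x(1) = \int_0^1 v(s)\,ds$ is a convex combination of points in $P$, hence $x(1) \in P$, and $x(t) \in P$ for all $t$ by down-monotonicity. Along the trajectory,
$$ \frac{d}{dt} F_i(x(t)) \;=\; v(t) \cdot \nabla F_i(x(t)) \;\geq\; V_i - F_i(x(t)), $$
so $g_i(t) := V_i - F_i(x(t))$ satisfies $\dot g_i \leq -g_i$ with $g_i(0) \leq V_i$, giving $g_i(1) \leq V_i/e$, i.e.\ $F_i(x(1)) \geq (1-1/e) V_i$ for every $i$.

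The crux of the argument is that an infeasibility certificate at any time $t$ rules out the existence of any $x^* \in P$ with $F_i(x^*) \geq V_i$ for all $i$. This follows from the standard first-order inequality for the multilinear extension of a monotone submodular function: for $x, y \geq \b0$,
$$ y \cdot \nabla F_i(x) \;\geq\; (y-x)^+ \cdot \nabla F_i(x) \;\geq\; F_i(x \vee y) - F_i(x) \;\geq\; F_i(y) - F_i(x), $$
where the middle inequality is obtained by raising coordinates of $x$ to $x \vee y$ one at a time and using that marginal gains of a submodular function only decrease as other coordinates grow, and the last inequality is monotonicity. Applying this with $y = x^*$ shows that $v = x^*$ would have satisfied the feasibility LP at $x(t)$, contradicting infeasibility. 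Hence the Farkas certificate indeed witnesses the global statement ``no $x \in P$ has $F_i(x) \geq V_i$ for all $i$''.

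The remaining obstacle, and the main source of technical overhead, is to realize the continuous scheme in polynomial time; I would handle this with standard ingredients. Discretize $[0,1]$ into $1/\delta$ steps of width $\delta$ and estimate $F_i(x(t))$ and $\nabla F_i(x(t))$ at each step by averaging over polynomially many independent samples drawn from the product distribution with marginals $x(t)$. Chernoff bounds together with a union bound over $O(k/\delta)$ events give accuracy $\eta$ with high probability. Solving the feasibility LP with an additive slack of $2\eta$ ensures that ``approximately feasible'' implies truly feasible with slack $\eta$ (enough for the ODE argument to go through, losing only a $(1-o(1))$ factor), while ``approximately infeasible'' implies truly infeasible (so the certificate remains valid). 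Choosing $\eta$ and $\delta$ small enough, the errors are absorbed exactly as in the single-objective analysis, yielding the desired guarantee.
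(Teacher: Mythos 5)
Your proof is correct and follows essentially the same route as the paper's: a multi-objective continuous greedy in which each step solves an LP feasibility problem over $P$ for a direction $v$ with $v\cdot\nabla F_i \geq V_i - F_i(x(t))$ for all $i$, returning a Farkas certificate on infeasibility, and the ODE bound $\dot g_i \le -g_i$ then gives $F_i(x(1)) \ge (1-1/e)V_i$. You spell out the first-order inequality $y\cdot\nabla F_i(x)\ge F_i(y)-F_i(x)$ and the discretization/sampling issues explicitly, where the paper simply defers to Section~2.3 of \cite{CCPV09}; the underlying argument is the same.
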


\begin{proof}
We refer to Section 2.3 of \cite{CCPV09} for intuition and notation.
Assuming that there is a solution $S \in \cI$
achieving $f_i(S) \geq V_i$, Section 2.3 in \cite{CCPV09} implies that
for any fractional solution $y \in P(\cM)$ there is a direction
$v^*(y) \in P(\cM)$ such that $v^*(y) \cdot \nabla F_i(y) \geq V_i -
F_i(y)$.  Moreover, the way this direction is constructed is by going
towards the actual optimum - i.e., this direction is the same for all
$i$. Assuming that such a direction exists, we can find it by linear
programming. If the LP is infeasible, we have a certificate that there
is no solution satisfying $f_i(S) \geq V_i$ for all $i$.  Otherwise,
we follow the continuous greedy algorithm and the analysis implies
that
$$ \frac{dF_i}{dt} \geq v^*(y(t)) \cdot \nabla F(y(t)) \geq V_i - F_i(y(t)) $$
which implies $F_i(y(1)) \geq (1-1/e) V_i$.
\end{proof}

Given Lemma~\ref{lemma:multi-cont-greedy}, we sketch the proof of
Theorem~\ref{thm:submod-pareto} as follows. First, we guess a constant
number of elements so that for each remaining element $j$, the
marginal value for each $i$ is at most $\eps^3 V_i$.  In the
following, we just assume that $f_i(j) \leq \eps^3 V_i$ for all $i,j$.
For each objective function $f_i$, we consider the multilinear
relaxation of the problem:
$$ \max \{ F_i(x): x \in P(\cM) \} $$
where $F_i(x) = \E[f_i(\hat{x})]$.  We apply
Lemma~\ref{lemma:multi-cont-greedy} to find a fractional solution
$y^*$ satisfying $F_i(y^*) \geq (1-1/e) V_i$ for all $i$ (or a
certificate that there is no solution $y \in P(\cM)$ such that $F_i(y)
\geq V_i$ for all $i$; this implies that there is no feasible solution
$S$ such that $f_i(S) \geq V_i$ for all $i$).  For linear objective
functions, the problem is much simpler: then $F_i(x)$ are linear
functions and we can find a fractional solution satisfying $F_i(y^*)
\geq V_i$ directly by linear programming.

We apply randomized swap rounding to $y^*$, to obtain a random
solution $R \in \cI$ satisfying the lower-tail concentration bound of
Theorem~\ref{thm:swap-rounding-chernoff}. The marginal values of $f_i$
are bounded by $\eps^3 V_i$, so by standard scaling we obtain
$$ \Pr[f_i(R) < (1-\delta) F_i(y^*)] < e^{-\delta^2 F_i(y^*) / 8 \eps^3 V_i}
\leq e^{-\delta^2 / 16 \eps^3}. $$ Hence, we can set $\delta = \eps$
and obtain error probability at most $e^{-1 / 16 \eps}$.  By the union
bound, the probability that $f_i(R) < (1-\eps) F_i(y^*)$ for any $i$
is at most $k e^{-1 / 16 \eps}$. For sufficiently small $\eps>0$, this
is a constant probability smaller than $1$. Then, $f_i(R) \geq
(1-1/e-\eps) V_i$ for all $i$. This proves Theorem~\ref{thm:submod-pareto}.

To conclude, we are able to find a $(1-1/e-\eps)$-approximate
pareto set for any constant number of monotone submodular functions
and any matroid constraint.  This has a natural interpretation
in the setting of the Submodular Welfare Problem (which is
a special case, see \cite{FNW78,LLN06}). Then each objective function
$f_i(S)$ is the utility function of a player, and we want to find a
pareto set with respect to all possible allocations. To summarize, we
can find a set of all allocations that are not dominated by any other
allocation within a factor of $1-1/e-\eps$ per player.

\paragraph{Acknowledgments:} CC and JV thank Anupam Gupta
for asking about the approximability of maximizing a monotone
submodular set function subject to a matroid constraint and a constant
number of knapsack constraints; this motivated their work.  JV thanks
Ilias Diakonikolas for fruitful discussions concerning multiobjective
optimization that inspired the application in
Section~\ref{sec:multiobj}.  RZ is grateful to Michel Goemans for
introducing him to a version of Shannon's switching game that inspired
the randomized swap rounding algorithm. We thank Mohit Singh for pointing
out \cite{BiloGRS04,HarPeled09}.


\appendix

\section{Randomized pipage rounding}
\label{sec:pipage-rounding}

Let us summarize the pipage rounding technique in the context of matroid polytopes \cite{CCPV07,CCPV09}.
The basic version of the technique assumes that we start with a point in the matroid base
polytope, and we want to round it to a vertex of $B(\cM)$. In each step, we have a fractional
solution $y \in B(\cM)$ and a {\em tight set} $T$ (satisfying $y(T) = r(T)$) containing at least
two fractional variables. We modify the two fractional variables in such a way that their sum
remains constant, until some variable becomes integral or a new constraint becomes tight.
If a new constraint becomes tight, we continue with a new tight set, which can be shown to be
a proper subset of the previous tight set \cite{CCPV07,CCPV09}.
Hence, after $n$ steps we produce a new integral variable, and the process terminates after $n^2$ steps.

In the randomized version of the technique, each step is randomized in such a way
that the expectation of each variable is preserved.
Here is the randomized version of pipage rounding \cite{CCPV09}.
The subroutine {\bf HitConstraint}$(y,i,j)$ starts from $y$ and tries to increase $y_i$
and decrease $y_j$ at the same rate, as long as the the solution is inside $B(\cM)$.
It returns a new point $y$ and a tight set $A$, which would be violated if we go any further.
This is used in the main algorithm {\bf PipageRound}$(\cM,y)$, which repeats the process
until an integral solution in $B(\cM)$ is found.

\vspace{-10pt}
\begin{tabbing}
\ \ \ \= \ \ \= \ \ \= \ \ \= \ \ \= \ \ \  \\
{\em Subroutine} {\bf HitConstraint}($y$, $i$, $j$): \\
\> Denote ${\cal A} = \{A \subseteq X: i \in A, j \notin A \}$; \\
\> Find $\delta = \min_{A \in {\cal A}} (r_\cM(A)-y(A))$ \\
\> \ \ \ \ \ and a set $A \in {\cal A}$ attaining the above minimum; \\
\> If $y_j < \delta$ then ~$\{ \delta \leftarrow y_j,
  \ A \leftarrow \{j\} \}$;\\
\> $y_i \leftarrow y_i + \delta, \ y_j \leftarrow y_j - \delta$; \\
\> Return $(y,A)$. \\
\> \\
{\em Algorithm} {\bf PipageRound}($(\cM,y)$): \\
\> While ($y$ is not integral) do \\
\> \> $T \leftarrow X$;  \\
\> \> While ($T$ contains fractional variables) do \\
\> \> \> Pick $i,j \in T$ fractional; \\
\> \> \> $(y^+, A^+) \leftarrow {\bf HitConstraint}(y,i,j)$; \\
\> \> \> $(y^-, A^-) \leftarrow {\bf HitConstraint}(y,j,i)$; \\
\> \> \> $p \leftarrow ||y^+ - y|| / ||y^+ - y^-||$; \\
\> \> \> With probability $p$, ~$\{ y \leftarrow y^-$,  $T \leftarrow T \cap A^- \}$; \\
\> \> \> ~~~~~~~~~~~~~~~~~~~Else ~~$\{ y \leftarrow y^+$, $T \leftarrow T \cap A^+ \}$;\\
\> \> EndWhile \\
\> EndWhile \\
\> Output $y$.
\end{tabbing}

Subsequently \cite{Vondrak09}, pipage rounding was extended to the case when
the starting point is in the matroid polytope $P(\cM)$, rather than $B(\cM)$.
This is not an issue in \cite{CCPV09}, but it is necessary for applications
with non-monotone submodular functions \cite{Vondrak09} or with additional
constraints, such as in this paper.

The following procedure takes care of the case when we start with a fractional
solution $x \in P(\cM)$. It adjusts the solution in a randomized way so that
the expectation of each variable is preserved, and the new fractional solution
is in the base polytope of a (possibly reduced) matroid.

\vspace{-10pt}
\begin{tabbing}
\ \ \ \= \ \ \= \ \ \= \ \ \= \ \ \= \ \ \  \\
{\em Algorithm} {\bf Adjust}($(\cM,x)$): \\
\> While ($x$ is not in $B(\cM)$) do \\
\> \> If (there is $i$ and $\delta>0$ such that $x + \delta \be_i \in P(\cM)$) do \\
\> \> \> Let $x_{max} = x_i + \max \{ \delta: x+\delta \be_i \in P(\cM) \}$; \\
\> \> \> Let $p = x_i / x_{max}$; \\
\> \> \> With probability $p$, ~$\{ x_i \leftarrow x_{max} \}$;\\
\> \> \> ~~~~~~~~~~~~~~~~~~~Else ~~$\{ x_i \leftarrow 0 \}$;\\
\> \> EndIf \\
\> \> If (there is $i$ such that $x_i = 0$) do \\
\> \> \> Delete $i$ from $\cM$ and remove the $i$-coordinate from $x$. \\
\> \> EndIf \\
\> EndWhile \\
\> Output $(\cM,x)$.
\end{tabbing}

To summarize, the complete procedure works as follows.
For a given $x \in P(\cM)$, we run $(\cM',y):=${\bf Adjust}$(\cM,x)$,
followed by {\bf PipageRound}($(\cM',y)$). The outcome is a base in the restricted
matroid where some elements have been deleted, i.e. an independent set
in the original matroid.

\section{Proofs and generalizations for randomized swap rounding}\label{sec:AppSwapRound}

In this section we proof that randomized swap rounding satisfies the conditions
of Lemma~\ref{lem:negCorr} and generalize the procedure to points in the matroid polytope.

\subsection{Proof of conditions for negative correlation}

\begin{lemma}
  \label{lem:swapNegCorr}
  Randomized swap rounding satisfies the conditions of Lemma~\ref{lem:negCorr}.
\end{lemma}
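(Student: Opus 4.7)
The plan is to identify each elementary operation of randomized swap rounding with a single step $\bX_t \to \bX_{t+1}$ and verify the three conditions directly. Recall from Section~\ref{sec:swap-rounding} that an elementary operation is one iteration of the while loop inside {\bf MergeBases}, invoked on two bases with weights $\alpha_1 := \sum_{\ell=1}^k \beta_\ell$ and $\alpha_2 := \beta_{k+1}$. Throughout a single call to {\bf MergeBases}, these two weights stay fixed; only the two bases (call them $B_1,B_2$ in the local scope of {\bf MergeBases}) evolve, and exactly one of them is modified per iteration. So I will define $\bX_t$ to be the point $\sum_\ell \alpha_\ell \b1_{B_\ell}$ after $t$ elementary operations, aggregated across all calls to {\bf MergeBases}.

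First I would fix one elementary operation and write down explicitly how it changes $\bX_t$. Let $i \in B_1\setminus B_2$ and $j \in B_2 \setminus B_1$ be the elements selected. All components $X_{k,t}$ with $k \neq i,j$ are unaffected, since the only modified base differs from its previous version only on $\{i,j\}$. This gives condition~\ref{item:2comp} of Lemma~\ref{lem:negCorr} immediately. For the two affected components, I tabulate the possibilities: with probability $\alpha_1/(\alpha_1+\alpha_2)$ the update $B_2 \leftarrow B_2 - j + i$ yields $\Delta X_{i} = +\alpha_2$ and $\Delta X_{j} = -\alpha_2$; with probability $\alpha_2/(\alpha_1+\alpha_2)$ the update $B_1 \leftarrow B_1 - i + j$ yields $\Delta X_{i} = -\alpha_1$ and $\Delta X_{j} = +\alpha_1$. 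In both outcomes $\Delta X_i + \Delta X_j = 0$, which is condition~\ref{item:constSum}.

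Next I would verify the martingale condition~\ref{item:mart}. A direct computation gives
\[
\E[\Delta X_i \mid \text{current state}] = \tfrac{\alpha_1}{\alpha_1+\alpha_2}\alpha_2 + \tfrac{\alpha_2}{\alpha_1+\alpha_2}(-\alpha_1) = 0,
\]
and symmetrically $\E[\Delta X_j \mid \text{current state}] = 0$. Since all other coordinates are unchanged deterministically, $\E[\bX_{t+1} \mid \text{current state}] = \bX_t$. Because this holds conditioned on the full current state (bases and weights), the tower property yields $\E[\bX_{t+1} \mid \bX_t] = \bX_t$ as required. Finally, since the initial point is the deterministic vector $x = \sum_\ell \beta_\ell \b1_{B_\ell}$, the initial distribution condition of Lemma~\ref{lem:negCorr} is satisfied.

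There is no real obstacle, but the one subtlety worth flagging is the bookkeeping of weights across {\bf MergeBases} calls: after the $k$-th call finishes, the merged base $C_{k+1}$ enters the combination with weight $\sum_{\ell=1}^{k+1}\beta_\ell$, and the next call uses $\alpha_1 = \sum_{\ell=1}^{k+1}\beta_\ell$, $\alpha_2 = \beta_{k+2}$. As long as this identification is stated carefully, the three conditions follow from the single-step computation above, and Lemma~\ref{lem:swapNegCorr} is established.
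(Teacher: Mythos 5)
Your proof is correct and follows essentially the same route as the paper's: identify each elementary operation as a single step of the process, tabulate the two possible outcomes and their probabilities, and verify the three conditions by direct computation on the two affected coordinates. Your explicit invocation of the tower property (conditioning on the full state including the convex representation, then passing to $\E[\bX_{t+1}\mid\bX_t]$) is a small point of care that the paper leaves implicit, but the substance is identical.
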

\begin{proof}
Let $X_{i,t}$ denote the $i$-th component of $\bX_t$. To prove the first
condition of Lemma~\ref{lem:negCorr} we condition on a
particular vector $\bX_t$ at time $t$ of the process and on its
convex representation $\bX_t=\sum_{\ell=1}^k \beta_\ell \b1_{B_\ell}$.  The
vector $\bX_{t+1}$ is obtained from $\bX_{t}$ by an elementary
operation. Without loss of generality we assume that the elementary
operation does a swap between the bases $B_1$ and $B_2$ involving
the elements $i\in B_1\setminus B_2$ and $j\in B_2\setminus B_1$. Let
$B_1'$ and $B_2'$ be the bases after the swap. Hence, with
probability $\beta_1/(\beta_1+\beta_2)$, $B_1'=B_1$ and
$B_2'=B_2-j+i$, and with probability $\beta_2/(\beta_1+\beta_2)$,
$B_1'=B_1-i+j$ and $B_2'=B_2$.  Thus,
\begin{align*}
\E[\beta_1 \b1_{B_1'} + \beta_2 \b1_{B_2'}]&=
\frac{\beta_1}{\beta_1+\beta_2} (\beta_1 \b1_{B_1}+\beta_2 (\b1_{B_2}-\be_j+\be_i))+
\frac{\beta_2}{\beta_1+\beta_2} (\beta_1 (\b1_{B_1}-\be_i+\be_j)+\beta_2 \b1_{B_2})\\
&=\beta_1\b1_{B_1} + \beta_2\b1_{B_2},
\end{align*}
where $\be_i=\b1_{\{i\}}$ and $\be_j=\b1_{\{j\}}$ denote the canonical basis vectors
corresponding to element $i$ and $j$, respectively.
Since the vector $\bX_{t+1}$ is given by $\bX_{t+1}=\beta_1 \b1_{B_1'} + \beta_2
\b1_{B_2'} + \sum_{\ell=3}^k \beta_\ell \b1_{B_\ell}$, we obtain
$\E[\bX_{t+1}\mid \bX_t]=\bX_t$.
The second condition of Lemma~\ref{lem:negCorr} is satisfied since an
elementary operation only changes two elements in one base of the convex
representation as discussed above.  To check the third condition of
the lemma, assume without loss of generality that $\bX_{t+1}$ is
obtained from $\bX_{t}=\sum_{\ell=1}^k \beta_\ell \b1_{B_\ell}$ by replacing
$B_1$ by $B_1-i+j$. Hence, $X_{i,t+1}=X_{i,t}+\beta_1$ and
$X_{j,t+1}=X_{j,t}-\beta_1$, implying that the third condition of the
lemma is satisfied.
\end{proof}

\subsection{Adapting randomized swap rounding to points in the matroid polytope}
\label{sec:swapMP}

In this section we show how randomized swap rounding can be generalized to
round a point in the matroid polytope to an independent set, such that
the conditions of Lemma~\ref{lem:negCorr} are still satisfied.  We
first present a generalization where the rounding is done by
applying randomized swap rounding for base polytopes to an
extension of the underlying matroid.
In a second step we show that this procedure can easily be interpreted
as a procedure on the initial matroid, leading to a simpler
description of the method.  An advantage of presenting the method as a
special case of base rounding, is that results presented for randomized
swap rounding on base polytopes easily carry over to the general
rounding procedure.

Let $x\in P(\cM)$ be the point to round. Similar as for the base
polytope case, we need a representation of $x$ as a convex combination
of independent sets.  Again, the algorithm of
Cunningham~\cite{Cunningham84} can be used to obtain a convex
combination of $x$ using at most $n+1$ independent sets with a running
time which is bounded by $O(n^6)$ oracle calls. Thus, we assume that
such a convex combination of $x$ using $n+1$ independent sets
$I_1,\dots, I_{n+1}\in \cI$ is given, i.e., $x=\sum_{\ell=1}^{n+1} \beta_\ell
\b1_{I_\ell}$.

Let $\cM'=(N',\cI')$ be the following extension of the matroid
$\cM=(N,\cI)$.  The set $N'$ is obtained from $N$ by adding $d$
additional dummy elements $\{s_1,\dots, s_d\}$, $N'=N\cup \{s_1,\dots,
s_d\}$. The independent sets are defined by $\cI'=\{I\subseteq N'\mid
I\cap N \in \cI, |I|\leq d\}$. Thus, a base of $\cM$ is also a base of
$\cM'$. The task of rounding $x$ in $\cM$ can be transformed into
rounding a point in the base polytope of $\cM'$ as follows. Every
independent set $I_\ell$ that is used in the convex representation of
$x$, is extended to a base $B_\ell'$ of $\cM'$ by adding an arbitrary
subset of $\{s_1,\dots,s_d\}$ of cardinality $d-|I_\ell|$. Hence,
$y=\sum_{\ell=1}^{n+1}\beta_\ell \b1_{B_\ell'}$ is a point in the base polytope
of $\cM'$. Then the randomized swap rounding procedure as presented in
Section~\ref{sec:swap-rounding} for points in the base polytope
is used to get a point $\b1_{B'}$ in $B(\cM')$. The point $\b1_{B'}$
is finally transformed into a point $\overline{x}$ that is a vertex of
$P(\cM)$ by projecting $\b1_{B'}$ onto the components corresponding to
elements in $N$. The point $\overline{x}$ is returned by the
algorithm.  By Lemma~\ref{lem:swapNegCorr}, the random point
$\b1_{B'}$ satisfies the conditions of Lemma~\ref{lem:negCorr}.  Since
the projection does not change the distribution of the components of
$\b1_{B'}$, also $\overline{x}$ satisfies the same properties.

The dummy elements can be interpreted as elements that do not have any
influence in the final outcome, since they will be removed by the
projection.  Consider for example an elementary operation on two bases
$B_1', B_2'\in \cB$ which are extensions of two independent set $I_1,
I_2 \in \cI$ to the matroid $\cM'$, and let $i\in B_1'\setminus B_2'$
and $j\in B_2'\setminus B_1'$ be the two elements involved in the
swap. If $i$ is a dummy element, i.e., $i\in \{s_1,\dots,s_d\}$, then
replacing $B_2'$ by $B_2'-j+i$ corresponds to removing element $j$
from $I_2$.

Consider the above algorithm using dummy elements with the following
modification: At each elementary operation, if possible, two non-dummy
elements are chosen.
One can easily observe that describing this version of the algorithm
without dummy elements corresponds to replacing the {\bf MergeBases}
procedure with the following procedure to merge two independent
sets. The procedure, called {\bf MergeIndepSets}, takes two
independent sets $I_1,I_2\in \cI$ and two positive scalars $\beta_1,
\beta_2$ as input. To simplify the description of the procedure, we
assume $|I_1|\geq|I_2|$, otherwise the roles of $I_1$ and $I_2$ have
to be exchanged in the algorithm.
\vspace{-10pt}
\begin{tabbing}
\ \ \ \= \ \ \= \ \ \= \ \ \= \ \ \= \ \ \  \\
{\em Algorithm} {\bf MergeIndepSets}$(\beta_1, I_1, \beta_2, I_2)$: \\
\> Find a set $S\subseteq I_1\setminus I_2$ of cardinality $|I_1|-|I_2|$ such that $I_2\cup S \in \cI$;\\
\> $I_2'=I_2\cup S$;\\
\> While ($I_1\neq I_2'$) do\\
\> \> Pick $i\in I_1\setminus I_2'$ and find $j\in I_2'\setminus I_1$ such that $I_1-i+j\in \cI$
and $I_2'-j+i\in \cI$;\\
\> \> With probability $\beta_1/(\beta_1+\beta_2)$, ~$\{I_2' \leftarrow I_2'-j+i\}$;\\
\> \> ~~~~~~~~~~~~~~~~~~~~~~~~~~~~~~~~~~~Else \hspace{0.08em}~$\{ I_1 \leftarrow I_1-i+j\}$;\\
\> EndWhile\\
\> For ($i \in S$) do\\
\> \> With probability $\beta_2/(\beta_1+\beta_2)$, ~$\{I_1 \leftarrow I_1-i\}$;\\
\> EndFor \\
\> Output $I_1$.
\end{tabbing}
The existence of a set $S$ as used in the algorithm easily follows from the
matroid axioms~\cite{Schrijver}. It can be found by successively
choosing elements in $I_1\setminus I_2$ that can be added to $I_2$
still maintaining independence.
Once the element $i\in I_1\setminus I_2'$ is chosen in the while loop
of the algorithm, the existence of an element $j\in I_2'\setminus I_1$
satisfying $I_1-i+j\in \cI$ and $I_2'-j+i\in \cI$ is guaranteed by
applying Theorem~\ref{thm:strongExchange} to the matroid
$\cM'=(N,\cI')$ given by $\cI'=\{I\in \cI \mid |I|\leq |I_1|\}$.

\section{Chernoff bounds for submodular functions}
\label{sec:submod-chernoff}

Here we prove Theorem~\ref{thm:submod-chernoff}, a Chernoff-type bound
for a monotone submodular function $f(X_1,\ldots,X_n)$ where $X_1,\ldots,X_n \in \{0,1\}$
are independent random variables.
Similarly to the proof of Chernoff bounds for linear functions,
the main trick is to prove a bound on the exponential moments
$\E[e^{\lambda f(X_1,\ldots,X_n)}]$. For that purpose, we write the value
of $f(X_1,\ldots,X_n)$ as follows: $f(X_1,\ldots,X_n) = \sum_{i=1}^{n} Y_i$,
where
$$Y_i = f(X_1,\ldots,X_i,0,\ldots,0)- f(X_1,\ldots,X_{i-1},0,\ldots,0).$$
The new complication is that the variables $Y_i$ are not independent.
There could be negative
and even positive correlations between $Y_i,Y_j$. What is important for us,
however, is that we can show negative correlation between $e^{\lambda \sum_{i=1}^{k-1} Y_i}$
and $e^{\lambda Y_k}$, and by induction the following bound.

\begin{lemma}
\label{lemma:Y-neg-correl}
For any $\lambda \in \RR$, a monotone submodular function and $Y_1,\ldots,Y_n$ defined as above,
$$ \E[e^{\lambda \sum_{i=1}^{n} Y_i}] \leq \prod_{i=1}^{n} \E[e^{\lambda Y_i}].$$
\end{lemma}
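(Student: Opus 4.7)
I would prove Lemma~\ref{lemma:Y-neg-correl} by induction on $n$, with Harris's inequality (FKG for product measures) doing the work at each step. The main conceptual point is that submodularity gives an \emph{opposite-monotonicity} between the running exponential moment and the conditional moment of the next increment, so the covariance has the right sign.

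\textbf{Setting up.} For each $i$, let $S_{i-1}=\{j<i : X_j=1\}$. Then $Y_i = X_i \cdot f_{S_{i-1}}(i)$, and by telescoping,
\begin{equation*}
Z_k \;:=\; \sum_{i=1}^{k} Y_i \;=\; f(X_1,\ldots,X_k,0,\ldots,0) - f(\emptyset).
\end{equation*}
Monotonicity of $f$ gives $Y_i\ge 0$ and makes $Z_k$ a coordinatewise nondecreasing function of $(X_1,\ldots,X_k)$. Submodularity gives that the marginal $f_{S}(n)$ is a coordinatewise \emph{nonincreasing} function of the indicator of $S\subseteq[n-1]$.

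\textbf{Inductive step.} Assume the bound for $n-1$. Since $Z_{n-1}$ depends only on $X_1,\ldots,X_{n-1}$,
\begin{equation*}
\E\!\left[e^{\lambda \sum_{i=1}^n Y_i}\right] \;=\; \E\!\left[e^{\lambda Z_{n-1}} \cdot g(S_{n-1})\right], \qquad g(S) := \E\!\left[e^{\lambda Y_n} \mid S_{n-1}=S\right] = (1-x_n) + x_n e^{\lambda f_{S}(n)},
\end{equation*}
where $x_n=\Pr[X_n=1]$. Now I would fix $\lambda$ and analyze the monotonicity of the two factors in $(X_1,\ldots,X_{n-1})$. If $\lambda\ge 0$, then $e^{\lambda Z_{n-1}}$ is nondecreasing while $g(S_{n-1})$ is nonincreasing (because $f_{S}(n)$ is). If $\lambda<0$, both monotonicities flip, so they remain opposite. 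In either case Harris's inequality for the product measure on $\{0,1\}^{n-1}$ yields
\begin{equation*}
\E\!\left[e^{\lambda Z_{n-1}} \cdot g(S_{n-1})\right] \;\le\; \E\!\left[e^{\lambda Z_{n-1}}\right] \cdot \E\!\left[g(S_{n-1})\right] \;=\; \E\!\left[e^{\lambda Z_{n-1}}\right] \cdot \E\!\left[e^{\lambda Y_n}\right],
\end{equation*}
and the induction hypothesis applied to $Z_{n-1}=\sum_{i=1}^{n-1} Y_i$ completes the step.

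\textbf{Where the difficulty lies.} The only nontrivial piece is recognizing that the conditional MGF $g(S_{n-1})$ inherits the \emph{antitone} character of $f_S(n)$ (via submodularity), and that this antitonicity is matched by the monotone character of the prefix exponential $e^{\lambda Z_{n-1}}$ (via monotonicity of $f$), with both directions flipping together when $\lambda$ changes sign. Once this is in place, Harris's inequality converts the submodularity-induced opposite monotonicity into the desired negative correlation between $e^{\lambda Z_{n-1}}$ and $e^{\lambda Y_n}$, which is exactly what fails when one naively tries to treat the $Y_i$'s as independent. I do not see a route through pure martingale arguments that avoids this step, since the $Y_i$'s can be both positively and negatively correlated depending on the configuration.
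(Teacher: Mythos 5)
Your proof is correct and takes essentially the same approach as the paper: peel off the last variable, use submodularity to observe that the marginal $f_S(n)$ (and hence the conditional MGF of $Y_n$) is antitone while the prefix exponential $e^{\lambda Z_{n-1}}$ is monotone, and apply FKG/Harris to get the negative correlation, then induct. The only cosmetic difference is that you fold the averaging over $X_n$ into a single function $g(S_{n-1})$ before applying Harris, whereas the paper applies FKG conditioned on $X_n=1$ and then recombines with the $X_n=0$ case; the key submodularity-plus-FKG step is identical.
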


\begin{proof}
Denote $p_i = \Pr[X_i = 1]$. For any $k$, we have
\begin{eqnarray*}
\E[e^{\lambda \sum_{i=1}^{k} Y_i}]
 & = & \E[e^{\lambda f(X_1,\ldots,X_k,0,\ldots,0)}] \\
 & = &  p_k \ \E[e^{\lambda f(X_1,\ldots,X_{k-1},1,\ldots,0)}]
 + (1-p_k) \E[e^{\lambda f(X_1,\ldots,X_{k-1},0,\ldots,0)}] \\
 & = & p_k \ \E[e^{\lambda f(X_1,\ldots,X_{k-1},0,\ldots,0)}
 e^{\lambda F_k(X_1,\ldots,X_{k-1},0,\ldots,0)}]
 + (1-p_k) \E[e^{\lambda f(X_1,\ldots,X_{k-1},0,\ldots,0)}]
\end{eqnarray*}
where
$$ F_k(X_1,\ldots,X_{k-1},0,\ldots,0) =
f(X_1,\ldots,X_{k-1},1,\ldots,0)- f(X_1,\ldots,X_{k-1},0,\ldots,0) $$
denotes the marginal value of $X_k$ being set to $1$, given the
preceding variables. Observe that
$\E[F_k(X_1,\ldots,X_{k-1},$ $0,\ldots,0)]$ $ = \E[Y_k \mid X_k=1]$.

By submodularity, $F_k$ is a decreasing function of
$(X_1,\ldots,X_{k-1})$. On the other hand, $\sum_{i=1}^{k-1} Y_i =
 f(X_1,$ $\ldots,X_{k-1},0,\ldots,0)$ is an increasing function of
$(X_1,\ldots,X_{k-1})$.  We get the same monotonicity properties for
the exponential functions $e^{\lambda f(\ldots)}$ and $e^{\lambda
  F_k(\ldots)}$ (with a switch in monotonicity for $\lambda < 0$).  By
the FKG inequality, $e^{\lambda f(X_1,\ldots,X_{k-1},0,\ldots,0)}$
and $e^{\lambda F_k(X_1,\ldots,X_{k-1},0,\ldots,0)}$ are negatively
correlated, and we get
\begin{eqnarray*}
\E[e^{\lambda f(X_1,\ldots,X_{k-1},0,\ldots,0)}
   e^{\lambda F_k(X_1,\ldots,X_{k-1},0,\ldots,0)}] & \leq & \E[e^{\lambda f(X_1,\ldots,X_{k-1},0,\ldots,0)}]
   \E[e^{\lambda F_k(X_1,\ldots,X_{k-1},0,\ldots,0)}] \\
& = & \E[e^{\lambda \sum_{i=1}^{k-1} Y_i}] \E[e^{\lambda Y_k} \mid X_k=1].
\end{eqnarray*}
Hence, we have
\begin{eqnarray*}
\E[e^{\lambda \sum_{i=1}^{k} Y_i}] & \leq &
p_k \ \E[e^{\lambda \sum_{i=1}^{k-1} Y_i}]  \E[e^{\lambda Y_k} \mid X_k=1]
  + (1-p_k) \E[e^{\lambda \sum_{i=1}^{k-1} Y_i}] \\
& = &  \E[e^{\lambda \sum_{i=1}^{k-1} Y_i}] \cdot
(p_k \ \E[e^{\lambda Y_k} \mid X_k=1] + (1-p_k) \cdot 1) \\
& = &  \E[e^{\lambda \sum_{i=1}^{k-1} Y_i}] \cdot
(p_k \ \E[e^{\lambda Y_k} \mid X_k=1] + (1-p_k) \ \E[e^{\lambda Y_k} \mid X_k=0]) \\
& = &  \E[e^{\lambda \sum_{i=1}^{k-1} Y_i}] \cdot \E[e^{\lambda Y_k}].
\end{eqnarray*}
By induction, we obtain the lemma.
\end{proof}

Given this lemma, we can finish the proof of Theorem~\ref{thm:submod-chernoff}
following the same outline as of proof of the Chernoff bound.

\begin{proof}
  Let $Y_i = f(X_1,\ldots,X_k,0,\ldots,0)-
  f(X_1,\ldots,X_{k-1},0,\ldots,0)$ as above.  Let us denote $\E[Y_i]
  = \omega_i$ and $\mu = \sum_{i=1}^{n} \omega_i =
  \E[f(X_1,\ldots,X_n)]$.  By the convexity of the exponential and the
  fact that $Y_i \in [0,1]$,
$$ \E[e^{\lambda Y_i}] \leq \omega_i e^\lambda + (1-\omega_i)
 = 1 + (e^\lambda - 1) \omega_i \leq e^{(e^\lambda-1) \omega_i}.$$
Lemma~\ref{lemma:Y-neg-correl} then implies
$$ \E[e^{\lambda f(X_1,\ldots,X_n)}] = \E[e^{\lambda \sum_{i=1}^{n} Y_i}]
 \leq \prod_{i=1}^{n} \E[e^{\lambda Y_i}] \leq e^{(e^\lambda-1) \mu}.$$
For the upper-tail bound, we use Markov's inequality as follows:
$$ \Pr[f(X_1,\ldots,X_n) \geq (1+\delta) \mu]
 = \Pr[e^{\lambda f(X_1,\ldots,X_n)} \geq e^{\lambda (1+\delta) \mu}]
 \leq \frac{\E[e^{\lambda f(X_1,\ldots,X_n)}]}{e^{\lambda (1+\delta) \mu}}
 \leq \frac{e^{(e^\lambda-1) \mu}}{e^{\lambda (1+\delta) \mu}}.$$
We choose $e^\lambda = 1 + \delta$ which yields
$$ \Pr[f(X_1,\ldots,X_n) \geq (1+\delta) \mu]
 \leq \frac{e^{\delta \mu}}{(1+\delta)^{(1+\delta)\mu}}.$$
For the lower-tail bound, we use Markov's inequality with $\lambda < 0$ as follows:
$$ \Pr[f(X_1,\ldots,X_n) \leq (1-\delta) \mu]
 = \Pr[e^{\lambda f(X_1,\ldots,X_n)} \geq e^{\lambda (1-\delta) \mu}]
 \leq \frac{\E[e^{\lambda f(X_1,\ldots,X_n)}]}{e^{\lambda (1-\delta) \mu}}
 \leq \frac{e^{(e^{\lambda}-1) \mu}}{e^{\lambda (1-\delta) \mu}}.$$
We choose $e^{\lambda} = 1 - \delta$ which yields
$$ \Pr[f(X_1,\ldots,X_n) \leq (1-\delta) \mu]
 \leq \frac{e^{-\delta \mu}}{(1-\delta)^{(1-\delta)\mu}} \leq e^{-\mu \delta^2 / 2} $$
using $(1-\delta)^{1-\delta} \geq e^{-\delta + \delta^2/2}$
for $\delta \in (0,1]$.
\end{proof}

\section{Lower-tail estimate for submodular functions under dependent rounding}
\label{sec:submod-lower-tail}

\def\bb{{\bf b}}
\def\bc{{\bf c}}
\def\bx{{\bf x}}
\def\by{{\bf y}}

In this section, we prove Theorem~\ref{thm:swap-rounding-chernoff},
i.e. an exponential estimate for the lower tail
of the distribution of a monotone submodular function under randomized
swap rounding. We note that the bound on the expected value of the rounded solution,
$\E[f(R)] \geq \mu_0$, follows by the convexity of $F(x)$ along directions $\be_i - \be_j$
just like in \cite{CCPV09}; we omit the details. The exponential tail bound is much more
involved. We start by setting up some notation.

\paragraph{Notation.}
The rounding procedure starts from a convex linear combination of bases,
$$ \bx_0 = \sum_{i=1}^{n} \beta_i \b1_{B_i}.$$
The rounding proceeds in stages, where in the first stage we merge
the bases $B_1, B_2$ (randomly) into a new base $C_2$, and replace
$\beta_1 \b1_{B_1} + \beta_2 \b1_{B_2}$ in the linear combination
by $\gamma_2 \b1_{C_2}$, with $\gamma_2 = \beta_1 + \beta_2$.
More generally, in the $k$-th stage, we merge $C_k$ and $B_{k+1}$ into a new base
$C_{k+1}$ (we set $C_1=B_1$ in the first stage), and replace
$\gamma_k \b1_{C_k} + \beta_{k+1} \b1_{B_{k+1}}$
in the linear combination by $\gamma_{k+1} \b1_{C_{k+1}}$.
Inductively, $\gamma_{k+1} = \gamma_k + \beta_{k+1} = \sum_{i=1}^{k+1} \beta_i$.
After $n-1$ stages, we obtain a linear combination $\gamma_n \b1_{C_n}$
and $\gamma_n = \sum_{i=1}^{n} \beta_i = 1$; i.e., this is an integer solution.

We use the following notation to describe the vectors produced in the process:
\begin{itemize}
\item $\bb_i = \beta_i \b1_{B_i}$
\item $\bc_i = \gamma_i \b1_{C_i}$
\item $\by_k = \sum_{i=k}^{n} \bb_i = \sum_{i=k}^{n} \beta_i \b1_{B_i}$
\item $\bx_k = \bc_{k+1} + \by_{k+2} =
\gamma_{k+1} \b1_{C_{k+1}} +\sum_{i=k+2}^{n} \beta_i \b1_{B_i}$
\end{itemize}
In other words, $\bb_i$ are the initial vectors in the linear combination,
which get gradually replaced by $\bc_i$, and $\bx_k$ is the fractional
solution after $k$ stages.

We emphasize that $\bx_k$ denotes the entire fractional solution at
a certain stage and not the value of its $k$-th coordinate.
The coordinates of the fractional solution are the variables $X_i$.
If we want to refer to the value of $X_i$ after $k$ stages,
we use the notation $X_{i,k}$.

We work with the multilinear extension of a submodular function,
$F(\bx) = \E[f(\hat{\bx})]$. In the following, we use the following
shorthand notation and basic properties:
\begin{itemize}
\item $F_i(\bx)$ denotes the partial derivative $\partdiff{F}{X_i}$
evaluated at $\bx$. The interpretation of $F_i(\bx)$ is
the marginal value of $i$ with respect to the fractional solution $\bx$.
\item We use $\be_i = \b1_{\{i\}}$ to denote the canonical basis vector
corresponding to element $i$.
\item If only one variable is changing while others are fixed,
$F(\bx)$ is a linear function. Therefore, we can use the following
formula:
$$ F(\bx + t \be_i) = F(\bx) + t F_i(\bx).$$
\item Due to submodularity, $\mixdiff{F}{X_i}{X_j} \leq 0$ for any $i,j$.
This implies that $F_i(\bx) = \partdiff{F}{X_i}$ is non-increasing
as a function of each coordinate of $\bx$. If $\by$ dominates $\bx$
in all coordinates ($\bx \leq \by$), we have $F_i(\bx) \geq F_i(\by)$.
\end{itemize}

\paragraph{Proof overview.}
The random process in terms of the evolution of $F(x)$ is a submartingale,
i.e. the value in each step can only increase in expectation.
This is a good sign;
however, a straightforward application of concentration bounds for
martingales yields a dependency of the number of variables $n$ which
would render the bound meaningless.
More refined bounds for martingales rely on bounds
on the variance in successive steps. Unfortunately,
these are also difficult to use since we do not have a good a priori
bound on the variance in each step.
The variance can depend on preceding steps and taking worst-case bounds
leads to the same dependency on $n$ as mentioned above.

In order to prove a bound which depends only on the parameters $\delta$
and $\mu_0$, we start from scratch and follow the standard recipe:
estimate the exponential moment $\E[e^{\lambda (\mu_0 - f(R))}]$,
where $\mu_0$ is the initial value and $R$ is the rounded solution.
We decompose the expression $e^{\lambda (\mu_0 - f(R))}$ into
a telescoping product:
$$ e^{\lambda (\mu_0 - f(R))} = e^{\lambda (F(\bx_0) - F(\bx_{n-1}))}
 = e^{\lambda (F(\bx_0) - F(\bx_1))} \cdot e^{\lambda (F(\bx_1) - F(\bx_2))}
 \cdot \ldots \cdot e^{\lambda (F(\bx_{n-2}) - F(\bx_{n-1}))}.$$
The factors in this product are not independent, but we can
prove bounds on the conditional expectations
$\E[e^{\lambda (F(\bx_{k-1}) - F(\bx_k))} \mid \bx_0,\ldots,\bx_{k-1}]$,
in other words conditioned on a given history of the rounding process.
These bounds depend on the history, but we are able to charge
the arising factors to the value of $\mu_0 = F(x_0)$ in such a way
that the final bound depends only on $\mu_0$.

We start from the bottom, by analyzing the basic rounding step for
two variables. The following elementary inequality will be helpful.

\begin{lemma}
\label{lemma:exp-bound}
For any $p \in [0,1]$ and $\xi \in [-1,1]$,
$$ p e^{\xi (1-p)} + (1-p) e^{-\xi p} \leq e^{\xi^2 p(1-p)}.$$
\end{lemma}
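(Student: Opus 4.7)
The plan is to interpret the left-hand side as the moment generating function of a suitable zero-mean random variable and then estimate it by a termwise power-series comparison. Let $Y$ be the random variable taking the value $1-p$ with probability $p$ and $-p$ with probability $1-p$; then $\E[Y]=0$, $\E[Y^{2}]=p(1-p)$, and the left-hand side equals $\E[e^{\xi Y}]$ while the right-hand side equals $\exp(\xi^{2}\E[Y^{2}])$. In other words, the lemma is really a sub-Gaussian MGF bound with variance proxy $p(1-p)$.

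First I would expand $\E[e^{\xi Y}]=\sum_{k\geq 0}\xi^{k}\E[Y^{k}]/k!$ and note that the $k=0$ and $k=1$ terms contribute $1$ and $0$ respectively. The technical heart of the argument is the uniform moment bound $|\E[Y^{k}]|\leq p(1-p)$ for every $k\geq 2$. This I would obtain by writing $\E[Y^{k}]=p(1-p)^{k}+(1-p)(-p)^{k}$, applying the triangle inequality to get $|\E[Y^{k}]|\leq p(1-p)\bigl[(1-p)^{k-1}+p^{k-1}\bigr]$, and then invoking the elementary fact that $(1-p)^{k-1}+p^{k-1}\leq 1$ for $k\geq 2$, which follows since $x^{k-1}\leq x$ whenever $x\in[0,1]$ and $k\geq 2$. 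Substituting this bound back into the series and again using the triangle inequality then yields $\E[e^{\xi Y}]\leq 1+p(1-p)\sum_{k\geq 2}|\xi|^{k}/k! = 1+p(1-p)\bigl(e^{|\xi|}-1-|\xi|\bigr)$.

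To conclude I would invoke the calculus inequality $e^{t}-1-t\leq t^{2}$ valid on $t\in[0,1]$, followed by $1+x\leq e^{x}$, to obtain $\E[e^{\xi Y}]\leq 1+p(1-p)\xi^{2}\leq \exp(\xi^{2}p(1-p))$. The auxiliary inequality $e^{t}-1-t\leq t^{2}$ is easily verified by checking that the derivative of $t\mapsto t^{2}-e^{t}+1+t$ is nonnegative on $[0,1]$ (its second derivative changes sign only once, at $t=\ln 2$, but the first derivative remains nonnegative throughout the interval). This last calculus estimate is really the main obstacle, in that it is precisely where the hypothesis $|\xi|\leq 1$ enters: outside this range the quadratic no longer dominates $e^{t}-1-t$ and the whole approach breaks. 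I would also remark that a more naive Hoeffding-type approach bounding $(d^{2}/d\xi^{2})\log\E[e^{\xi Y}]$ by $1/4$ via AM-GM gives only the weaker estimate $e^{\xi^{2}/8}$, which is insufficient whenever $p(1-p)<1/8$; working directly with moments as above is what recovers the sharper variance proxy $p(1-p)$ that the downstream tail bound requires.
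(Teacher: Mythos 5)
Your proposal is correct and takes a genuinely different route from the paper. The paper fixes $p$, reduces to $\xi\in[0,1]$ by the symmetry $(\xi,p)\mapsto(-\xi,1-p)$, and then shows that $\phi_p(\xi)=e^{\xi^2 p(1-p)}-pe^{\xi(1-p)}-(1-p)e^{-\xi p}$ satisfies $\phi_p(0)=0$ and $\phi_p'\geq 0$ on $[0,1]$; after factoring, the derivative bound reduces to the elementary estimate $e^\xi\leq 1+2\xi$ for $\xi\in[0,1]$. You instead recognize the left-hand side as the MGF $\E[e^{\xi Y}]$ of the centered Bernoulli variable $Y$, expand it as a power series, prove the uniform moment bound $|\E[Y^k]|\leq p(1-p)$ for $k\geq 2$, and then collapse the tail of the series using $e^{t}-1-t\leq t^2$ on $[0,1]$ followed by $1+x\leq e^x$. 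Both arguments are elementary and both make the role of $|\xi|\leq 1$ explicit (in the paper it enters through $e^\xi\leq 1+2\xi$; in yours through $e^{t}-1-t\leq t^2$), but your framing has the advantage of exposing the structure of the estimate as a sub-Gaussian MGF bound with the exact variance proxy $p(1-p)$ rather than the coarser Hoeffding constant $1/4$, which is the point you correctly note is needed downstream. Your verification of the auxiliary inequality is also sound: $h(t)=t^2-e^t+1+t$ has $h(0)=h'(0)=0$, $h'$ is concave-then-convex but remains nonnegative on $[0,1]$ since $h'(1)=3-e>0$, so $h\geq 0$ there.
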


\begin{proof}
If $\xi < 0$, we can replace $\xi$ by $-\xi$ and $p$ by $1-p$;
the statement of the lemma remains the same. So we can assume $\xi \in [0,1]$.

Fix any $p \in [0,1]$ and
define $\phi_p(\xi) = e^{\xi^2 p(1-p)} - p e^{\xi(1-p)} -
 (1-p) e^{-\xi p}$. It is easy to see that $\phi_p(0) = 0$.
Our goal is to prove that $\phi_p(\xi) \geq 0$ for $\xi \in [0,1]$.
Let us compute the derivative of $\phi_p(\xi)$ with respect to $\xi$:
\begin{eqnarray*}
\phi'_p(\xi) & = & 2 \xi p(1-p) e^{\xi^2 p(1-p)}
 - p(1-p) e^{\xi(1-p)} + p(1-p) e^{-\xi p} \\
 & = & p(1-p) e^{-\xi p} \left( 2 \xi e^{\xi^2 p(1-p) + \xi p}
 - e^{\xi} + 1 \right) \\
 & \geq & p(1-p) e^{-\xi p} \left( 2 \xi - e^\xi + 1 \right).
\end{eqnarray*}
For $\xi \in [0,1]$, we have $e^\xi \leq 1 + 2\xi$ and hence
$\phi'_p(\xi) \geq 0$. This means that $\phi_p(\xi)$ is non-decreasing
and $\phi_p(\xi) \geq 0$ for $\xi \in [0,1]$.
\end{proof}

Note that the lemma does not hold for arbitrarily large $\xi$,
e.g. when $p = 1/\xi^2$ and $\xi \rightarrow \infty$.
Next, we apply this lemma to the basic step of the rounding
procedure.

\begin{lemma}
\label{lemma:one-step}
Let $F(\bx)$ be the multilinear extension of a monotone submodular function with marginal
values in $[0,1]$, and let $\lambda \in [0,1]$.
Consider one elementary operation of randomized swap rounding,
where two variables $X_i,X_j$ are modified.
Let $\bx$ denote the fractional solution before, $\bx'$ after this step,
and let $\cal H$ denote the complete history prior to this rounding step.
Assume that the values of the two variables before the rounding step
are $X_i = \gamma, X_j = \beta$. Then
$$ \E[e^{\lambda(F(\bx) - F(\bx'))} \mid {\cal H}] \leq
 e^{\lambda^2 \beta \gamma (F_j(\bx) - F_i(\bx))^2} $$
where $F_i(\bx) = \partdiff{F}{X_i}(\bx)$ and $F_j(\bx) = \partdiff{F}{X_j}(\bx)$.
\end{lemma}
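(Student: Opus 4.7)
The plan is to parameterize the one-step randomness explicitly and then reduce to Lemma~\ref{lemma:exp-bound}. Suppose the elementary operation occurs inside a call of MergeBases with current weights $a,b$ on the two bases $B_1,B_2$, where $i \in B_1 \setminus B_2$ and $j \in B_2 \setminus B_1$. Writing the two outcomes explicitly, with probability $a/(a+b)$ the coordinates change to $(X_i,X_j) = (\gamma+b,\beta-b)$, and with probability $b/(a+b)$ to $(\gamma-a,\beta+a)$, so that $X_i+X_j$ is preserved and the expected change in each coordinate is zero. Since $F$ is multilinear, for any offsets $s,t$ affecting only coordinates $i,j$,
$$F(\bx + s\be_i + t\be_j) = F(\bx) + sF_i(\bx) + tF_j(\bx) + st\,\partial^2 F/\partial X_i\partial X_j(\bx),$$
and the cross derivative is nonpositive by submodularity. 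Substituting the two cases and writing $\Delta = F_j(\bx) - F_i(\bx)$, the quadratic terms $b^2$ and $a^2$ multiplying the cross derivative drop out favorably, yielding the upper bounds $F(\bx) - F(\bx') \leq b\Delta$ in the first case and $F(\bx) - F(\bx') \leq -a\Delta$ in the second.

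Taking conditional expectation,
$$\E\!\left[e^{\lambda(F(\bx)-F(\bx'))} \mid \mathcal{H}\right] \leq \tfrac{a}{a+b} e^{\lambda b \Delta} + \tfrac{b}{a+b} e^{-\lambda a \Delta}.$$
Setting $q = a/(a+b)$ and $\xi = \lambda(a+b)\Delta$ matches both exponents of Lemma~\ref{lemma:exp-bound}, since $\xi(1-q) = \lambda b \Delta$ and $\xi q = \lambda a \Delta$. I then check $\xi \in [-1,1]$: marginal values of $f$ lie in $[0,1]$, so $|F_i(\bx)|,|F_j(\bx)| \leq 1$ and $|\Delta| \leq 1$; the weights satisfy $a+b \leq 1$; and $\lambda \in [0,1]$ by hypothesis, giving $|\xi| \leq 1$. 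Lemma~\ref{lemma:exp-bound} then produces
$$\E\!\left[e^{\lambda(F(\bx)-F(\bx'))} \mid \mathcal{H}\right] \leq e^{\xi^2 q(1-q)} = e^{\lambda^2 ab\,\Delta^2}.$$

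It remains to upgrade $ab$ to $\gamma\beta$. The structural observation is that at the moment of the swap, $i$ lies in $B_1$ (contributing $a$ to $X_i$) and not in $B_2$, while $j$ lies in $B_2$ (contributing $b$ to $X_j$) and not in $B_1$. Summing contributions from all bases in the current convex decomposition therefore gives $\gamma = X_i \geq a$ and $\beta = X_j \geq b$, so $ab \leq \gamma\beta$ and monotonicity of the exponential closes the argument. The principal obstacle I anticipate is the algebraic matching to Lemma~\ref{lemma:exp-bound}: one must first use submodularity to dispose of the cross-derivative contribution so that the conditional expectation becomes an affine combination of two exponentials whose exponents admit a common parameter $\xi$, and the subsequent range check $|\xi|\leq 1$ is precisely what forces the hypotheses $\lambda\in[0,1]$ and unit-bounded marginals to appear in the statement. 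The last step, $ab \leq \gamma\beta$, is only a short consequence of the invariant that a swapped element belongs to exactly one of the two merged bases.
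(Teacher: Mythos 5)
Your proposal is correct and follows essentially the same route as the paper: decompose into the two random outcomes, bound $F(\bx)-F(\bx')$ via multilinearity and nonpositivity of the cross-derivative, take the conditional expectation, and invoke Lemma~\ref{lemma:exp-bound} with $\xi=\lambda(a+b)\Delta$. The one refinement is that you explicitly distinguish the per-step increments $a,b$ (the weights of the two bases being merged) from the variable values $\gamma=X_i,\beta=X_j$ and then pass to the stated bound via $ab\le\gamma\beta$; the paper's proof silently identifies $a=\gamma$ and $b=\beta$, so your version is a slight tightening rather than a different argument.
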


\begin{proof}
Fix the history $\cal H$; this includes the point $\bx$ before the rounding step.
With probability $p = \frac{\gamma}{\beta+\gamma}$,
the rounding step is $X'_i = X_i + \beta$ and $X'_j = X_j - \beta$.
I.e., $\bx' = \bx + \beta \be_i - \beta \be_j$. Since $F(\bx)$
is linear when only one coordinate is modified, we get
$$ F(\bx') = F(\bx) + \beta F_i(\bx) - \beta F_j(\bx+\beta \be_i).$$
By submodularity, $F_j(\bx+\beta \be_i) \leq F_j(\bx)$ and hence
$$ F(\bx') = F(\bx) + \beta F_i(\bx) - \beta F_j(\bx+\beta \be_i)
 \geq F(\bx) + \beta (F_i(\bx) - F_j(\bx)).$$
With probability $1-p$, we set $X'_i = X_i - \gamma$ and $X'_j = X_j + \gamma$.
By similar reasoning, in this case we get
$$ F(\bx') = F(\bx) - \gamma F_i(\bx) + \gamma F_j(\bx - \gamma \be_i)
\geq F(\bx) - \gamma (F_i(\bx) - F_j(\bx)).$$
Taking expectation over the two cases, we get
\begin{eqnarray*}
\E[e^{\lambda(F(\bx) - F(\bx'))} \mid {\cal H}]
 & \leq & p e^{\lambda \beta (F_j(\bx) - F_i(\bx))}
 + (1-p) e^{-\lambda \gamma (F_j(\bx) - F_i(\bx))} \\
 & = & p e^{\lambda (1-p)(\beta+\gamma) (F_j(\bx) - F_i(\bx))}
 + (1-p) e^{-\lambda p (\beta+\gamma) (F_j(\bx) - F_i(\bx))}.
\end{eqnarray*}
We invoke Lemma~\ref{lemma:exp-bound} with $\xi = \lambda (\beta+\gamma)
(F_j(\bx) - F_i(\bx))$ (we have $|\xi| \leq 1$ due to
$\lambda, \beta+\gamma, F_i(\bx), F_j(\bx)$ all being in $[0,1]$). We get
$$ \E[e^{\lambda(F(\bx) - F(\bx'))} \mid {\cal H}] \leq e^{\xi^2 p(1-p)}
 = e^{\lambda^2 \beta \gamma (F_j(\bx) - F_i(\bx))^2}. $$
\end{proof}

Note that the exponent on the right-hand side of Lemma~\ref{lemma:one-step}
corresponds to the variance in one step of the rounding procedure.
The next lemma estimates these contributions,
aggregated over one stage of the rounding process,
i.e., the merging of the bases $C_k$ and $B_{k+1}$.
The exponent on the right-hand side of Lemma~\ref{lemma:stage-bound}
corresponds to the variance of the random process accumulated over the $k$-th stage.
It is crucial that we compare this quantity to certain values
which can be eventually charged to $\mu_0$.

\begin{lemma}
\label{lemma:stage-bound}
Let $F(\bx)$ be the multilinear extension of a monotone submodular function with marginal
values in $[0,1]$, and let $\lambda \in [0,1]$.
Consider the $k$-th stage of the rounding process, when bases
$C_k$ and $B_{k+1}$ (with coefficients $\gamma_k$ and $\beta_{k+1}$)
are merged into $C_{k+1}$. The fractional solution
before this stage is $\bx_{k-1}$ and after this stage $\bx_k$.
Conditioned on any history ${\cal H}$ of the rounding process
throughout the first $k-1$ stages,
$$ \E[e^{\lambda (F(\bx_{k-1}) - F(\bx_k))} \mid {\cal H}]
 \leq e^{\lambda^2 (\beta_{k+1} F(\bc_k) + \gamma_k(F(\by_{k+1}) - F(\by_{k+2})))}.$$
\end{lemma}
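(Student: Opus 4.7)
The plan is to decompose stage $k$ into its $T := |C_k\setminus B_{k+1}|$ elementary swaps, producing intermediate solutions $\mathbf{z}^{(0)} = \bx_{k-1}, \mathbf{z}^{(1)}, \ldots, \mathbf{z}^{(T)} = \bx_k$. At step $s$ the algorithm picks $i_s \in C_k^{(s)} \setminus B_{k+1}^{(s)}$ and $j_s \in B_{k+1}^{(s)} \setminus C_k^{(s)}$, and with probability $\gamma_k/(\gamma_k+\beta_{k+1})$ shifts weight $\beta_{k+1}$ from $j_s$ to $i_s$, else shifts weight $\gamma_k$ from $i_s$ to $j_s$. Since the shift magnitudes and sampling probabilities match the setup of Lemma~\ref{lemma:one-step} with $\gamma := \gamma_k$ and $\beta := \beta_{k+1}$ (its proof uses only these, not the actual variable values), we obtain, conditional on the step-$s$ history $\cF_s$,
\[ \E\bigl[e^{\lambda(F(\mathbf{z}^{(s)}) - F(\mathbf{z}^{(s+1)}))} \mid \cF_s\bigr] \leq e^{\lambda^2 V_s}, \qquad V_s := \gamma_k\beta_{k+1}\bigl(F_{j_s}(\mathbf{z}^{(s)}) - F_{i_s}(\mathbf{z}^{(s)})\bigr)^2. \]
Standard martingale machinery then makes $M_s := \exp\bigl(\lambda(F(\mathbf{z}^{(0)})-F(\mathbf{z}^{(s)})) - \lambda^2\sum_{t<s}V_t\bigr)$ a nonnegative supermartingale with $M_0 = 1$, so $\E[M_T \mid {\cal H}] \leq 1$.

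The proof then reduces to a \emph{deterministic} upper bound $\sum_{s<T}V_s \leq W := \beta_{k+1}F(\bc_k) + \gamma_k(F(\by_{k+1})-F(\by_{k+2}))$: once this holds, $e^{\lambda(F(\bx_{k-1})-F(\bx_k))} = M_T \cdot e^{\lambda^2 \sum_s V_s} \leq e^{\lambda^2 W} M_T$, and taking conditional expectations yields the claim. Since the marginals of $f$ lie in $[0,1]$, we have $F_{i_s}(\mathbf{z}^{(s)}),\,F_{j_s}(\mathbf{z}^{(s)}) \in [0,1]$ and therefore $(F_{j_s}-F_{i_s})^2 \leq F_{j_s}+F_{i_s}$. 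Splitting the prefactor $\gamma_k\beta_{k+1}$ between the two addends, it suffices to establish
\[ \sum_{s<T}\gamma_k F_{i_s}(\mathbf{z}^{(s)}) \leq F(\bc_k), \qquad \sum_{s<T}\beta_{k+1}F_{j_s}(\mathbf{z}^{(s)}) \leq F(\by_{k+1})-F(\by_{k+2}). \]
I expect this deterministic accounting to be the main obstacle: the variance telescope has to collapse \emph{exactly} onto the two submodular quantities on the right-hand side of the lemma.

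The structural fact enabling both bounds is that during stage $k$ the sequence $(i_s)_{s<T}$ enumerates each element of $C_k\setminus B_{k+1}$ exactly once---after step $s$ the chosen $i_s$ either enters $B_{k+1}^{(s+1)}$ or leaves $C_k^{(s+1)}$, so it drops out of the symmetric difference---and symmetrically $(j_s)_{s<T}$ enumerates $B_{k+1}\setminus C_k$. For the first inequality I introduce the shrinking chain $D_s := C_k\setminus\{i_0,\ldots,i_{s-1}\}$, with $D_0 = C_k$ and $D_T = C_k\cap B_{k+1}$. A short induction shows $D_{s+1} \subseteq C_k^{(s)}$, so each coordinate of $D_{s+1}$ still carries the weight $\gamma_k$ in $\mathbf{z}^{(s)}$; hence $\mathbf{z}^{(s)} \geq \gamma_k \b1_{D_{s+1}}$ coordinatewise. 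Submodularity (monotonicity of $F_{i_s}$ in every other coordinate) yields $F_{i_s}(\mathbf{z}^{(s)}) \leq F_{i_s}(\gamma_k \b1_{D_{s+1}})$, and multilinearity of $F$ along the $i_s$-axis converts this into $\gamma_k F_{i_s}(\mathbf{z}^{(s)}) \leq F(\gamma_k\b1_{D_s}) - F(\gamma_k\b1_{D_{s+1}})$; telescoping and discarding the nonnegative $F(\gamma_k\b1_{C_k\cap B_{k+1}})$ gives $\sum_s \gamma_k F_{i_s}(\mathbf{z}^{(s)}) \leq F(\bc_k)$. The second inequality is completely symmetric, using the chain $E_s := B_{k+1}\setminus\{j_0,\ldots,j_{s-1}\}$ and reference points $\by_{k+2}+\beta_{k+1}\b1_{E_s}$; the analogous telescope collapses to $F(\by_{k+1}) - F(\by_{k+2}+\beta_{k+1}\b1_{B_{k+1}\cap C_k}) \leq F(\by_{k+1})-F(\by_{k+2})$ by monotonicity. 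Combining the two telescopes with the supermartingale bound completes the plan.
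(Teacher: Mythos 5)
Your proof is correct, and it recovers the same bound by essentially the same technical steps as the paper, but with a cleaner modular organization. Where the paper runs a single backward induction on the conditional exponential moments $\E[e^{\lambda(F(\bx^i)-F(\bx^d))}\mid\cH_i]$ (pulling the variance contributions into the exponent one step at a time and collapsing them via the identities $F(\bc^{i+1})+\gamma F_{c_{i+1}}(\bc^{i+1})=F(\bc^i)$ etc.), you decouple the argument into a purely probabilistic half and a purely combinatorial half. The supermartingale $M_s$ absorbs the repeated application of Lemma~\ref{lemma:one-step} into the single statement $\E[M_T\mid\cH]\le 1$; what remains is the \emph{pathwise} estimate $\sum_{s<T}V_s\le W$, which is exactly the telescope that appears inside the exponent of the paper's inductive claim~(17), just extracted out of the probabilistic argument. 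The technical content is identical in both: the one-step bound of Lemma~\ref{lemma:one-step} (which, as you correctly note, only needs the shift magnitudes $\gamma_k,\beta_{k+1}$ and the coin bias, not the actual coordinate values of $\bz^{(s)}$), the relaxation $(F_{j}-F_{i})^2\le F_j+F_i$ for marginals in $[0,1]$, and the two shrinking-chain telescopes $D_s, E_s$ whose endpoints $D_0=C_k, D_T=C_k\cap B_{k+1}$ (resp.\ $E_0=B_{k+1}, E_T=B_{k+1}\cap C_k$) are determined by $\cH$ and so give an $\cH$-measurable bound $W$; these chains are precisely the paper's $\bc^i,\by^i$ (with the harmless difference that the paper pads the enumeration to all $d$ pairs, you only enumerate the $T=|C_k\setminus B_{k+1}|$ nontrivial swaps, and you stop the $\bc$-telescope at $F(\gamma_k\b1_{C_k\cap B_{k+1}})\ge 0$ rather than at $F(\b0)=0$). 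One small thing worth making explicit when you write this up: the observation that $D_s\subseteq C_k^{(s)}$ (and $i_s\in D_s\setminus D_{s+1}$) is what justifies $\bz^{(s)}\ge\gamma_k\b1_{D_{s+1}}$; you gesture at this with "a short induction shows $D_{s+1}\subseteq C_k^{(s)}$" and the claim is correct, but the containment $D_s\subseteq C_k^{(s)}$ is the cleaner invariant to carry.
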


\begin{proof}
The $k$-th stage merges bases $C_k$ and $B_{k+1}$ into $C_{k+1}$
by taking elements in pairs and performing rounding steps as in
Lemma~\ref{lemma:one-step}. Let us denote the pairs of elements
considered by the rounding procedure $(c_1,b_1), \ldots, (c_d,b_d)$,
where $C_k = \{c_1,\ldots,c_d\}$ and $B_{k+1} = \{b_1,\ldots,b_d\}$.
The matching is not determined beforehand: $(c_2,b_2)$ might depend
on the random choice between $c_1,b_1$, etc.
In the following, we drop the index $k$ and denote by
$\bx^i$ the fractional solution obtained after processing $(c_1,b_1), \ldots, (c_i,b_i)$.
We start with $\bx^0 = \bx_{k-1}$ and after processing all $d$ pairs,
we get $\bx^d = \bx_k$. We also replace $\beta_{k+1}, \gamma_k$ simply by $\beta, \gamma$.
We denote by ${\cal H}_{i}$ the complete history prior to the rounding step
involving $(c_{i+1},b_{i+1})$; in particular, this includes
the fractional solution $\bx^i$.

Using Lemma~\ref{lemma:one-step} for the rounding step involving
$(c_{i+1},b_{i+1})$, we get
$$ \E[e^{\lambda(F(\bx^i) - F(\bx^{i+1}))} \mid {\cal H}_{i}]
\leq e^{\lambda^2 \gamma \beta (F_{c_{i+1}}(\bx^{i}) -
      F_{b_{i+1}}(\bx^{i}))^2}
\leq e^{\lambda^2 \gamma \beta (F_{c_{i+1}}(\bx^{i}) +
      F_{b_{i+1}}(\bx^{i}))}, $$
using the fact that the partial derivatives $F_j(\bx^{i})$ are in $[0,1]$.

Further, we modify the exponent of the right-hand side as follows.
The vector $\bx^{i}$ is obtained after processing
$i$ pairs and still contains the coordinates $c_{i+1},\ldots,c_d$
of $\bc_k = \gamma \b1_{C_k}$ untouched: in other words,
$\bx^{i} \geq \gamma \b1_{\{c_{i+1},\ldots,c_d\}}$.
Let us define
\begin{itemize}
\item $\bc^i 
 = \gamma \b1_{\{ c_{i+1},\ldots,c_d\}}$.
\end{itemize}
I.e., $\bx^{i} \geq \bc^i \geq \bc^{i+1}$. By submodularity, we have
$F_{c_{i+1}}(\bx^{i}) \leq F_{c_{i+1}}(\bc^{i+1})$.

Similarly, the vector $\bx^{i}$ also contains
the coordinates $b_{i+1},\ldots,b_d$ of $\bb_{k+1}$ and all of
$\by_{k+2} = \sum_{j=k+2}^{n} \bb_j$ unchanged:
$\bx^{i} \geq \beta \b1_{\{b_{i+1},\ldots,b_d\}} + \by_{k+2}$.
Let us define
\begin{itemize}
\item $\by^i
 = \beta \b1_{\{b_{i+1},\ldots,b_d\}} + \by_{k+2}$.
\end{itemize}
I.e., $\bx^{i} \geq \by^{i} \geq \by^{i+1}$. By submodularity, we get
$F_{b_{i+1}}(\bx^{i}) \leq F_{b_{i+1}}(\by^{i+1})$.
Therefore, we can write
\begin{equation}
\label{eq:swap-exp-bound}
\E[e^{\lambda(F(\bx^{i}) - F(\bx^{i+1}))} \mid {\cal H}_{i}]
\leq e^{\lambda^2 \gamma \beta
(F_{c_{i+1}}(\bc^{i+1}) + F_{b_{i+1}}(\by^{i+1}))}.
\end{equation}
We claim that by induction on $d-i$, this implies
\begin{equation}
\label{eq:induct-bound}
\E[e^{\lambda(F(\bx^{i}) - F(\bx^d))} \mid {\cal H}_{i}]
\leq e^{\lambda^2 (\beta F(\bc^{i}) + \gamma(F(\by^{i}) - F(\by^{d})))}
\end{equation}
for all $i=0,\ldots,d$.
For $i=d$, the claim is trivial.
For $i<d$, we can write
\begin{eqnarray*}
\E[e^{\lambda(F(\bx^{i}) - F(\bx^d))} \mid {\cal H}_{i}]
 & = & \E\left[e^{\lambda(F(\bx^{i}) - F(\bx^{i+1}))}
  \E[e^{\lambda(F(\bx^{i+1}) - F(\bx^d))} \mid {\cal H}_{i+1}]
 \ \big| \ {\cal H}_{i}\right]
\end{eqnarray*}
and using the inductive hypothesis (\ref{eq:induct-bound}) for $i+1$,
\begin{eqnarray*}
 \E[e^{\lambda(F(\bx^{i}) - F(\bx^d))} \mid {\cal H}_{i}]
 & \leq & \E\left[e^{\lambda(F(\bx^{i}) - F(\bx^{i+1}))} \cdot
e^{\lambda^2 (\beta F(\bc^{i+1}) + \gamma(F(\by^{i+1}) - F(\by^{d})))}
 \mid {\cal H}_{i}\right] \\
& = & e^{\lambda^2 (\beta F(\bc^{i+1}) +
 \gamma(F(\by^{i+1}) - F(\by^{d})))}
 \cdot \E\left[e^{\lambda(F(\bx^{i}) - F(\bx^{i+1}))} \mid
 {\cal H}_{i}\right]
\end{eqnarray*}
where we used the fact that the inductive bound is determined by
${\cal H}_{i}$, and so we can take it out of the expectation
(it depends only on the sets $\{c_{i+2},\ldots,c_d\}$
and $\{b_{i+2}, \ldots, b_d\}$ which are determined even before
performing the rounding step on $(c_{i+1},b_{i+1})$). Taking logs
and using (\ref{eq:swap-exp-bound}) to estimate the last expectation,
we obtain
\begin{eqnarray*}
& & \log \E[e^{\lambda(F(\bx^{i}) - F(\bx^{d}))} \mid {\cal H}_{i}] \\
& \leq & {\lambda^2 \left(\beta F(\bc^{i+1}) +
 \gamma\big(F(\by^{i+1}) - F(\by^{d})\big)\right)} +
{\lambda^2 \gamma \beta \Big(F_{c_{i+1}}(\bc^{i+1}) + F_{b_{i+1}}(\by^{i+1})\Big)} \\
& = & {\lambda^2 \left(\beta \left(F(\bc^{i+1}) + \gamma F_{c_{i+1}}(\bc^{i+1})\right)
+ \gamma\big(F(\by^{i+1}) + \beta F_{b_{i+1}}(\by^{i+1})
 - F(\by^{d})\big)\right)} \\
& = & {\lambda^2 \left(\beta F(\bc^{i}) + \gamma\big(F(\by^{i}) - F(\by^{d})\big)\right)}
\end{eqnarray*}
where we used $F(\bc^{i+1}) + \gamma F_{c_{i+1}}(\bc^{i+1}))
 = F(\bc^{i})$ and $F(\by^{i+1}) + \beta F_{b_{i+1}}(\by^{i+1})
 = F(\by^{i})$ (see the definitions of $\bc^i, \by^i$ above).

This proves our inductive claim (\ref{eq:induct-bound}).
For $i=0$,
since $\bx^0 = \bx_{k-1}$, $\bx^d = \bx_k$, $\bc^0 = \bc_k$, $\by^0 = \by_{k+1}$ and $\by^{d} = \by_{k+2}$,
this gives the statement of the lemma.
\end{proof}

Now we can proceed finally to the proof of
Theorem~\ref{thm:swap-rounding-chernoff}.

\begin{proof}
We prove inductively the following statement: For any $k$
and any $\lambda \in [0,1]$,
\begin{equation}
\label{eq:stage-induct}
\E[e^{\lambda (\mu_0 - F(\bx_k))}]
 \leq e^{\lambda^2 (\mu_0 (1+\sum_{i=1}^{k} \beta_{i+1}) - F(\by_{k+2}))}.
\end{equation}
We remind the reader that $\mu_0 = F(\bx_0)$, $\bx_k$ is the fractional solution
after $k$ stages, and $\by_{k+2} = \sum_{i=k+2}^{n} \bb_i$.
We proceed by induction on $k$.

For $k=0$, the claim is trivial, since $F(\by_2) \leq F(\bx_0) = \mu_0$
by monotonicity. For $k \geq 1$, we unroll the expectation as follows:
$$ \E[e^{\lambda (\mu_0 - F(\bx_k))}] =
\E\left[e^{\lambda (\mu_0 - F(\bx_{k-1}))}
    \E[e^{\lambda (F(\bx_{k-1}) - F(\bx_k))} \mid {\cal H}]
\right] $$
where $\cal H$ is the complete history prior to stage $k$ (up to $\bx_{k-1}$).
We estimate the inside expectation using Lemma~\ref{lemma:stage-bound}:
$$ \E[e^{\lambda (F(\bx_{k-1}) - F(\bx_k))} \mid {\cal H}]
 \leq e^{\lambda^2 (\beta_{k+1} F(\bc_k) + \gamma_k(F(\by_{k+1}) - F(\by_{k+2})))}
 \leq e^{\lambda^2 (\beta_{k+1} F(\bx_{k-1}) + F(\by_{k+1}) - F(\by_{k+2}))}
$$
using monotonicity, $\bc_k \leq \bx_{k-1}$, $\by_{k+2} \leq \by_{k+1}$ and $\gamma_k \leq 1$.
Therefore,
\begin{eqnarray*}
\E[e^{\lambda (\mu_0 - F(\bx_k))}] & \leq &
\E\left[e^{\lambda (\mu_0 - F(\bx_{k-1}))}
 e^{\lambda^2 (\beta_{k+1} F(\bx_{k-1}) + F(\by_{k+1}) - F(\by_{k+2}))}
\right] \\
& = &
 e^{\lambda^2 (\beta_{k+1} \mu_0 + F(\by_{k+1}) - F(\by_{k+2}))}
\E\left[e^{(\lambda - \lambda^2 \beta_{k+1})(\mu_0 - F(\bx_{k-1}))}
\right].
\end{eqnarray*}
By the inductive hypothesis (\ref{eq:stage-induct})
with $\lambda' = \lambda - \lambda^2 \beta_{k+1} \in [0,1]$,
$$\E[e^{(\lambda - \lambda^2 \beta_{k+1})(\mu_0 - F(\bx_{k-1}))}]
 \leq e^{(\lambda-\lambda^2 \beta_{k+1})^2
 (\mu_0 (1+\sum_{i=1}^{k-1} \beta_{i+1}) - F(\by_{k+1}))}
\leq e^{\lambda^2 (\mu_0 (1+\sum_{i=1}^{k-1} \beta_{i+1}) - F(\by_{k+1}))}.$$
In the last inequality we used $F(\by_{k+1}) \leq \mu_0$, which holds by monotonicity.
Plugging this into the preceding equation,
\begin{eqnarray*}
\E[e^{\lambda (\mu_0 - F(\bx_k))}] & \leq &
 e^{\lambda^2 (\beta_{k+1} \mu_0 + F(\by_{k+1}) - F(\by_{k+2}))}
e^{\lambda^2 (\mu_0 (1+\sum_{i=1}^{k-1} \beta_{i+1}) - F(\by_{k+1}))} \\
& = & e^{\lambda^2 (\mu_0 (1+\sum_{i=1}^{k} \beta_{i+1}) - F(\by_{k+2}))}
\end{eqnarray*}
which proves (\ref{eq:stage-induct}). Finally, for $k=n-1$ we obtain
$F(\bx_{n-1}) = f(R)$ where $R$ is the rounded solution, $\by_{n+1} = 0$, and
\begin{equation}
\label{eq:exp-moment-bound}
\E[e^{\lambda (\mu_0 - f(R))}] \leq e^{\lambda^2 \mu_0 (1+\sum_{i=1}^{n-1} \beta_{i+1})}
 \leq e^{2 \lambda^2 \mu_0}
\end{equation}
because $\sum_{i=1}^{n-1} \beta_{i+1} \leq 1$.
The final step is to apply Markov's inequality to the exponential moment.
From Markov's inequality and Equation~(\ref{eq:exp-moment-bound}), we get
$$ \Pr[ f(R) \leq (1-\delta) \mu_0]
  = \Pr\left[e^{\lambda (\mu_0 - f(R))} \geq e^{\lambda \delta \mu_0}\right]
 \leq \frac{\E[e^{\lambda (\mu_0 - f(R))}]}{e^{\lambda \delta \mu_0}}
 \leq e^{2 \lambda^2 \mu_0 - \lambda \delta \mu_0}. $$
A choice of $\lambda = \delta / 4$ gives the statement of the theorem.
\end{proof}


\begin{thebibliography}{99}

\bibitem{AlonS-book}
N. Alon and J. Spencer.
\newblock The Probabilistic Method.
\newblock 2nd Edition, John Wiley \& Sons.

\bibitem{AGMGS10}
A. Asadpour, M. Goemans, A. Madry, S.O. Gharan, and A. Saberi.
\newblock An $O(\log n / \log \log n)$-approximation algorithm for the assymetric traveling
salesman problem.
\newblock To appear in {\em Proc.\ of ACM-SIAM SODA}, 2010.

\bibitem{AS04}
A. Ageev and M. Sviridenko.
\newblock Pipage rounding: a new method of constructing algorithms with
proven performance guarantee.
\newblock {\em J.\ of Combinatorial Optimization}, 8:307--328, 2004.


\bibitem{BiloGRS04}
V. Bilo, V. Goyal, R. Ravi and M. Singh.
\newblock  On the Crossing Spanning Tree Problem.
\newblock {\em Proc.\ of APPROX}, 51--60, 2004.

\bibitem{CCPV07}
G. Calinescu, C. Chekuri, M. P\'al and J. Vondr\'ak.
\newblock Maximizing a submodular set function subject to a matroid constraint.
\newblock {\em Proc.~of IPCO}, 182--196, 2007.

\bibitem{CCPV09}
G. Calinescu, C. Chekuri, M. P\'al and J. Vondr\'ak.
\newblock Maximizing a submodular set function subject to a matroid constraint.
\newblock To appear in {\em SIAM Journal on Computing, special issue
for STOC 2008}.

\bibitem{CGM92}
P.M. Camerini, G. Galbiati, and F. Maffioli.
\newblock Random pseudo-polynomial algorithms for exact matroid problems.
\newblock {\em J. of algorithms} 13, 258--273, 1992.

\bibitem{Cunningham84}
W.~H. Cunningham.
\newblock Testing membership in matroid polyhedra.
\newblock {\em Journal of Combinatorial Theory, Series B}, 36(2):161--188, April 1984.


\bibitem{Edmonds70}
J. Edmonds.
\newblock Matroids, submodular functions and certain polyhedra.
\newblock Combinatorial Structures and Their Applications, 69--87, 1970.

\bibitem{Fekete04}
S. P. Fekete, M. E. L\"ubbecke and H. Meijer.
\newblock Minimizing the stabbing number of matchings, trees, and triangulations.
\newblock {\em Proc.~of ACM-SIAM SODA}, 437--446, 2004.

\bibitem{Feige98}
U. Feige.
\newblock A threshold of $\ln n$ for approximating set cover.
\newblock {\em Journal of the ACM}, 45(4):634--652, 1998.


\bibitem{FNW78}
M. L. Fisher, G. L. Nemhauser and L. A. Wolsey.
\newblock An analysis of approximations for maximizing
submodular set functions - II.
\newblock {\em Math. Prog. Study}, 8:73--87, 1978.

\bibitem{Gabow98}
H.N. Gabow and K.S. Manu.
\newblock Packing algorithms for arborescences (and spanning trees)
in capacitated graphs.
\newblock {\em Mathematical Programming} 82(1--2):83--109, 1998.

\bibitem{GKPS06}
R. Gandhi, S. Khuller, S. Parthasarathy and A. Srinivasan.
\newblock Dependent rounding and its applications to approximation algorithms.
\newblock {\em Journal of the ACM} 53:324-360, 2006.

\bibitem{GuptaNR09}
A. Gupta, V. Nagarajan and R. Ravi.
\newblock Personal communication, 2009.

\bibitem{HarPeled09}
S. Har-Peled.
\newblock Approximating spanning trees with low crossing number.
\newblock Technical report, http://valis.cs.uiuc.edu/~sariel/papers/09/crossing/.


\bibitem{KLS08}
T. Kir\'aly, L. C. Lau, and M. Singh.
\newblock Degree bounded matroids and submodular flows.
\newblock {\em Proc.\ of IPCO}, 2008.


\bibitem{KST09} A. Kulik, H. Shachnai and T. Tamir.
\newblock Maximizing submodular set functions subject to multiple linear constraints.
\newblock {\em Proc.~of ACM-SIAM SODA}, 545--554, 2009.

\bibitem{KMPS09}
V. S. A. Kumar, M. V. Marathe,  S. Parthasarathy and A. Srinivasan.
\newblock A Unified Approach to Scheduling on Unrelated Parallel Machines.
\newblock {\em JACM}, Vol. 56, 2009.

\bibitem{LMNS09} J. Lee, V. Mirrokni, V. Nagarajan and M. Sviridenko.
\newblock Maximizing non-monotone submodular functions under matroid and knapsack constraints.
\newblock {\em Proc.~of ACM STOC 2009}, 323--332.

\bibitem{LSV09}
J. Lee,  M. Sviridenko, and J. Vondr\'ak.
\newblock Submodular maximization over multiple matroids via generalized exchange properties.
\newblock {\em Proc.\ of APPROX}, Springer LNCS, 244--257, 2009.

\bibitem{LLN06}
B. Lehmann, D. J. Lehmann, and N. Nisan.
\newblock Combinatorial auctions with decreasing marginal utilities.
\newblock {\em Games and Economic Behavior} 55:270--296, 2006.

\bibitem{LeightonLRS01}
T. Leighton, C.-J.Lu, S. Rao, and A. Srinivasan.
\newblock New algorithmic aspects of the local lemma with applications to routing and partitioning.
\newblock {\em SIAM J.\ on Computing}, Vol. 31, 626--641, 2001.

\bibitem{NWF78}
G. L. Nemhauser, L. A. Wolsey and M. L. Fisher.
\newblock An analysis of approximations for maximizing submodular set functions - I.
\newblock {\em Math. Prog.}, 14:265--294, 1978.

\bibitem{NW78}
G. L. Nemhauser and L. A. Wolsey.
\newblock Best algorithms for approximating the maximum of a submodular set function.
\newblock {\em Math. Oper. Research}, 3(3):177--188, 1978.

\bibitem{Pach95}
J. Pach and P. K. Agarwal.
\newblock{Combinatorial Geometry}
\newblock{Wiley-Interscience, 1995.}

\bibitem{PY00}
C. H. Papadimitriou and M. Yannakakis.
\newblock The complexity of tradeoffs, and optimal access of web sources.
\newblock In {\em Proc. of FOCS 2000}, 86--92.

\bibitem{PS97}
A. Panconesi and A. Srinivasan.
\newblock Randomized distributed edge coloring via an extension of the Chernoff-Hoeffding bounds.
\newblock {\em SIAM Journal on Computing} 26:350-368, 1997.

\bibitem{RaghavanT87}
P. Raghavan and C. D. Thompson.
\newblock Randomized rounding: a technique for provably good algorithms and algorithmic proofs.
\newblock {\em Combinatorica} 7(4):365--374, 1987.

\bibitem{Schrijver}
A. Schrijver.
\newblock Combinatorial optimization - polyhedra and efficiency.
\newblock Springer, 2003.

\bibitem{Schrijver98}
A.~Schrijver.
\newblock {\em Theory of Linear and Integer Programming}.
\newblock John Wiley \& Sons, 1998.

\bibitem{SL07}
M. Singh and L.C. Lau.
\newblock Approximating minimum bounded degree spanning tress to within one of optimal,
\newblock {\em Proc.~of ACM STOC} (2007).

\bibitem{S06}
A. Srinivasan.
\newblock An extension of the Lov\'asz Local Lemma, and its applications to integer programming.
\newblock {\em SIAM J.\ on Computing}, Vol 36, 609--634, 2006.
\newblock Preliminary version in {\em Proc.\ of ACM-SIAM SODA}, 1996.

\bibitem{S01}
A. Srinivasan.
\newblock Distributions on level-sets with applications to approximation algorithms,
\newblock {\em Proc. IEEE Symposium on Foundations of Computer Science (FOCS)}, 588--597, 2001.

\bibitem{Vondrak08}
J. Vondr\'ak.
\newblock Optimal approximation for the submodular welfare problem in the value oracle model.
\newblock {\em Proc. of ACM STOC}, 67--74, 2008.

\bibitem{Vondrak09}
J. Vondr\'ak.
\newblock Symmetry and approximability of submodular maximization problems.
\newblock In {\em Proc. of IEEE FOCS}, 251--270, 2009.
\end{thebibliography}
\end{document}